\title{New Algorithms and Hardness Results for Connected Clustering\thanks{This work has been funded by the Deutsche Forschungsgemeinschaft (DFG, German Research Foundation) – 390685813; 459420781 and by the Lamarr Institute for Machine Learning and Artificial Intelligence lamarr-institute.org.}}
\author[1]{Jan Eube}
\author[2]{Heiko R\"oglin}
\affil[1]{University of Bonn, Germany, \href{mailto:eube@cs.uni-bonn.de}{\texttt{eube@cs.uni-bonn.de}}}
\affil[2]{University of Bonn, Germany, \href{mailto:roeglin@cs.uni-bonn.de}{\texttt{roeglin@cs.uni-bonn.de}}}
\date{\vspace{-5ex}}
\newcommand{\U}{\mathcal{U}}
\newcommand{\Psc}{\mathcal{P}}
\newcommand{\cla}{Y}
\newcommand{\Com}{C}
\newcommand{\B}{\mathcal{B}}
\newcommand{\A}{\mathcal{A}}
\newcommand{\opt}{\textsc{Opt}}
\newcommand{\mycomment}[1]{}
\newtheorem{theorem}{Theorem}
\newtheorem{invariant}[theorem]{Invariant}
\newtheorem{lemma}[theorem]{Lemma}
\newtheorem{corollary}[theorem]{Corollary}
\newtheorem{observation}[theorem]{Observation}
\newtheorem{definition}[theorem]{Definition}
\DeclareMathOperator*{\argmin}{argmin}
\begin{document}

\begin{titlingpage}

\bibliographystyle{plainurl}
%\title{New Algorithms and Hardness Results for Connected Clustering\footnote{This work has been funded by the Deutsche Forschungsgemeinschaft (DFG, German Research Foundation) – 390685813; 459420781 and by the Lamarr Institute for Machine Learning and Artificial Intelligence lamarr-institute.org.}}
\maketitle
%\thanks{This work has been funded by the Deutsche Forschungsgemeinschaft (DFG, German Research Foundation) – 390685813; 459420781 and by the Lamarr Institute for Machine Learning and Artificial Intelligence lamarr-institute.org.}
\begin{abstract}

Connected clustering denotes a family of constrained clustering problems in which we are given a distance metric and an undirected connectivity graph $G$ that can be completely unrelated to the metric. The aim is to partition the $n$ vertices into a given number $k$ of clusters such that every cluster forms a connected subgraph of $G$ and a given clustering objective gets minimized. The constraint that the clusters are connected has applications in many different fields, like for example community detection and geodesy.

So far, $k$-center and $k$-median have been studied in this setting. It has been shown that connected $k$-median is $\Omega(n^{1- \epsilon})$-hard to approximate which also carries over to the connected $k$-means problem, while for connected $k$-center it remained an open question whether one can find a constant approximation in polynomial time. We answer this question by providing an $\Omega(\log^*(k))$-hardness result for the problem. Given these hardness results, we study the problems on graphs with bounded treewidth. We provide exact algorithms that run in polynomial time if the treewidth $w$ is a constant. Furthermore, we obtain constant approximation algorithms that run in FPT time with respect to the parameter $\max(w,k)$.

Additionally, we consider the min-sum-radii (MSR) and min-sum-diameter (MSD) objective. We prove that on general graphs connected MSR can be approximated with an approximation factor of $(3 + \epsilon)$ and connected MSD with an approximation factor of $(4 + \epsilon)$. The latter also directly improves the best known approximation guarantee for unconstrained MSD from $(6 + \epsilon)$ to $(4 + \epsilon)$.

\end{abstract}

\end{titlingpage}

\section{Introduction}

Clustering describes a family of problems that given a set of data points aim at grouping together similar points into a limited number of groups called clusters. These kind of problems have been studied widely both in practice as well as in theory.  There are numerous objective functions to evaluate the quality of a clustering, among them $k$-center, $k$-median, $k$-means, and min-sum-radii (MSR). In all of the listed objectives, the data points are usually contained in a metric space  which defines a distance function and the number of clusters $k$ is given beforehand. In addition to the clusters, one also needs to determine a center for each cluster that represents this cluster and the quality of the clustering is determined by the distance of the data points to their respective centers.

In $k$-center one aims to minimize the maximum distance among any data point to its respective center. In $k$-median the sum of the distances of all points to the centers and in $k$-means the sum of the squared distances to the centers should be minimized. The objective of min-sum-radii is determined by calculating for every cluster its so called radius, which is the maximum distance of any point in this cluster to the center, and the objective value is the sum of the radii of all clusters. There is also a variant of this problem called min-sum-diameter (MSD) where no centers are needed and instead of the radius one determines the diameter of each cluster, which is the maximum distance of two points in this cluster. In $k$-center, $k$-median, and min-sum-radii, the centers are usually chosen among the data points themselves while in the $k$-means problem we can often choose the centers arbitrarily if the data points are placed in the Euclidean space.  

All of these problems are NP-hard but there exist numerous approximation results. There exist multiple 2-approximation algorithms for $k$-center \cite{hochbaum1985best, Gonzalez85}, which is the best possible, unless P = NP \cite{hsu1979easy}. For $k$-median the currently best known algorithm achieves a $(2 + \epsilon)$-approximation guarantee \cite{cohen2approx} and for $k$-means there exists a $(9 + \epsilon)$-approximation algorithm for general metrics and a $(6.357 + \epsilon)$-approximation algorithm for Euclidean metrics~\cite{ahmadian2019better}. Recently, Buchem et al.\ provided a $(3 + \epsilon)$-approximation algorithm for MSR, which also directly implies a $(6 + \epsilon)$-approximation guarantee for MSD.

In many applications the clusters should fulfill additional requirements besides having a good objective value, which is why researchers have studied these objectives under various side constraints. For example, one can require that the clusters are not too small or too big which results in private \cite{aggarwal2010achieving} and capacitated clustering \cite{li2016approximating}. Additionally, different variants of fair clustering have been studied over the last decade \cite{chierichetti2017fair, vakilian2022improved, schmidt2020fair}. 

In this work we deal with a connectivity side constraint. In this setting, we are not only given a distance metric but the data points are also vertices in a so called connectivity graph which is independent of the metric. The aim is to find clusters minimizing the respective objective while ensuring that every cluster induces a connected subgraph of the connectivity graph. This side constraint has first been introduced by Ge et al.~\cite{ge2008joint}, who motivated it in the context of community detection. In this setting we are given two kinds of data for a group of entities (for example humans, organizations or events) called attribute and relationship data. The attribute data corresponds to the intrinsic properties of the entities and can be used to define the similarity of two entities. Ge et al.\ modeled this using the distance metric. The relationship data corresponds to extrinsic relationships between entities, for example a common paper of two researchers if we want to identify research communities. When identifying a community it seems reasonable to require that there are no parts of this community that are totally isolated from the rest. As a result, Ge et al.\ modeled this using the connectivity graph by adding edges between entities if they are related. Given that in many real world scenarios also very dissimilar entities may interact (for example joint work of researchers from different fields) while similarity between entities does not necessarily imply that there exists a relationship between them, it seems reasonable to not assume any clear correlation between the distance metric and the connectivity graph. Later also Gupta et al.~\cite{gupta2011clustering} and Eube et al.~\cite{eube2025connectedkmediandisjointnondisjoint} studied connected $k$-median motivated by this interpretation.

Another natural application of connected clustering is the clustering of geographic regions or objects. We are given information (e.g.\ regarding the population or climate) for a set of regions and want to cluster similar regions together while ensuring that the resulting clusters actually correspond to continuous areas. Validi et al.\ considered connected clustering in the context of finding political districts \cite{validi2022imposing} and Liao and Peng to cluster sensor data \cite{liao2012clustering}. Most recently, Drexler et al.\ used this model in the context of clustering tide gauge stations \cite{drexler2024connected}. Here one is given a sequence of measurements of the sea levels of different stations and wants to group stations with similar measurements together such that the groups are again reasonably compact.

Unfortunately it turned out that most clustering objectives get very hard to approximate once a connectivity constraint is introduced. Ge et al.\ already showed that even for $k= 2$ connected $k$-center cannot be solved optimally anymore \cite{ge2008joint}. While they claimed to have a constant approximation algorithm for connected $k$-center \cite{ge2008joint}, Drexler et al.\ later pointed out that this is not the case \cite{drexler2024connected}. The latter provided an $O(\log(k)^2)$-approximation algorithm, which is the best known guarantee to this day. They also provided constant approximations if the metric is an $L_p$ metric of constant dimension or has a constant doubling dimension\footnote{The approximation factor depends on the dimension in both cases.} but it remained an open question whether or not there exists a constant approximation in the general case. As we show in this work this is not the case,  unless P = NP, as connected $k$-center is $\Omega(\log^*(k))$-hard to approximate\footnote{$\log^*(k)$ denotes how often we have to iteratively apply the logarithmic function to $k$ until we end up with a number that is at most $1$.}.

The connected $k$-median problem turned out to be even harder, as Eube et al.~\cite{eube2025connectedkmediandisjointnondisjoint} showed that even for $k= 2$ connected $k$-median is $\Omega(n^{1-\epsilon})$-hard to approximate. The same reduction also works for connected $k$-means. They relaxed the problem by allowing vertices to be part of multiple clusters. In this setting they were able to provide an $O(k^2\log(n))$-approximation but also showed by slightly adapting a previous reduction by Gupta et al \cite{gupta2011clustering} that the problem remains $\Omega(\log(n))$-hard to approximate. Already before this, it was observed that one can slightly adapt the classic $k$-center algorithm by Hochbaum and Shmoys to obtain a $2$-approximation for connected $k$-center if the clusters may overlap \cite{ge2008joint}. However while this might be a reasonable relaxation for some settings, the possibility to assign an entity to multiple groups would appear unnatural in many situations. Especially if we cluster geographic data, having overlapping areas may not be reasonable in many contexts. Previous research also seems to focus primarily on the disjoint variant. In the remainder of this work we will focus exclusively on the disjoint setting.

While both connected $k$-center and $k$-median are very hard to approximate on general graphs, there exists dynamic programs for both problems that calculate the optimal solution in polynomial time if the graph is a tree \cite{ge2008joint,eube2025connectedkmediandisjointnondisjoint}. In this work, we extend these results for a broader class of `treelike' graphs. This is based on the concept of treewidth and tree decompositions defined by Robertson and Seymour \cite{robertson1984graph}, whose definition we present later. For many NP-hard problems it has been shown that they can be solved in polynomial time if the treewidth $w$ of the input graph is constant (see for example Chapter~$7$ of \cite{cygan2015parameterized}). By combining the dynamic programs for connected $k$-center and $k$-median with the ideas of an algorithm for Steiner trees on graphs with constant treewidth \cite{chimani2012improved}, we can also solve these problems exactly on graphs with constant treewidth. One may observe that especially in the setting that we want to cluster tide gauge stations as presented by Drexler et al., it is not unreasonable to assume that the graph is treelike. This is due to the fact that these stations are usually placed at the coast which means that for large parts there are sequences of stations just following the coastline. And while islands close to the mainland and almost enclosed bodies of water like the Mediterranean Sea can lead to cycles, the number of parallel cycles will not be too large. Also in other settings we might end up with a reasonably small treewidth.

However, the running times of the resulting algorithms lie in $\Omega(n^w)$. Using ideas similar to the nesting property for hierarchical clustering \cite{lin2010general}, we are able to obtain constant approximation algorithms, with a fixed parameter tractable (FPT) running time with respect to the parameter $\max(w,k)$ (i.e.\ its running time can be bounded by the product of a polynomial function in $n$ and another function that only depends on $\max(w,k)$).

Lastly we introduce the connectivity constraint to the MSR and MSD problem, which to the authors' knowledge has not been done before\footnote{There exists a paper by An and Kao \cite{an2024connected} that deals with the `Connected Minimum Sum of Radii Problem'. However their problem is not related to the connectivity constraint.}. 
%While there exists a previous paper by An and Kao \cite{an2024connected} that deals with the 'Connected Minimum Sum of Radii Problem', their problem is hardly related to the connectivity constraint as discussed in this work. They are only given a metric without a connectivity graph and calculate an arbitrary clustering as well as an Steiner Tree connecting the chosen centers to each other. The objective value is then a weighted sum of the min sum radii cost function and the length of the Steiner tree. It might be a little bit confusing that there are two different versions of connected clustering, but both notions are consistent with previous literature (for example \cite{swamy2004primal} used similar definition as An and Kao). In this work connected MSR always refers to the MSR problem under the connectivity side constraint which to the authors knowledge has not been studied before. 
While it is NP hard to obtain constant approximations for the other clustering objectives under the connectivity constraint, we show that the approximation algorithm by Buchem et al.~\cite{buchem20243+} can be modified to also ensure connectivity which results in a polynomial time $(3 + \epsilon)$-approximation algorithm for connected MSR. Additionally, we noticed that Buchem et al.\ only mentioned that their algorithm calculates a $(6 + \epsilon)$-approximation for MSD while one can also show a $(4 + \epsilon)$-approximation. We provide a respective proof showing a $(4 + \epsilon)$-approximation guarantee in the connected setting. Since the algorithm also works in the non-connected setting (which is the same as using the complete connectivity graph) this also improves the best known approximation guarantee for unconstrained MSD from $(6 + \epsilon)$ to $(4 + \epsilon)$. 

\subsection{Definitions}

Let us first formally define the clustering problems considered in this work:

\begin{definition}\label{def:conclustering}
In an instance of a connected clustering problem, we are given a set $V$ with $n = |V|$ vertices, a distance metric $d: V\times V \rightarrow \mathbb{R}_{\ge 0}$ on $V$, a positive integer $k\ge 2$, 
and an undirected connectivity graph $G = (V,E)$. 
A feasible solution is a partition of $V$ into $k$ disjoint clusters $P_1, \ldots, P_k$ with corresponding centers $c_1, c_2, \ldots, c_k \in V$ which satisfies that for every $i \in [k]:=\{1,\ldots, k\}$ the subgraph of $G$ induced by $P_i$ is connected and $c_i \in P_i$. The aim is to minimize one of the following objective functions:

\vspace{0.2cm}

\begin{minipage}{0.45\textwidth}
\begin{tabular}{ l l }
 \emph{$k$-center:} & $\displaystyle \max_{i \in [k]} \max_{v \in P_i} d(v,c_i)$ \\[10pt]
 \emph{$k$-median:} & $\displaystyle \sum_{i \in [k]} \sum_{v \in P_i} d(v,c_i)$ \\[14pt]  
 \emph{$k$-means:} &$\displaystyle \sum_{i \in [k]} \sum_{v \in P_i} d(v,c_i)^2$\\
\end{tabular}
\end{minipage}
\begin{minipage}{0.45\textwidth}
\begin{tabular}{ l l }
 \emph{min-sum-radii:} &$\displaystyle\sum_{i \in [k]} \max_{v \in P_i} d(v,c_i)$\\[14pt]
 \emph{min-sum-diameter:} &$\displaystyle\sum_{i \in [k]} \max_{v,w \in P_i} d(v,w)$
\end{tabular}

\end{minipage}

\vspace{0.3cm}

One might note that for MSD the centers are not necessary.
\end{definition} 

%We also define for a connected clustering problem (except MSD) the assignment version of this problem to be the variant where the centers $c_1,\ldots,c_k$ are already given beforehand and one only needs to determine $P_1,\ldots,P_k$.

%In the following we will extend the dynamic programs for connected clustering on trees to also work for a broader class of "treelike" graphs. Formally these classes can be defined over the concept of \emph{tree decompositions} introduced by Robertson and Seymour \cite{robertson1984graph}:

To extend the dynamic programs for connected clustering on tree connectivity graphs we define the concept of \emph{tree decompositions} introduced by Robertson and Seymour \cite{robertson1984graph}:

\begin{definition}[Tree Decomposition]
    A \emph{tree decomposition} of a graph $G = (V,E)$ consists of a tree $T = (V_T,E_T)$ and a corresponding  bag $B_t \subseteq V$ for every $t \in V_t$ such that $\bigcup_{t \in V_T} B_t = V$ and:
    \begin{itemize}
        \item For every $\{u,v\} \in E$, there exists at least one node $t \in V_t$ with $u,v \in B_t$.
        \item For every $v \in V$, the set $S_v = \{t\mid v \in B_t\} $ of nodes whose bags contain $v$ form a connected subtree of $T$.
    \end{itemize}
    The \emph{treewidth} of a tree decomposition $(T,\{B_t\}_{t \in V_T})$ is the size of the largest bag minus one. The treewidth of a graph is the minimum treewidth among all possible tree decompositions of $G$.
\end{definition}

Kloks later introduced the concept of \emph{nice tree decompositions}, which fulfill structural properties that are often desirable \cite{kloks1994treewidth}. In this work we will use an extended version of this concept by Cygan et al.\ that also explicitly deals with the introduction of edges in the decomposition \cite{cygan2011solving}:

\begin{definition}[Nice Tree Decomposition]
    A nice tree decomposition is a tree decomposition with a special node $z$ called the root with $B_z = \emptyset$ and in which each node $t \in V_T$ is one of the following types:
    \begin{itemize}
        \item \textbf{Leaf node:} $t$ is a leaf of $T$ and $B_t = \emptyset$
        \item \textbf{Introduce vertex node:} $t$ has precisely one child $s$ such that $B_t = B_s \cup \{v\}$ for a vertex $v \in V \setminus B_s$. We say that $t$ introduces $v$.
        \item \textbf{Introduce edge node:} A bag node $t$ labeled with an edge $\{u,v\} \in E$ and a single child $s$ such that $B_t = B_s$ and $u,v \in B_t$. We say that $t$ introduces $\{u,v\}$ and every edge is introduced precisely once in our tree decomposition.
        \item \textbf{Forget node:} $t$ has precisely one child $s$ such that $B_t = B_s \setminus \{v\}$ for a vertex $v \in B_s$. We say that $t$ forgets $v$. Every vertex $v$ is forgotten exactly once.
        \item \textbf{Join node:} $t$ has precisely two children $s_1$ and $s_2$ with $B_{s_1} = B_{s_2} = B_t$.
    \end{itemize}
\end{definition}

% A common strategy to solve NP-hard problems is to design algorithms with respect to a specific parameter of the problem like the treewidth. The aim is to design algorithms which run reasonably fast on instances where $p$ is small. We distinguish two complexity classes:
% \begin{itemize}
%     \item An algorithm is slice wise polynomial (XP) with respect to a parameter $p$ if for any fixed parameter $p$ the running time of the algorithm can be bounded by a polynomial in $x$.
%     \item An algorithm is fixed parameter tractable (FPT) with respect to a parameter $p$ if the runtime of the algorithm can be bonded by a function of the form $f(p) x^{c}$ for a computable function $f: \mathbb{N} \rightarrow \mathbb{N}$ and a constant $c$.
% \end{itemize}

It has been shown that if a graph admits a tree decomposition of width $w$ then it also admits a nice tree decomposition of this width. Furthermore, one can always transform a tree decomposition into a nice one of bounded size:

\begin{lemma}[Lemma 7.4 in \cite{cygan2015parameterized}]
    Given a tree decomposition $(T,\{B_t\}_{t \in V_T})$ of a graph $G = (V,E)$ with width $w$, one can compute a nice tree decomposition of $G$ of width $w$ and with $O(w|V|)$ nodes in time $O(\max(|V_T|,|V|)w^2)$.
\end{lemma}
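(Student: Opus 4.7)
The plan is to transform the given tree decomposition $(T, \{B_t\}_{t \in V_T})$ into a nice one through a sequence of local modifications, each preserving the width. First I would root $T$ at an arbitrary node and preprocess the decomposition by repeatedly contracting any tree edge $(s, t)$ with $B_s \subseteq B_t$ or $B_t \subseteq B_s$. Such contractions preserve both the tree-decomposition property and the width, and a standard counting argument (each node of the reduced tree can be charged to a distinct vertex of $V$ that is "fresh" relative to its parent's bag) shows that the resulting reduced tree has $O(|V|)$ nodes. Stored as sorted lists of size at most $w+1$, the preprocessing runs in $O(|V_T| \cdot w^2)$ time.

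Next I would impose the nice structure one feature at a time. To binarize, at every node $t$ with $p \ge 3$ children I insert $p-2$ auxiliary copies of $t$ with the same bag, turning $t$ into a chain of two-child nodes. For each parent-child edge $(t, s)$ in the now binary rooted tree, I insert a chain of at most $|B_s \triangle B_t| \le 2(w+1)$ intermediate nodes: first a forget node for each vertex in $B_s \setminus B_t$, then an introduce-vertex node for each vertex in $B_t \setminus B_s$. After this step, every non-leaf is a forget node, an introduce-vertex node, or a join node whose two children share its bag. For each leaf $t$ with $B_t \ne \emptyset$ I append a chain of introduce-vertex nodes ending in a new empty-bag leaf, and above the original root I prepend a chain of forget nodes terminating in a new root $z$ with $B_z = \emptyset$.

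For the introduce-edge nodes, for each $\{u,v\} \in E$ I identify the topmost node $t_{uv}$ whose bag contains both $u$ and $v$. This node is well defined because, by the tree-decomposition property, some bag contains both endpoints, so $S_u \cap S_v$ is a non-empty connected subtree and has a unique node closest to $z$. I then insert a new introduce-edge node for $\{u,v\}$ between $t_{uv}$ and its parent, giving it bag $B_{t_{uv}}$. Since introduce-edge nodes do not modify bags, this insertion is compatible with every neighbouring node type, and each edge is introduced exactly once. Correctness (validity of the tree-decomposition property, preservation of width, and the type-conditions at every node) follows inductively from the local nature of the modifications.

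For the bounds, after preprocessing the decomposition has $O(|V|)$ nodes; binarization plus bag bridging adds $O(w)$ nodes per tree edge, contributing $O(w|V|)$; the root and leaf chains contribute $O(w|V|)$ in total; and the edge insertions add $|E| = O(w|V|)$ nodes, since a graph of treewidth $w$ has at most $w|V|$ edges. All modifications touch bags of size at most $w+1$, yielding the claimed $O(\max(|V_T|, |V|) w^2)$ running time. I expect the main obstacle to be the initial compression of $|V_T|$ to $O(|V|)$, because the final size bound must hold independently of the input tree's size, and one has to carefully track the subtree sets $S_v$ through the transformation so that each $t_{uv}$ can be located efficiently rather than re-searched from scratch.
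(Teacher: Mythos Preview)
The paper does not supply its own proof of this lemma; it is quoted verbatim as Lemma~7.4 of \cite{cygan2015parameterized} and used as a black box. There is therefore nothing in the present paper to compare your argument against. That said, your outline is the standard textbook construction (compress to $O(|V|)$ nodes by contracting nested bags, binarize, bridge each parent--child edge by a chain of forget/introduce-vertex nodes, pad leaves and root to empty bags, then splice in one introduce-edge node per edge at the topmost bag containing both endpoints), and the size and time accounting you give is correct; in particular your use of $|E|\le w|V|$ for treewidth-$w$ graphs is the right way to bound the number of introduce-edge nodes.
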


Together with the fact that Korhonen and Lokshtanov proved that one can calculate a tree decomposition of the same treewidth $w$ as the graph $G$ in $O(2^{O(w^2)}|V|^4)$ \cite{korhonen2023improved}, this enables us to calculate a nice tree decomposition of a graph $G$ with minimum width in FPT time. Alternatively there also exist approximation algorithms for tree decompositions \cite{korhonen2022single}.

\subsection{Outline and Our Results}

\begin{table}[h]
\centering
\begin{tabular}{ |c|c|c|c| } 
\hline
Objective & Running Time & Approximation factor & Theorem\\
\hline
$k$-center & $n^{O(w)}$ & $1$ & \ref{thm:tw_center_opt}\\
\hline
$k$-center & $n^2\log(n) + n \log(n) (wk)^{O(w)}$ & $6$ & \ref{thm:tw_center_apx}\\ 
\hline
 $k$-median & $p_1(n) + n \cdot (wk)^{O(w)}$ & $2 \alpha_1 + 2$ & \ref{thm:tw_median}\\ 
\hline
 $k$-means & $p_2(n) + n \cdot (wk)^{O(w)}$ & $8 \alpha_2 + 32$ & \ref{thm:tw_means}\\ 
\hline
\end{tabular}
\caption{Overview over the connected clustering results given a tree decomposition of width $w$. The algorithms for connected $k$-median and connected $k$-means are based on an unconstrained $k$-median algorithm with running time $O(p_1(n))$ and approximation factor $\alpha_1$ and an unconstrained $k$-means algorithm with running time $O(p_2(n))$ and approximation factor $\alpha_2$, respectively.}\label{tab:treewidth}
\end{table}

% \mycomment{
% \begin{table}[h]
% \centering
% \begin{tabular}{ |c|c|c|c| } 
% \hline
% Objective & Running Time & Approximation factor & Theorem\\
% \hline
% $k$-center & XP & $1$ & \ref{thm:tw_center_opt}\\
% \hline
% $k$-center & \multirow{3}{10em}{FPT w.r.t $\max(k,w)$} & $6$ & \ref{thm:tw_center_apx}\\ 
% \cline{1-1}\cline{3-4}
%  $k$-median & & $2 \alpha_1 + 2$ & \ref{thm:tw_median}\\ 
% \cline{1-1}\cline{3-4}
%  $k$-means & &$4 \alpha_2 + 16$ & \ref{thm:tw_means}\\ 
% \hline
% \end{tabular}
% \caption{Overview over the connected clustering results for graphs with bounded treewidth, where $\alpha_1$ and $\alpha_2$ denote approximation guarantees of an arbitrary unconstrained $k$-median and $k$-means approximation algorithm, respectively.}\label{tab:treewidth}
% \end{table}
% }

In Section \ref{sec:techniques} we give an overview over the techniques we used to obtain our results. The techniques are presented in the same order as the results afterwards. 

%Section \ref{sec:tw_general} focuses on our results for graphs with bounded treewidth. An overview over these results can be found in Table \ref{tab:treewidth}.
Section \ref{sec:tw_general} focuses on our results for graphs with bounded treewidth which are summarized in Table \ref{tab:treewidth}. In Section \ref{sec:tw_center_opt} the exact algorithm for connected $k$-center with running time $n^{O(w)}$ is presented. In Section \ref{sec:tw_center_apx} we modify this algorithm to obtain a constant approximation while achieving an FPT running time. 
%Lastly, we present in Section \ref{sec:tw_median} how one can apply the same techniques to obtain approximations for connected $k$-median and $k$-means. While not presented in this work one could also modify the exact connected $k$-center algorithm the same way to obtain exact algorithms for connected $k$-median and $k$-means.
By similar methods as for k-center, one can also obtain exact algorithms for connected k-median and k-means with running time $n^{O(w)}$. In Section \ref{sec:tw_median}, we present how one can apply the same techniques as for k-center to compute approximations for connected k-median and k-means in FPT time.
%One could also modify the exact algorithm the same way to also calculate the optimum solution for the respective problems.

In Section \ref{sec:hardness} we complement these results with a family of instances proving that in the general case one cannot hope for a constant approximation for connected $k$-center unless $P = NP$:

\begin{restatable}{theorem}{ThmHardnessCenter}
    It is NP-hard to approximate connected $k$-center with an approximation ratio in $o(\log^*(k))$. The same is also true if the centers are given in advance.
\end{restatable}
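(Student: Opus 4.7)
The plan is to reduce from a problem whose inapproximability is known to scale like $\log^*$. The most natural candidate is Asymmetric $k$-Center, which Chuzhoy, Guha, Halperin, Khanna, Kortsarz, Krauthgamer and Naor proved to be $\Omega(\log^* k)$-hard to approximate. The technical task will then be to encode the asymmetry of that problem using a symmetric metric together with an independent connectivity graph, which is exactly the structure available to us.

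Concretely, starting from an instance $(U,\delta,k,r^*)$ of Asymmetric $k$-Center, I would construct a connected $k$-center instance $(V,d,G,k)$ on a vertex set $V \supseteq U$ augmented by gadget vertices. The symmetric metric $d$ would be defined as the shortest-path distance in an auxiliary weighted graph in which each directed $\delta$-edge $(u,v)$ of length $\ell$ is replaced by an undirected path of length $\ell$ through fresh gadget vertices, while the reverse direction $(v,u)$ is forced to be prohibitively long (either absent or realised by a much longer detour). The connectivity graph $G$ is defined independently and thresholded so that, roughly, $\{u,v\} \in E(G)$ only when $\delta(u,v) \le r^*$ is witnessed by the gadget attached to the $(u,v)$-direction. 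The connectivity constraint then plays the role of the missing asymmetry: a vertex $v$ can only end up in the cluster of a center $c$ when the corresponding direction of coverage is permitted by $G$.

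For completeness, a yes-instance of the source problem yields a connected $k$-center solution by taking the asymmetric centers as cluster centers, assigning each vertex to a covering center, and verifying via the constructions of $G$ and $d$ that every cluster is connected and has radius $O(r^*)$. For soundness, I would argue that any connected $k$-center solution of radius $\rho \cdot r^*$ translates back into an asymmetric $k$-center solution of radius $O(\rho)\cdot r^*$, contradicting the $\Omega(\log^* k)$ hardness provided $|V|$ is polynomial in $|U|$ (so that $\log^*|V|$ and $\log^*k$ differ only by a constant). The variant in which centers are given in advance is then handled by pre-designating the candidate centers of the asymmetric source instance as the prescribed connected $k$-center centers; both directions of the argument carry over without change, since the reduction above never exploited the freedom to choose centers.

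The hard part will be encoding the asymmetry faithfully. Since $d$ must be symmetric, the connectivity graph $G$ has to carry the entire burden of preventing ``wrong-direction'' pairings: the challenge is to design gadgets so that $G$ mirrors the asymmetric coverability of the source problem while avoiding unintended shortcuts in $d$, which would artificially shrink the radius of a geometrically invalid cluster and destroy soundness. The most delicate step is verifying that each forbidden pair $(u,v)$ (where $\delta(u,v)$ is large) either cannot be placed in a common low-radius cluster without a disconnected witness in $G$, or forces a long detour in $d$ through the auxiliary vertices; combining these two mechanisms cleanly, so that neither the metric nor the connectivity graph individually leaks information about the forbidden direction, is where the construction will need to be most careful.
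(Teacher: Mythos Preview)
Your approach has a fundamental gap that the paper itself explicitly flags. The $\Omega(\log^* k)$-hardness of Asymmetric $k$-Center stems \emph{entirely} from the choice of centers: once the centers are fixed, assigning every vertex to its nearest center (in the asymmetric sense) is optimal and polynomial-time. Consequently, your claim that ``the variant in which centers are given in advance is then handled by pre-designating the candidate centers of the asymmetric source instance'' cannot work. If you fix centers on both sides, you are reducing from a problem in P; if you fix centers only on the target side, the completeness direction collapses because you do not know which centers to designate without solving the (hard) source problem. Thus a black-box reduction from Asymmetric $k$-Center cannot yield the second half of the theorem. The paper notes exactly this obstruction and, accordingly, takes a completely different route: a direct layered reduction from 3-SAT that first proves hardness for the \emph{assignment} version and then transfers it to the free-center version via a separate gadget (replicating non-center vertices $k+1$ times so that every feasible clustering must place all $k$ centers at the prescribed locations).

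Even restricting to the free-center version, your sketch does not overcome the core obstacle you identify. The metric $d$ is symmetric, so any short $u$-to-$v$ path you build through gadget vertices is equally a short $v$-to-$u$ path; all directional information must therefore be carried by which gadget vertices a cluster is forced to include, and those gadget vertices must themselves be clustered within the same radius budget. You have not explained how to cover the gadget vertices attached to a vertex $u$ when $u$ is \emph{not} chosen as a center, nor how to prevent a center placed on a gadget vertex from cheaply covering vertices that the asymmetric source instance would forbid. Without a concrete gadget that simultaneously (i) forces the intended directionality, (ii) keeps $|V|$ polynomial in $|U|$, and (iii) survives the soundness translation back to asymmetric centers, the proposal remains a restatement of the difficulty rather than a resolution of it.
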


In Section \ref{sec:msr} we show that unlike connected $k$-median and $k$-center, connected min sum radii actually admits a constant approximation:

\begin{restatable}{theorem}{ThmMSR}\label{thm:msr_connected}
    For any $\epsilon > 0$ one can calculate a $(3 + \epsilon)$-approximation for the connected min-sum-radii problem in polynomial time.
\end{restatable}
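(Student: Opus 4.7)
My plan is to imitate and carefully adapt the recent $(3+\epsilon)$-approximation of Buchem et al.~\cite{buchem20243+} for the unconstrained MSR problem. That algorithm enumerates a polynomial-size grid of candidate balls $B(c,r)$ (with $c \in V$ and $r$ from a geometric grid such as $\{(1+\epsilon')^i\}$), writes a set-cover style LP in which one pays $r$ per selected ball and demands that every vertex is covered by the $k$ chosen balls, and rounds this LP to an integral solution of cost at most $(3+\epsilon')\opt$.

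The key modification is to restrict the candidate family to \emph{connected} balls. For each grid pair $(c,r)$ let $B^{\text{con}}(c,r)$ denote the connected component of $c$ in the induced subgraph $G[B(c,r)]$; this set is connected, contains $c$, and has metric radius at most $r$. For any optimal connected partition $(P_i^*, c_i^*, r_i^*)_{i \in [k]}$, the part $P_i^*$ is a connected subset of $B(c_i^*, r_i^*)$ containing $c_i^*$ and is therefore contained in $B^{\text{con}}(c_i^*, r_i^*)$. Hence the connected candidate balls around the optimal centers already cover $V$ with total cost at most $(1+\epsilon')\opt$, so the restricted LP is feasible with that cost. The rounding argument in~\cite{buchem20243+} uses only the set-cover structure of the candidate family and not any geometric property, so it applies verbatim to our restricted family and yields $k$ connected candidates $S_1,\dots,S_k$ with centers $c_i$ and radii $r_i$ whose union is $V$ and which satisfy $\sum_i r_i \le (3+\epsilon)\opt$.

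The principal technical step is then to turn this connected cover into a disjoint connected partition without enlarging any of the radii $r_i$. I would proceed by a shortest-path Voronoi construction: for each cluster $i$ let $d_i(v)$ denote the length of a shortest $c_i$-$v$ path inside $G[S_i]$, or $+\infty$ if no such path exists, and tentatively assign each $v$ to $i(v) := \arg\min_i d_i(v)$ with a fixed lexicographic tie-break. An induction on $d_{i(v)}(v)$ shows connectivity of $P_i := i^{-1}(i)$ in $G[S_i]$ in the generic case; the delicate case is when a predecessor $u$ on the shortest $c_i$-$v$ path in $G[S_i]$ is assigned to a cluster $j$ with $v \notin S_j$, which I would resolve by a local repair step that re-absorbs any disconnected fragment of $P_i$ into whichever still-covering $S_j$ can host it. The hard part will be proving that this repair terminates in polynomial time and preserves the inclusion $P_i \subseteq S_i$, so that the cost bound $\sum_i \max_{v \in P_i} d(v,c_i) \le \sum_i r_i$ is maintained throughout; a potential argument based on $\sum_v d_{i(v)}(v)$ together with an exchange lemma along overlap regions should suffice. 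Once this step is established, the $(3+\epsilon)$-approximation guarantee of Theorem~\ref{thm:msr_connected} follows directly from the LP rounding bound.
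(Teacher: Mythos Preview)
Your high-level modification—replacing metric balls by \emph{connected} balls $B^{\text{con}}(c,r)$ (the component of $c$ in $G[\{u: d(c,u)\le r\}]$)—is exactly the right move, and it is precisely what the paper does. You are also right that the Buchem et al.\ analysis is purely set-theoretic in the ball family and therefore carries over to this restricted family unchanged.

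Where your proposal goes wrong is in the description of what Buchem et al.\ actually output, and consequently in the extra ``cover-to-partition'' step you add. The algorithm of~\cite{buchem20243+} is not a generic LP rounding that returns $k$ (possibly overlapping) balls $S_1,\dots,S_k$. It first guesses the $1/\epsilon$ largest optimal pairs $(c_i^*,r_i^*)$, then runs a primal--dual procedure on the remaining vertices that returns a collection $\B'$ of \emph{almost-tight} pairs together with one extra pair $(u',r')$; the final clusters are the connected components of the overlap graph on $\mathcal{S}'=\B'\cup\{(u',r')\}\cup\{(c_i^*,r_i^*)\}$, and the factor~$3$ arises precisely from bounding $rad(\Com)\le 3\,dsr(\Com)$ for each such component. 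By construction these components are pairwise vertex-disjoint, and—this is the point the paper exploits—each component is the union of pairwise overlapping connected balls, hence is itself connected in $G$. So the output is \emph{already} a feasible connected partition into at most $k$ parts; there is nothing left to repair.

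Your Voronoi/``repair'' step is therefore unnecessary, and as stated it is also a genuine gap: you neither prove termination nor the invariant $P_i\subseteq S_i$, and in fact the desired refinement need not exist. On the path $a\!-\!b\!-\!c\!-\!d\!-\!e$ with $S_1=\{a,b,c,d,e\}$ centered at $c$ and $S_2=\{c\}$ centered at $c$, any partition with $c\in P_2\subseteq S_2$ forces $P_1\subseteq\{a,b,d,e\}$, which is disconnected; keeping $c_2\in P_2$ is incompatible with $P_1$ connected. The paper avoids this entirely by merging rather than splitting: here $S_1$ and $S_2$ overlap, land in the same component, and become a single connected cluster whose radius is charged against the dual. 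The only additional care the paper needs (and which you do not address) is that the $1/\epsilon$ guessed large balls may overlap the primal--dual balls; including them in $\mathcal{S}'$ before taking components handles this, and Lemmas~\ref{lem:bound_radius}--\ref{lem:msr_bound_cluster} show the extra radii $r'+\sum_i r_i^*$ are absorbed into the $(3+\epsilon)$ bound.
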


Additionally we show in Section \ref{sec:msd} that the algorithm also calculates a $(4 + \epsilon)$-approximation for min-sum-diameter which also improves the previously known $(6 + \epsilon)$-approximation guarantee in the unconstrained case:

\begin{restatable}{theorem}{ThmMSD}
    For any $\epsilon > 0$ one can calculate a $(4 + \epsilon)$-approximation for both the connected as well as the regular min-sum-diameter problem in polynomial time.
\end{restatable}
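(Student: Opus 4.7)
The plan is to revisit the algorithm used in the proof of Theorem~\ref{thm:msr_connected}---an adaptation of Buchem et al.'s $(3+\epsilon)$-approximation for MSR that additionally enforces connectivity---and re-analyze its output under the min-sum-diameter objective. Since the algorithm works verbatim if one plugs in the complete graph as the connectivity graph, the very same analysis will handle both connected MSD and the classical unconstrained MSD problem, which is why we get the improvement from $(6+\epsilon)$ to $(4+\epsilon)$ in the unconstrained case for free.

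First I would recall the guarantee produced by the algorithm: it outputs clusters $P_1, \ldots, P_k$ with centers $c_1, \ldots, c_k$ and radii $r_i := \max_{v \in P_i} d(v, c_i)$ satisfying $\sum_{i} r_i \le (3+\epsilon)\,\opt_{\mathrm{MSR}}$. Combining the trivial bound $D_i := \max_{u,v \in P_i} d(u,v) \le 2 r_i$ with the easy inequality $\opt_{\mathrm{MSR}} \le \opt_{\mathrm{MSD}}$ (obtained by picking an arbitrary vertex of each MSD-optimal cluster as an MSR center, which upper bounds its MSR radius by its MSD diameter) yields the weaker $(6+2\epsilon)\,\opt_{\mathrm{MSD}}$ bound noted by Buchem et al.

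To sharpen this to $(4+\epsilon)$, I would avoid the lossy step $D_i \le 2r_i$ and instead bound $D_i$ against the MSD optimum directly. Concretely, for each output cluster $P_i$, I plan to associate an OPT-MSD cluster $Q^*_{f(i)}$ such that $P_i$ is contained in $Q^*_{f(i)}$ together with an ``expansion'' controlled by $r_i$, and to split $D_i \le D^*_{f(i)} + \text{(expansion term)}$. The goal is then to show that (i) the charges $D^*_{f(i)}$ sum to at most $\opt_{\mathrm{MSD}}$ and (ii) the expansion residuals sum to at most $(3+\epsilon)\,\opt_{\mathrm{MSR}}\le (3+\epsilon)\,\opt_{\mathrm{MSD}}$. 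Adding these gives $\sum_i D_i \le (4+\epsilon)\,\opt_{\mathrm{MSD}}$, establishing both theorem statements at once.

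The main obstacle is making this splitting rigorous: one must define $f$ so that the first sum really does not over-charge any OPT-MSD cluster more than once (or more than controlled), while still being able to bound the residual by the already-proven MSR guarantee rather than by $2\sum_i r_i$. Both properties must be read off the internal structure of the algorithm underlying Theorem~\ref{thm:msr_connected}---in particular, from the fact that each output ball is centered close to the center of some OPT-MSR cluster and that the corresponding assignment can be made sufficiently well-behaved. No new algorithmic idea is needed; the entire proof amounts to transporting this structural information carefully through the diameter bound.
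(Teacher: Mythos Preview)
Your proposal has a genuine gap. The splitting $D_i \le D^*_{f(i)} + (\text{expansion controlled by } r_i)$ is not justified, and I do not see how to make it work with only one copy of $r_i$ per cluster. Even in the best case where the output center $c_i$ lies inside the associated optimal MSD cluster $Q^*_{f(i)}$, the two diameter-realizing points $u,v\in P_i$ are only known to satisfy $d(u,c_i),d(v,c_i)\le r_i$; routing through $Q^*_{f(i)}$ still costs $r_i$ on each side, giving $D_i\le D^*_{f(i)}+2r_i$ at best, which would only recover $(7+\epsilon)$. On top of that, you need $f$ to be injective (or otherwise not over-charge any $D^*_j$), but the output clusters arise from merging many balls and there is no natural one-to-one correspondence with the optimal MSD clusters. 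You acknowledge this obstacle but defer it to ``the internal structure of the algorithm'' without saying which property would deliver it; I do not believe the algorithm provides such an injection.

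The paper's route is entirely different and considerably simpler. It never compares an output cluster to an optimal MSD cluster at all. Instead, it proves a direct structural bound $diam(\Com)\le 4\cdot dsr(\Com)$ for every component $\Com$ of the pair set $\mathcal{S}'$: take the two farthest points $a,b$, follow a nice path of balls from $a$ to $b$, and observe that the odd-indexed balls are pairwise disjoint (and likewise the even-indexed ones), so $dsr(\Com)\ge \tfrac12\sum_i r_i$ while the triangle inequality gives $d(a,b)\le 2\sum_i r_i$. This replaces the factor $3$ in Lemma~\ref{lem:bound_radius} by a factor $4$ for diameters, and the rest of the MSR analysis---which already shows $\sum_{\Com} dsr(\Com)\le (1+O(\epsilon))\,\opt_{\mathrm{MSR}}\le (1+O(\epsilon))\,\opt_{\mathrm{MSD}}$---carries over verbatim to yield $(4+\epsilon)$. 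No charging to the optimum is needed.
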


\section{Our techniques} \label{sec:techniques}

\subsection{Results for graphs with bounded treewidth}

\subsubsection{An exact algorithm for connected \texorpdfstring{$k$}{k}-center}

Let a graph $G = (V,E)$ with $n = |V|$ vertices, a natural number $k$ and a nice tree decomposition $\left(T = (V_T,E_T),\{B_t\}_{t \in V_T}\right)$ of $G$ with treewidth $w$ and size $O(wn)$ be given. For any node $t \in V_T$ we define $T_t$ to be the nodes contained in the subtree rooted at $t$ and the graph $G_t = (V_t,E_t)$, where $V_t = \bigcup_{u \in T_t} B_u$ and $E_t = \big\{\{v,u\} \in E \mid \{v,u\} \text{ has been introduced in } T_t\big\}$.

As pointed out by Hochbaum and Shmoys \cite{hochbaum1985best}, every possible objective value of the $k$-center objective corresponds to the distance between a pair of vertices, of which there only exist $O(n^2)$ many. This remains true for the connected $k$-center problem. Thus we can guess the optimal radius $r^*$ by sorting these distances and then applying binary search. For a guessed radius $r$, our goal is then to use dynamic programming to decide whether a connected $k$-center solution with radius $r$ exists. This enables us to find the optimal radius $r^*$ and later also obtain the corresponding solution via backtracking. To apply dynamic programming, we need to be able to calculate for a given node $t$ the number of clusters necessary to partition the vertices in the subgraph $G_t$ independently of the remainder of the graph. We can use the property of the tree decomposition that the vertices in $B_t$ form a vertex separator between $V \setminus V_t$ and $V_t \setminus B_t$. Thus all clusters partially being contained in $V_t$ and $V \setminus V_t$ must be connected via at least one vertex in $B_t$. 

Following the same idea as in the dynamic program for tree connectivity graphs \cite{ge2008joint}, we can fix the centers to which the vertices in $B_t$ get assigned to. Then we know exactly to which centers in $V \setminus V_t$ we could possibly assign the vertices in $V_t$ and vice versa. However this is not sufficient to make the partial solutions for the separated subgraphs entirely independent. If multiple vertices in $B_t$ get assigned to the same center $c$ we know that they must be connected in the final solution. However the connecting path could either pass though $G_t$ or through the remainder of the graph $G$. Thus we may or may not need to assign the vertices in $V_t$ in such a way that the respective vertices in $B_t$ are connected which in turn could separate other clusters in $G_t$ which can influence the number of clusters necessary to cover $G_t$.

%Following the logic of the dynamic program for tree connectivity graphs by Ge et al.~\cite{ge2008joint} one could believe that it would be sufficient to fix the assignments of the vertices in $B_t$ to the respective centers to make the clustering of $V_t$ independent of the remainder of the graph. This is however not the case. This is due to the fact that multiple vertices $v,w$ in $B_t$ could be assigned to the same center which means that in our final solution the respective vertices need to be connected. Depending on the structure of $V \setminus V_t$ it may already be necessary to enforce this connection within $G_t$ because all paths from $v$ to $w$ pass through $V_t$ or it might be possible to connect them via another path in $V \setminus V_t$ (if such a connection is even possible). Depending on this the number of clusters necessary to cover $G_t$ might vary, an example depicting such a situation can be found in Figure~\ref{fig:reason_partition}. 

To circumvent this problem we will not only fix the assignments of the vertices in $B_t$ but also provide a partition which tells us which vertices that get assigned to the same center need to be connected in $G_t$. A similar approach was already used to solve the Steiner tree problem on graphs with bounded treewidth \cite{chimani2012improved}. For every feasible assignment and connection of the vertices, we can then calculate the respective clusters necessary to cover $G_t$. 
For obvious reasons, we only want to consider partitions of $B_t$ that cluster vertices together that are assigned to the same center. More generally:

\begin{restatable}{definition}{SpecificationCenterTW}\label{def:specification}
    Let $S\subseteq V$ be a set of vertices and $a: S \rightarrow V$ be an assignment function on $S$. We say a partition $\mathcal{U}$ of $S$ is \emph{compatible} with $a$ if for any $U \in \mathcal{U}$ and any vertices $v,w \in U$ it holds that $a(v) = a(u)$. With a slight abuse of notation we say for any set $U \in \mathcal{U}$ that $a(U) = a(v)$, where $v$ is an arbitrary vertex in $U$. We call the pair $(a,\U)$ of an assignment function and a compatible partition $\U$ also a \emph{specification} for $S$.
\end{restatable}

%To formalize this idea let $t \in T$ be a node and let $a: B_t \rightarrow S$, where $S$ is an arbitrary set,  be an assignment function of $B_t$. We say a partition $\mathcal{U}$ of $x_t$ is \emph{compatible} with $a$ if for any $U \in \mathcal{U}$ and any vertices $v,w \in U$ it holds that $a(v) = a(u)$. With a slight abuse of notation we say for any set $U \in \mathcal{U}$ that $a(U) = a(v)$ where $v$ is an arbitrary vertex in $U$.

\begin{figure}[t]
\centering
    \includegraphics[width= 0.5 \textwidth]{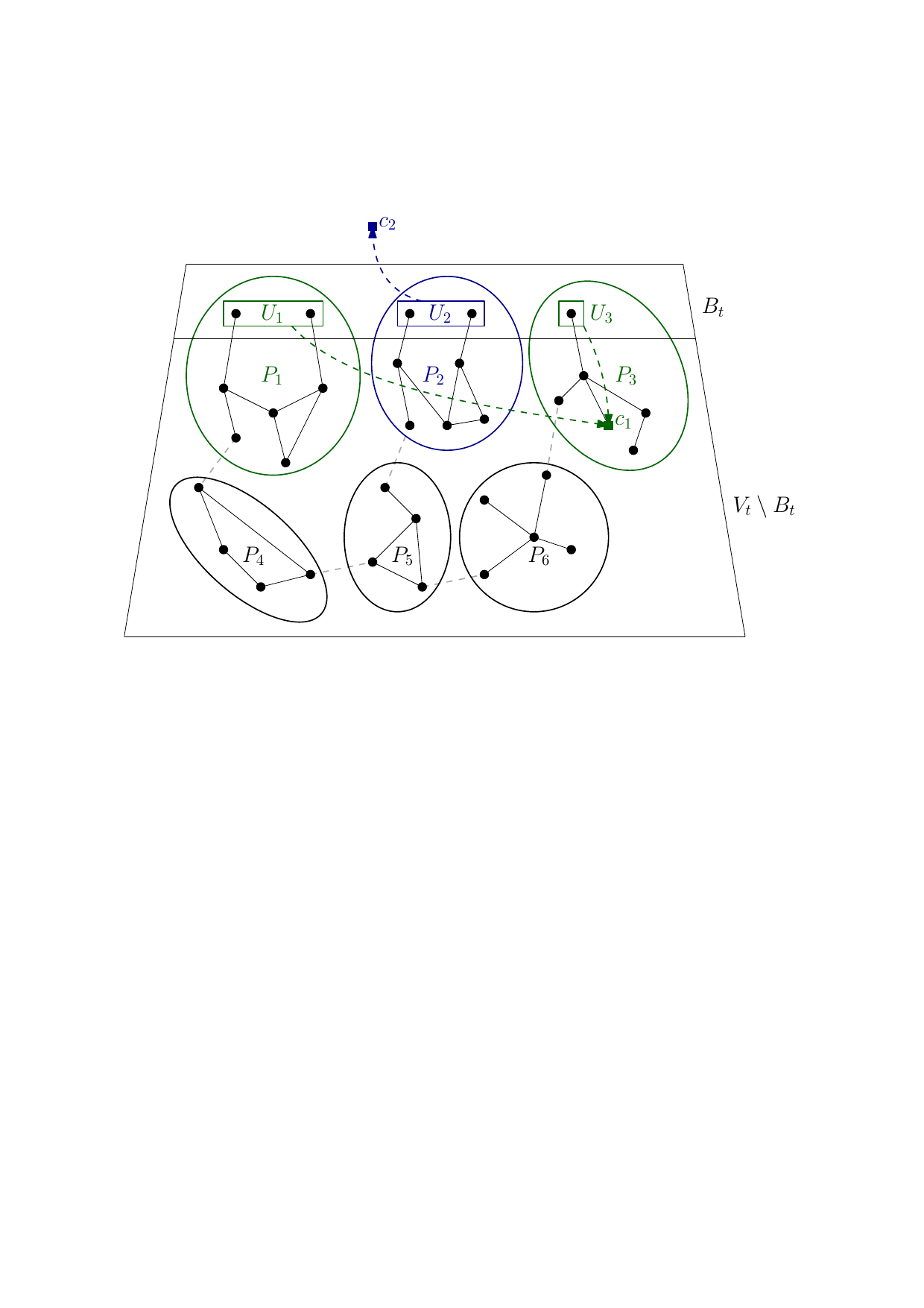}
    \caption{A solution for the graph $G_t$ fulfilling the specification $(a,\{U_1,U_2,U_3\})$ where $a(U_1) = a(U_3) = c_1$ and $a(U_2) = c_2$ with $c_2 \not\in V_t$. The solution requires $3$ clusters, finished ones are circled in black, the unfinished ones in the color corresponding to their assignment.}\label{fig:def_solution_spec}
\end{figure}

Now we would like to calculate for a given $t$ and any specification $(a,\U)$ for $B_t$ the minimum number of clusters necessary to cover $G_t$ if we assign and connect the vertices in $B_t$ accordingly. For technical reasons, we will only count the clusters that are not containing any vertices in $B_t$. We call these clusters \emph{finished} as the other clusters might still continue to grow. For every $U \in \U$ we have to make sure that all the vertices of $U$ are contained in the same cluster already as we require them to be connected in $G_t$. Different sets $U_1, U_2 \in \U$ that are assigned to the same center will however for now be contained in different unfinished clusters that need to be merged in the upper stages of the tree. Furthermore, if there exists a vertex $c \in V_t$ and a vertex $v \in B_t$ with $a(v) = c$, $c$ needs to be contained in an unfinished cluster that is assigned to c (possibly but not necessarily the same cluster that also contains $v$), since otherwise it would not be possible to connect $v$ to $c$. Also for all finished clusters, the respective center must be contained in the cluster itself, while unfinished clusters may not contain the respective center if it is not contained in $G_t$. Lastly, all clusters need to be connected again and each vertex should only have distance at most $r$ to its center. In total we end up with the following definition:

\begin{restatable}{definition}{SolutionCenterTW}\label{def:sol_spec_center}

Let $t \in V_T$ and $(a,\U)$ with $\U = \{U_1,\ldots,U_l\}$ be a specification for $B_t$. We say that a partition $P_1,\ldots,P_l,P_{l+1},\ldots,P_{l + k'}$ of $V_t$ is a connected $k$-center solution for $G_t$ fulfilling $(a,\U)$, if:
\begin{enumerate}
\item For all $i \in [l + k']$ the cluster $P_i$ induces a connected subgraph of $G_t$. \label{prop:sol_connected}
\item For all $i \in [l]$ it holds that $U_i \subseteq P_i$ and for all $v \in P_i$ we have $d(v,a(U_i))\leq r$.\label{prop:sol_dist_uf}
\item For all $c \in V_t$ for which there exists a vertex $v \in B_t$ with $a(v) = c$, there must exist an index $i \in [l]$ such that $a(U_i) = c$ and $c \in P_i$. \label{prop:sol_reachable}
\item For all $i \in \{l+1,\ldots, l + k'\}$ it holds that there exists a center $c \in P_i$ such that for all $v \in P_i$ $d(v,c) \leq r$.\label{prop:sol_dist_finished}
\end{enumerate}
We say that such a solution requires $k'$ clusters and define $p_t(a, \mathcal{U})$ to be the minimum value $k' \in \mathbb{N}_0$ such that there exists a solution for $t$ fulfilling $(a,\U)$ that only requires $k'$ clusters. If such a $k'$ does not exist, we set $p_t( a,\mathcal{U}) = \infty$.

\end{restatable}

Figure~\ref{fig:def_solution_spec} depicts how a solution for a node fulfilling a specification can look like.

Starting with the leaves, our algorithm iterates over all nodes $t \in V_T$ ordered by their height in $T$, calculates the value $p_t(a,\U)$ for any specification $(a,\U)$ of $B_t$ and stores it in a table. As we show in Section~\ref{sec:tw_center_opt}, one can calculate each of these values with a running time bounded by a function only depending on $w$, if we already filled the table for the children of $t$. One may note that for a fixed treewidth $w$ the number of specifications for the bag $B_t$ is polynomially bounded in $n$. As a result, our algorithm runs in polynomial time if the treewidth is a constant. Once we reach the root $z$ we have that $G_z = G$ and $B_z = \emptyset$. 
%Thus there only exists a single specification for $z$ whose corresponding value is exactly equal to the minimum number of centers necessary to cover $G$ with radius $r$ and we can check if this number is at most $k$.
Thus for the unique function $a_\emptyset:\emptyset \rightarrow V$ the value $p_z(a_\emptyset, \emptyset)$ is exactly equal to the minimum number of centers necessary to cover $G$ with radius $r$ and we can check if this number is at most $k$.

%Starting with the leaves, our algorithm iterates over all bag nodes $t \in B_t$ ordered by their height in $T$, calculates the value $p_t(a,\U)$ for any specification $(a,\U)$ of $B_t$ and stores it in a table. As we show in Section~\ref{sec:tw_center_opt}, we can calculate these values in $n^{O(w)}$, if we already calculated them for the children of $t$. One may note that for a fixed treewidth $w$ the number of specifications for the bag $B_t$ is polynomially bounded in $n$. As a result, the resulting algorithm runs in polynomial time on these kind of graphs. Once we reach the root $z$ we have that $G_z = G$ and $B_z = \emptyset$. Thus the value $p_z(\emptyset, \emptyset)$ is exactly equal to the minimum number of centers necessary to cover $G$ with radius $r$ and we can check if this number is at most $k$.

\subsubsection{Obtaining a constant approximation in FPT time}

One might note that the running time of the previous algorithm is not in FPT. This is due to the fact that for a node $t$ there exist up to $n^{|B_t|}$ different assignment functions from $B_t$ to $V$, since for any vertex $v \in B_t$ there are $n$ possible choices where it could be assigned to. Given that the bags can have size $w + 1$, we can have $\Omega(n^{w + 1})$ many specifications for every node. We can reduce this factor by limiting the number of possible center choices in advance. We do this by introducing the connected $k$-center with facilities problem, where we can only choose the centers from a given set of facilities $F$:

\begin{restatable}{definition}{FacilitiesCenterTW}
    In the \emph{connected $k$-center with facilities} problem we are given a graph $G = (V,E)$, a set of facilities $F \subseteq V$, a constant $k$ and a distance metric $d: V^2 \rightarrow \mathbb{R}_{\geq 0}$. The goal is to find a tuple $(c_1,\ldots,c_k) \in F^k$ of $k$ facilities (where one facility might appear multiple times) and  clusters $P_1,\ldots,P_k \subseteq V$ such that:
    \begin{itemize}
        \item $P_1,\ldots,P_k$ is a partition of $V$.
        \item For all $i \in [k]$ the vertices in $P_i$ form a connected subgraph of $G$.
        \item $\max_{i \in [k]}\max_{v \in P_i} d(v,c_i)$ gets minimized.
    \end{itemize}
\end{restatable}
One may note that this definition does not require the clusters to contain the respective centers, which will enable us to choose a facility set independently of the connectivity graph.

As we show in Section \ref{sec:tw_center_apx}, one can modify the dynamic program for connected $k$-center to also calculate the optimal connected $k$-center with facilities solution, while reducing the factor $n^{w+1}$ to $|F|^{w+1}$. If the facility set $F$ does not contain the optimal centers for connected $k$-center, this will obviously not correspond to the optimal connected $k$-center solution. However, if we choose a good unconnected $k$-center solution on the vertices $V$ as $F$ (which can be obtained using a $2$-approximation \cite{hochbaum1985best}), then there exists a facility close to every optimal center, which means that the connected $k$-center with facilities solution is not too much more expensive than the regular connected $k$-center solution. Since the clusters of the connected $k$-center with facilities solution may not contain the respective facilities, we then have to find a feasible center for each cluster to transform the resulting solution into a regular connected $k$-center solution. To do this, one can just take the vertex closest to the center, which only increases the radius by at most $2$. In total we end up with a $6$-approximation while reducing the number of possible assignment functions to $|F|^{w+1} = k^{w +1}$ which results in an FPT running time w.r.t.\ the parameter $\max(k,w)$. The details of this procedure can be found in Section \ref{sec:tw_center_apx}. 
% The overall structure can be summarized by the following three steps:

% \begin{enumerate}
%     \item Calculate centers for unconnected $k$-center via an approximation algorithm $\Rightarrow F$
%     \item Calculate optimum connected $k$-center with facilities solution with facility set $F$.
%     \item Find a feasible center for every resulting cluster to obtain a connected $k$-center solution.
% \end{enumerate}

% \begin{algorithm}
% 	\LinesNumbered
% 	\DontPrintSemicolon
% 	\SetKwInOut{Input}{input}
% 	\SetKwInOut{Output}{output}
%    Calculate centers for unconnected $k$-center via an approximation algorithm $\Rightarrow F$\;
%    Calculate optimum connected $k$-center with facilities solution with facility set $F$.\;
%    Find a feasible center for every resulting cluster\;
%     \caption{FPT approximation algorithm for connected $k$-center.}
% \end{algorithm}

\subsubsection{Results for connected \texorpdfstring{$k$}{k}-median and \texorpdfstring{$k$}{k}-means}

The dynamic program can also be adapted to work for the $k$-median and $k$-means objective. The main difference is that we cannot guess the optimal radius in advance anymore. Following a similar idea as in the work by Eube et al.~\cite{eube2025connectedkmediandisjointnondisjoint}, we extend the specification for the bag $B_t$ of a node $t \in V_T$ to also tell us how many centers we are allowed to place in the graph $G_t$ (by adding a number $k' \leq k$). By doing so, we are able to calculate for every node $t$ and every specification for $B_t$ the cost of the optimal clustering on $G_t$ fulfilling this specification. 

Also for these two objectives, we can use a nesting argument to show that if we choose the centers of a good unconnected $k$-clustering solution as facilities, the cost of the optimal solution only increases by a constant factor if we restrict ourself to these facilities. By calculating the optimal connected clustering with facilites solution and choosing appropriate centers for each cluster afterwards, we can calculate constant approximations for both objectives in FPT time. The details can be found in Section \ref{sec:tw_median}.

\subsection{Hardness of approximation for connected \texorpdfstring{$k$}{k}-center.}

To prove that the connected $k$-center problem is $\Omega(\log^*(k))$-hard to approximate we consider the assignment variant of the problem. Here we are given the centers in advance and only have to decide to which center the vertices get assigned (i.e.\ determine the corresponding clusters). Intuitively one would expect this to make the problem easier and not harder. Indeed we can prove that any $\alpha$-approximation algorithm for the regular variant can be transformed into a $2 \alpha$-approximation algorithm for the assignment variant, meaning that the hardness also caries over if the centers are not given. The hardness is based on a reduction from the 3-SAT problem, in which we have to decide if a given CNF formula $\phi$ can be satisfied. There already exist reductions that show that even with two centers the assignment problem is NP-hard \cite{ge2008joint, drexler2024connected}. However this does only exclude approximation factors better than $3$ as by the triangle inequality the centers cannot be too far apart. We use the respective construction as a gadget in our reduction.

Given that our construction is quite technical, we only present a very broad idea of its structure in this section. The complete proof can be found in Section \ref{sec:hardness} and some of the terms we use in this section (gadget, input, output) are not used in the actual description of the reduction. One might also note that for the asymmetric $k$-center problem there exists an $o(\log^*(n))$-hardness result with a similar bound~\cite{chuzhoy2005asymmetric}. However there are no similarities between the respective hardness constructions. Notably the asymmetric $k$-center problem also becomes polynomially solvable if the centers are given.

To get stronger inapproximability results we use a layer approach. In total we have $L$ layers and our vertices are placed in an $L$-dimensional space, meaning that their positions have $L$ coordinates where each coordinate basically correspond to one of the layers. Using a metric similar to the Manhatten distance, we can enforce that the distance between two vertices is equal to the number of unequal coordinates of their positions\footnote{This is a slight simplification, actually their distance is not equal but proportional to this value}. For every position appearing in our instance, there exists a center that is also placed at this position. However not every vertex placed at that position is directly connected to the center. We construct the instance in such a way that every vertex can be assigned to a center differing in at most one coordinate if $\phi$ can be satisfied while otherwise at least one vertex needs to be assigned to a center that differs in every coordinate, resulting in a gap of $L$ between the radius of these two cases.

Each layer is only connected with the previous and the next layer and the centers are only connected with the first, i.e.\ the outermost layer. Thus paths connecting the vertices in the $L$-th layer to their respective center have to pass through every layer. For a vertex $v$ we say that an assignment does a \emph{mistake} in the $i$-th layer if the assignments in this layer separate $v$ from every center with the same $i$-th coordinate as $v$, meaning that $v$ also needs to be assigned to a center with a different $i$-th coordinate.

The $L$-layer instance is defined iteratively and contains multiple copies of instances with $L-1$ layers. In these copies, we replace the centers by regular vertices which we call the \emph{inputs} of the subinstance. They are the only vertices neighbored with vertices from the first layer of the $L$-layer instance. We also add an additional coordiante at the beginning of the positions of the vertices, which is the same for every vertex in this subinstance. As a result we can also say that we make a mistake for this entire subinstance if we assign all inputs to centers with a
different first coordinate. A $0$-layer instance simply consists of an isolated vertex.

To construct the first layer itself we use gadgets based on the reduction by Ge et al.~\cite{ge2008joint}. Basically one can simulate a Boolean decision by assigning a vertex to one of two centers which then allows us to connect its neighbors to this center while possibly preventing connections to the other center. Following this idea, we can create gadgets modeling the formula $\phi$ that use two centers whose positions only differs in the first coordinate and contains a special set of vertices that we call \emph{outputs}. Each output is supposed to be assigned to one specific center (among the two contained in the gadget). If $\phi$ can be satisfied, we can assign all outputs accordingly, but if $\phi$ cannot be satisfied, at least one output needs to be assigned incorrectly. We create these kind of gadgets for every possible choice of the last $L-1$ coordinates and connect their outputs with the inputs of the instances with $L-1$ layers (where the same output can be used multiple times). The different inputs of the same subinstance are each connected with an output from a different gadget and the connections between the in- and outputs have to follow a specific structure that is not presented in this section. Figure~\ref{fig:overview_network} sketches the overall structure of this construction.

\begin{figure}
\centering
    \includegraphics[width= 0.7 \textwidth]{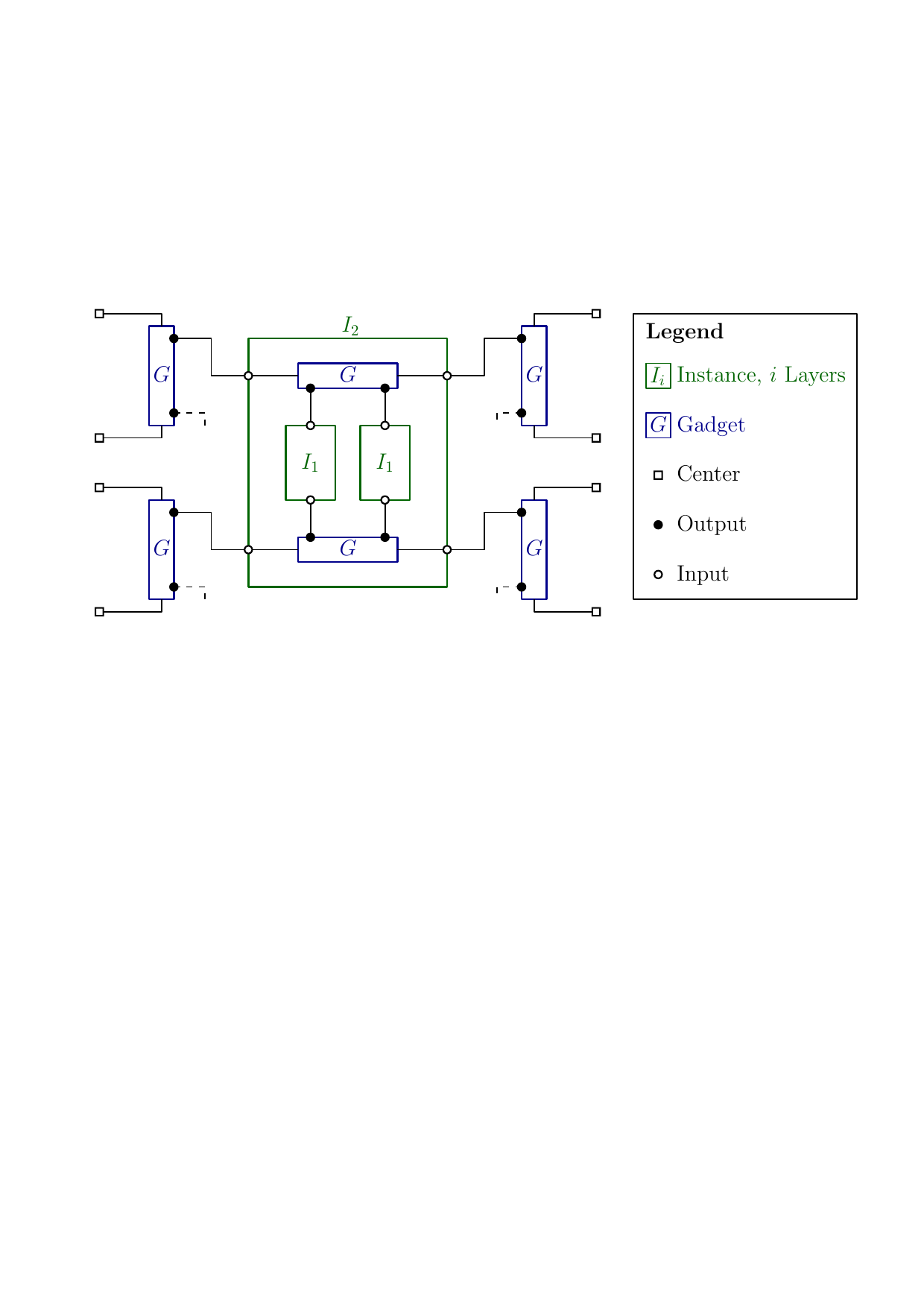}
    \caption{A sketch of a $2$-layer subinstance included in a $3$-layer instance. Note that the number of in- and outputs is not accurate and there are further gadgets, centers and subinstances in the 3-layer instance.}\label{fig:overview_network}
\end{figure}

By adding sufficiently many ($L-1$)-layer instances, one can ensure that if $\phi$ cannot be satisfied (meaning each gadget has an incorrect output), then there also exists at least one of these subinstances which is only connected to incorrect outputs resulting in a mistake for this subinstance. By an inductive argument we can then show that in this $(L-1)$-layer subinstance we have to again make a mistake for one of the contained $(L-2)$-layer subinstances and so on which results in a vertex for which we make a mistake in every layer resulting in a radius of $L$. However if $\phi$ can be satisfied, it is possible to assign the vertices without any mistakes resulting in a radius of $1$.

To make this construction work, we have to cover many different possibilities which outputs are assigned incorrectly by the gadgets, which results in many copies of $(L-1)$-layer subinstances in an $L$-layer instance. Together with the fact that their connections to the gadgets have to fulfill certain requirements, this causes the number of centers to increases exponentially when we add another layer. As a result, the number of layers is bounded by $O(\log^*(k))$ which then also limits our hardness result accordingly.

\subsection{Approximation Algorithms for Connected MSR and MSD}

Our approximation for connected MSR is based on the $(3 + \epsilon)$-approximation algorithm for regular MSR by Buchem et al.~\cite{buchem20243+}. Their algorithm first guesses the $(1/\epsilon)$ largest clusters and then calculates the remaining ones using a primal dual approach. The variables of the primal LP (and thus the constraints of the dual LP) correspond to balls $B(c,r)$ that contain all vertices around a center $c$ with radius $r$. These variables indicate whether or not the respective ball is used as a cluster in the LP solution. To ensure connectivity we have to modify these balls to instead contain all vertices which are reachable from the center via a path of vertices at distance $r$ to it. It is easy to verify that the resulting balls are indeed connected.

Since the original algorithm already merges overlapping balls together, most of the analysis by Buchem et al.\ can be modified to also work with the connectivity constraint. However, one needs to be a little bit careful with the $(1/\epsilon)$ largest clusters that have been guessed in the beginning. In the regular setting, we can just assign all vertices within a certain radius around an optimal center (similarly to the balls) to the center and know that it does not affect the other clusters negatively, if they cannot use the respective vertices anymore. However, in the connected setting, this might split up some of the other optimal clusters. To circumvent this, we allow that the balls in the LP can also contain vertices that are already contained in these guessed clusters and in the end merge them together with the guessed clusters if they overlap. By being careful in the analysis, one can show that this does not increase the approximation factor in the worst case. As a result, we end up with a $(3 + \epsilon)$-approximation algorithm for connected MSR.

To improve the trivial $(6+ \epsilon)$-guarantee that this algorithm gives for the MSD problem (both in the connected and unconnected case), one needs to observe that the cost of the optimal MSD solution is lower bounded by the optimal MSR solution. As a result we can compare against the optimal MSR solution to bound the approximation factor for MSD. In their analysis, Buchem et al.\ were able to prove via an elaborate argument that the radius of a cluster, resulting from merging a specific set of overlapping balls, can be bounded by three times the contribution of these balls to the cost of the dual LP solution. Using an actually far less involved argument, one can show that the diameter of the resulting cluster is bounded by $4$ times the contribution of these balls to the cost of the dual solution. By inserting this bound into the remainder of the proof, we obtain that the MSR algorithm also calculates a $(4 + \epsilon)$-approximation for the optimal MSD solution. This works both in the connected as well as in the regular setting (by using the respective variant of the MSR algorithm).

\section{Algorithms for graphs with bounded treewidth} \label{sec:tw_general}

\subsection{An exact algorithm for connected k-center}\label{sec:tw_center_opt}

Let for a graph $G = (V,E)$, a nice tree decomposition $\left(T,\{B_t\}_{t \in V_T}\right)$ of $G$ with treewidth $w$ and a guessed radius $r$ be given. 
In this section we elaborate how one can calculate for every node $t \in  V_T$ and specification $(a,\U)$ of $B_t$ the value $p_t(a,\U)$ using dynamic programming \footnote{The term specification and the value $p_t(a,\U)$ are defined in Definition \ref{def:specification} and \ref{def:sol_spec_center}}. By doing this we can then verify whether $G$ can be partitioned into $k$ connected clusters with radius $r$ or not. Our algorithm works as follows: We go over all nodes of the nice tree decomposition ordered by their depth in decreasing order. For each node $t$ we have an array $R_t$ that for any specification $(a,\U)$ for $B_t$ contains a value $R_t[a,\mathcal{U}]$ that is initially set to $\infty$. Our goal is to set the values of $R_t$ to the values of $p_t$. If we have already done this for the children of $t$ we can go over all possible specifications $(a',\U')$ for the respective bags and consider how these can be changed by the modifications introduced by $t$ (introduction of a vertex, introduction of an edge, join of two nodes or removal of a vertex) and whether or not they can actually correspond to a feasible solution for the subtree $G_t$. If the latter is the case we will check for the resulting specifications $(a,\mathcal{U})$ for $B_t$ whether the required finished clusters are lower than the current value of $R_t[a,\mathcal{U}]$ or if we have already found a better solution before. If yes the value of $R_t[a,\mathcal{U}]$ is decreased accordingly. The exact details will follow later. After doing this for every entry of the children nodes it holds for all specifications $(a, \mathcal{U})$ for $B_t$ that $R_t[a,\U] = p_t(a,\U)$. Once we reach the root $r$ of the tree decomposition we end up with the numbers of clusters necessary to cover the entire graph while $B_r = \emptyset$, which means that no decisions are left open. Thus we already end up with the minimum number of clusters to cover $G$ with radius $r$. It is sufficient to ensure that the following invariant is fulfilled during the algorithm:

%\begin{definition}
%Let $\U_1$ and $\U_2$ be two partitions of $B_t$. We define $\mathcal{U}_\cup$ to be the partition of $B_t$ fulfilling that for any $U \in \mathcal{U}_1 \cup \mathcal{U}_2$ there exists a $U' \in \mathcal{U}_\cup$ with $U \subseteq U'$ and for any $U' \in \mathcal{U}_{\cup}$ and any partition of $\mathcal{U}_\cup$ into two nonempty subsets $S_1,S_2$ there exists a $U \in \mathcal{U}_1 \cup \mathcal{U}_2$ with $U \cap S_1 \neq \emptyset$ and $U \cap S_2 \neq \emptyset$. Intuitively $\mathcal{U}_{\cup}$ consist thus of the largest sets that can be formed by merging sets in $\mathcal{U}_1 \cup \mathcal{U}_2$ whose intersection is non-empty. We call $\U_\cup$ the \emph{union} of $\U_1$ and $\U_2$ and write $\U_1 \cup \U_2 = \U_\cup$. If for two partitions $\U_1 \cup \U_2 = \U_1$ then we say that $\U_1$ \emph{contains} $\U_2$. 
%\end{definition}

%\begin{definition}
%Let $S$ be a set and $\U_1$ and $\U_2$ two partitions of $S$. We say that $\U_1$ dominates $\U_2$ or $\U_2 \sqsubseteq \U_1$ iff for any $U \in \U_2$ there exists a $U' \in \U_1$ such that $U \subseteq U'$.

%\end{definition}

%\begin{definition}
    %Let $S$ be a set and $\U_1$ and $\U_2$ two partitions of $S$. We call the unique partition $\U_\sqcup$, that dominates both $\U_1$ and $\U_2$ and fulfills that for any $U \in U_\cup$ and any nonempty strict subset $S \subset U$ there exists a $U' \in \U_1 \cup\U_2$ such that $U'$ contains at least one element in $S$ and one in $U \setminus S$, the \emph{combination} of $\U_1$ and $\U_2$ and define $\U_1 \sqcup \U_2 = U_\sqcup$.
%\end{definition}

\begin{invariant}\label{inv:opt_center}
    After $t$ was considered by the algorithm it holds for all specifications $(a,\mathcal{U})$ for $B_t$ that $R_t[a, \U] = p_t(a, \U)$.
\end{invariant}

Now we need to describe how we can calculate $R_t$ for the different kinds of nodes. To simplify notation we sometimes treat the assignment function of a node $t$ as a subset of $B_t \times V$. This way we can modify it by simply adding or removing an assignment.

For a leaf $l$ the set $V_l = B_l$ is empty. Since we clearly do not need any clusters to cover this we can set $R_l(\emptyset, \emptyset) = 0$.

Let $t$ be a node introducing a vertex $v$. Let $s$ be the unique child of $t$. For any specification $(a',\U')$ for $B_t$ and any solution for $G_t$ fulfilling $(a',\U')$ we know that $v$ must form its own cluster as it has no adjacent edges in $G_t$. Thus if $p_t(a',\U') < \infty$ we know that $\{v\} \in \U'$ and that any solution fulfilling $(a',\U')$ can be transformed into a solution for $G_s$ that fulfills $a = a' - (v,a'(v))$ and $\U = \U' - \{v\}$ by simply removing $\{v\}$ which does not increase the number of finished clusters. As a result $p_t(a',\U') \geq p_s(a,\U)$. However we can only transform a solution fulfilling $(a,\U)$ to a solution fulfilling $(a',\U')$ if the assignment of $v$ fulfills certain properties: Obviously we need that $d(v,a'(v)) \leq r$. Additionally if any vertex in $B_s = B_t \setminus \{v\}$ was assigned to $v$ we know by Property~\ref{prop:sol_reachable} that $v$ needs to be assigned to itself. Otherwise $v$ may also be assigned to itself, but it can also be assigned to a center that is not contained in $G_t$ or a center that is contained in $G_t$ to which a vertex in $B_s$ also gets assigned. We will denote the latter set of potential centers as $A(B_s) = \{c \in V \mid \exists u \in B_s: a(u) = c\}$. If we assigned $v$ to a center in $V_s$ to which none of the other vertices in $B_t$ gets assigned Property~\ref{prop:sol_reachable} would again be violated. If the assignment of $v'$ fulfills this properties we know that $p_t(a',\U') = p_s(a,\U)$ as there is a one to one correspondence between solutions fulfilling the respective requirements. Following this logic we can now set the values $R_t[a',\U']$ accordingly using the entries of $R_s$. A more formal description of this procedure can be found in Algorithm~\ref{Alg:IntV_Center_Opt}

\begin{algorithm}
	\LinesNumbered
	\DontPrintSemicolon
	\SetKwInOut{Input}{input}
	\SetKwInOut{Output}{output}
    \textbf{Given:} Node $t$ introducing a vertex $v$  with child $s$\;
    \ForAll{ Specifications $(a,\U)$ for $B_s$}{
    \eIf{ $\exists u \in B_s: a(u) = v$ }{
        $R_t[a + (v,v),\U + \{v\}] = R_s[a,\U]$\;
    }{
    \ForAll{$u \in \left(A(B_s) \cup \{v\} \cup (V \setminus V_t)\right): d(v,u) \leq r$}{
        $R_t[a + (v,u),\U + \{v\}] = R_s[a,\U]$\;
    }
    }
    }
    \caption{Calculating $R$ for an Introduce Vertex node.}\label{Alg:IntV_Center_Opt}
\end{algorithm}

\begin{lemma}
    If for an introduce node $t$ Invariant~\ref{inv:opt_center} was fulfilled for $s$ before $t$ was considered it is also true for $t$ after Algorithm~\ref{Alg:IntV_Center_Opt} was executed on $t$.
\end{lemma}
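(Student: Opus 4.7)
The plan is to argue, for every specification $(a',\U')$ of $B_t$, that after Algorithm~\ref{Alg:IntV_Center_Opt} finishes we have $R_t[a',\U'] = p_t(a',\U')$. The guiding observation is that since $t$ only introduces the vertex $v$ and not any adjacent edge, $v$ is an isolated vertex in $G_t$. Hence in any solution fulfilling $(a',\U')$ the cluster containing $v$ must be $\{v\}$. Since $v \in B_t$, this forces $\{v\} \in \U'$ whenever $p_t(a',\U') < \infty$. Consequently, for any specification $(a',\U')$ with $\{v\} \notin \U'$ we have $p_t(a',\U') = \infty$, which matches the initial value of $R_t[a',\U']$ that the algorithm never overwrites.

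For the remaining specifications $(a',\U')$ with $\{v\} \in \U'$, I would set $u := a'(v)$ and define the induced specification $(a,\U)$ for $B_s$ by $a := a' \setminus \{(v,u)\}$ and $\U := \U' \setminus \{\{v\}\}$. The core claim is a bijection between solutions for $G_t$ fulfilling $(a',\U')$ and solutions for $G_s$ fulfilling $(a,\U)$, where the bijection simply adds or removes the singleton cluster $\{v\}$. Since $\{v\}$ is an unfinished cluster (it contains the $B_t$-vertex $v$), this map preserves the number of finished clusters, giving $p_t(a',\U') = p_s(a,\U)$ whenever both sides are finite.

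What needs to be verified carefully is that the bijection respects Properties~\ref{prop:sol_connected}–\ref{prop:sol_dist_finished} in both directions, and that the finiteness conditions correspond exactly to the validity conditions imposed by Algorithm~\ref{Alg:IntV_Center_Opt}. Connectivity (Property~\ref{prop:sol_connected}) is immediate since $\{v\}$ is trivially connected and removing it affects nothing in $G_s$. The distance bound in Property~\ref{prop:sol_dist_uf} for $\{v\}$ is $d(v,u) \leq r$, precisely the filter used by the algorithm. Property~\ref{prop:sol_reachable} requires analysis by case on $u$: if $u = v$ the singleton $\{v\}$ itself witnesses reachability; if $u \in V \setminus V_t$ no witness inside $V_t$ is needed; and if $u \in V_s$ we need some $U_i \in \U'$ with $a'(U_i) = u$ such that $u$ lies in that cluster, which forces $u \in A(B_s)$ (because the corresponding witnessing cluster cannot be $\{v\}$ itself, as $u \neq v$, so it must come from $\U$). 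Conversely, if some $u' \in B_s$ has $a(u') = v$, then Property~\ref{prop:sol_reachable} for $(a',\U')$ applied to the center $v \in V_t$ forces a $U_i \in \U'$ with $a'(U_i) = v$ and $v \in P_i$; since $v$ is isolated in $G_t$, its cluster is $\{v\}$, hence $u = v$. This matches the if–branch of the algorithm. Putting these case analyses together gives exactly the enumeration performed by the algorithm.

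The main obstacle is really bookkeeping: ensuring that every specification $(a',\U')$ with $\{v\} \in \U'$ is hit exactly once by the algorithm's loop (i.e.\ the map $(a,\U,u) \mapsto (a + (v,u),\U + \{v\})$ is a bijection onto the valid images), and that the $u$-values ruled out by the algorithm are exactly those that would make $p_t(a',\U') = \infty$ via a violation of Property~\ref{prop:sol_dist_uf} or~\ref{prop:sol_reachable}. Combining this with the bijection argument and the inductive hypothesis $R_s[a,\U] = p_s(a,\U)$ then yields $R_t[a',\U'] = p_s(a,\U) = p_t(a',\U')$ in the finite case and $R_t[a',\U'] = \infty = p_t(a',\U')$ otherwise, establishing Invariant~\ref{inv:opt_center} for $t$.
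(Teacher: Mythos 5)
Your proposal is correct and follows essentially the same route as the paper's own justification (given in the discussion preceding Algorithm~\ref{Alg:IntV_Center_Opt}): $v$ is isolated in $G_t$, so $\{v\}$ must be a singleton set of $\U'$, solutions for $G_t$ and $G_s$ correspond bijectively by adding or removing the unfinished singleton cluster, and the allowed assignments of $v$ (the $d(v,u)\leq r$ filter, the forced self-assignment when some vertex of $B_s$ is assigned to $v$, and otherwise $u \in A(B_s)\cup\{v\}\cup(V\setminus V_t)$ via Property~\ref{prop:sol_reachable}) match the algorithm's case distinction exactly. The remaining bookkeeping you defer is routine and matches the paper's intent.
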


If the node $t$ introduces an edge $\{v,w\}$ and $s$ denotes again the child of $t$ then for any specification $(a,\U)$ for $B_s$, every solution $\Psc$ for the graph $G_s$ fulfilling $(a,\U)$ is also a solution on $G_t$ fulfilling $(a,\U)$. Additionally the new edge may also be used to connect some clusters $P, P' \in \Psc$. Given that $v,w \in B_t$ we know that both of these clusters need to be unfinished. We can even conclude that if $U_v$ denotes the set in $\U$ containing $v$ and $U_w$ the one containing $w$ that $P$ and $P'$ each need to contain one of these two sets. Obviously such a merge can only happen if $U_v \neq U_w$ and since we want that the clusters are compatible with the assignment function we need to require that $a(U_v) = a(U_w)$. If these conditions are fulfilled however then it is easy to verify that if we merge $P$ and $P'$ in $\Psc$ that the result is a solution for $G_t$ fulfilling $(a,\U')$ where $\U'$ is the partition of $B_t$ if we merge the sets $U_v$ and $U_w$ in $\U$. Since we replaced two unfinished clusters by a single one the number of finished clusters stays the same and thus $p_t(a,\U') \leq p_s(a,\U)$. Following this logic we can iterate over all specifications $(a,\U)$ for $B_s$ and update $R_t[a,\U]$ and $R_t[a,\U']$ (if the sets containing $v$ and $w$ can be merged) accordingly. A more formal description is given in Algorithm~\ref{Alg:IntE_Center_Opt}.

\begin{algorithm}
	\LinesNumbered
	\DontPrintSemicolon
	\SetKwInOut{Input}{input}
	\SetKwInOut{Output}{output}
    \textbf{Given:} Node $t$ introducing a edge $\{v,w\}$  with child $s$\;
    \ForAll{ Specifications $(a,\U)$ for $B_s$}{
    $R_t[a,\U] = \min(R_t[a,\U], R_s[a,\U])$\;
    \If{$a(v) = a(w)$}{
        Let $U_v$ be the set in $\U$ containing $v$ and $U_w$ the set containing $w$\;
        \If{$U_v \neq U_w$}{
            Let $\U'$ be the partition formed by merging $U_v$ and $U_w$ in $\U'$\;
            $R_t[a,\U'] = \min(R_t[a,\U'], R_s[a,\U])$\;
        }
    }
    }
    \caption{Calculating $R$ for an Introduce Edge node.}\label{Alg:IntE_Center_Opt}
\end{algorithm}

\begin{figure}
\centering
    \includegraphics[width= 0.5 \textwidth]{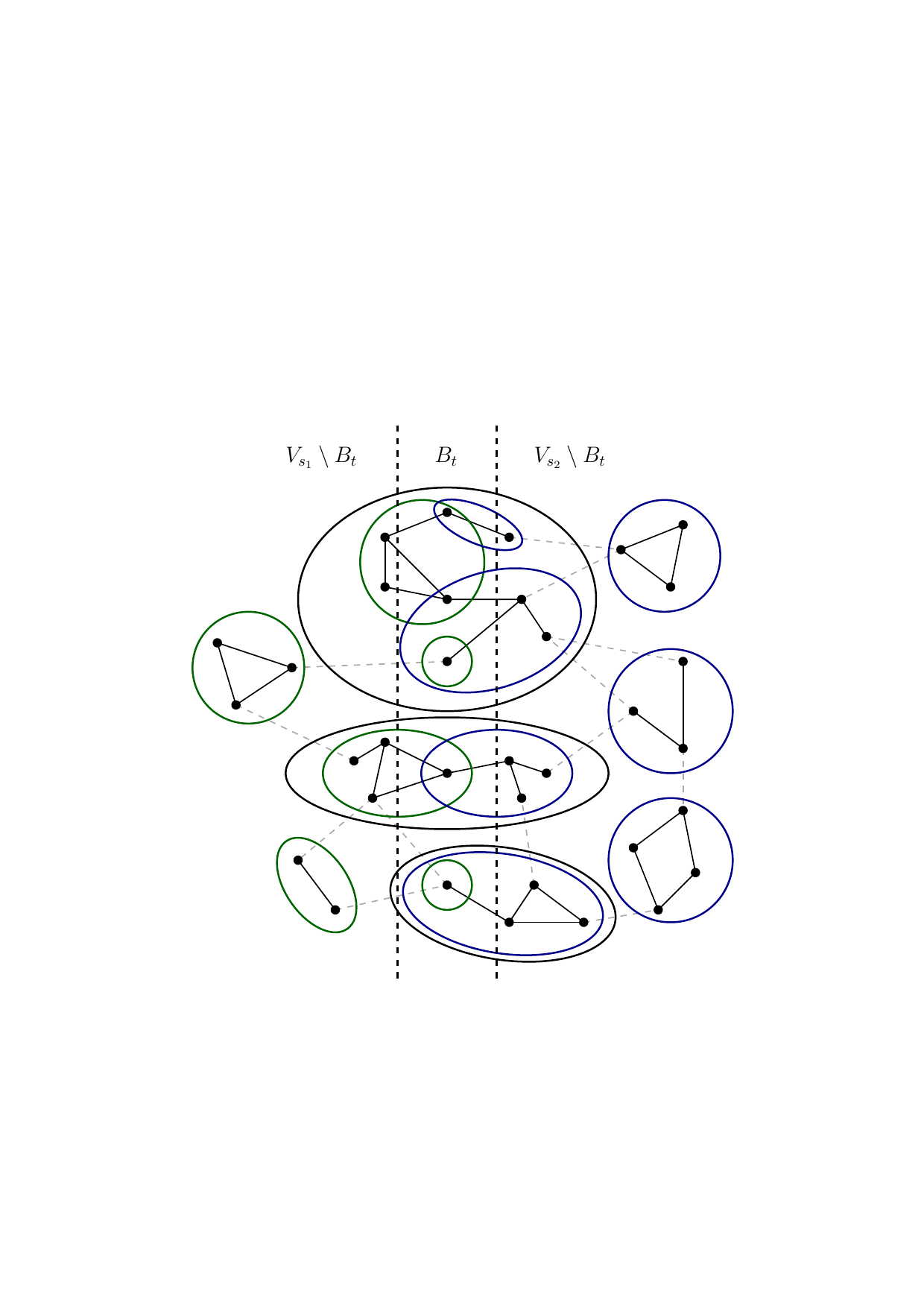}
    \caption{A depictions how a solution for $G_t$ (corresponding to a node $t$ joining two children $s_1$ and $s_2$) can be split up into two solutions for $G_{s_1}$ and $G_{s_2}$. The green ellipsoids correspond to clusters in $G_{s_1}$, the blue ones to clusters in $G_{s_2}$ and the black ones correspond to the unfinished clusters of the solution for $G_t$.}\label{fig:join_multiple_clusters}
\end{figure}

Let $t$ be a node that joins two children $s_1$ and $s_2$. Let $(a,\U)$ be a specification for $B_t$ and $\Psc$ an arbitrary solution fulfilling it. Given that the only common vertices of $G_{s_1}$ and $G_{s_2}$ are in $B_t$ every finished cluster in $\Psc$ is contained in its entirety in one of these two graphs. Additionally, for any unfinished cluster $P \in \Psc$ the vertices in $P \cap B_t$ separate the vertices contained in $V_{s_1}\setminus B_t$ and those in $V_{s_2}\setminus B_t$. $P$ can thus be split up into connected components in $G_{s_1}$ and $G_{s_2}$ that each contain at least one vertex in $B_t$ and thus correspond to unfinished clusters. Figure~\ref{fig:join_multiple_clusters} depicts an example how the clusters of the solution for $G_t$ are split up. Similarly one can split up the set $U \in \U$ that is contained in $P$ into its subsets that are contained in the connected components of $P \cap G_{s_1}$ and $P \cap G_{s_2}$, respectively.
By doing this for every unfinished cluster in $\Psc$ we end up with a partition $\Psc_1$ of $G_{s_1}$, a partition $\Psc_2$ of $G_{s_2}$ (the unfinished clusters are just added to the solution of the graph containing them) and two partitions $\U_1$ and $\U_2$ of $B_t = B_{s_1} = B_{s_2}$ such that $\Psc_1$ is a solution for $G_{s_1}$ fulfilling $(a,\U_1)$ and $\Psc_2$ for $G_{s_2}$ fulfilling $(a,\U_2)$, as we will show later. One may note that these two solutions together contain as many unfinished clusters as $\Psc$. Additionally if we join these two solutions they are connecting the same vertices together as the solution $\Psc$. To be more precise let $a$ and $b$ be two vertices in a cluster $P \in \Psc$. Then the path between them can be split up at the vertices contained in $B_t$ to obtain segments that are in their entirety contained in $G_{s_1}$ or $G_{s_2}$. These segments (including the vertices in $B_t$ surrounding them) are as a result contained in a set in $\Psc_1 \cup \Psc_2$ while consecutive segments contain at least one shared vertex in $B_t$. Thus if we consecutively merge the sets in $\Psc_1 \cup \Psc_2$ with shared vertices we end up with the solution $\Psc$. Since these shared vertices are always contained in $B_t$ the same is true for $\U$ and $\U_1 \cup \U_2$. Because of this we will denote $\U$ as the so called \emph{join} of $\U_1$ and $\U_2$ or more formally:

\begin{definition}
Let $\U_1$ and $\U_2$ be two partitions of $B_t$. We define $\mathcal{U}_\sqcup$ to be the partition of $B_t$ fulfilling that for any $U \in \mathcal{U}_1 \cup \mathcal{U}_2$ there exists a $U' \in \mathcal{U}_\sqcup$ with $U \subseteq U'$ and for any $U' \in \mathcal{U}_{\sqcup}$ and any partition of $U'$ into two nonempty subsets $S_1,S_2$ there exists a $U \in \mathcal{U}_1 \cup \mathcal{U}_2$ with $U \cap S_1 \neq \emptyset$ and $U \cap S_2 \neq \emptyset$. Intuitively $\mathcal{U}_{\sqcup}$ consist thus of the largest sets that can be formed by merging sets in $\mathcal{U}_1 \cup \mathcal{U}_2$ whose intersection is non-empty. We call $\U_\sqcup$ the \emph{join} of $\U_1$ and $\U_2$ and write $\U_1 \sqcup \U_2 = \U_\sqcup$. 
%If for two partitions $\U_1 \sqcup \U_2 = \U_1$ then we say that $\U_1$ \emph{contains} $\U_2$. 
\end{definition}

By our argument above we may follow that for any specifications $(a,\U)$ for $B_t$ we can always find two compatible partitions $\U_1$ and $\U_2$  with $\U_1 \sqcup \U_2 = \U$ such that $p_{s_1}(a,\U_1)+ p_{s_2}(a,\U_2) \leq p_t(a,\U)$.

In the other direction let  $a: B_t \rightarrow V$ be an assignment function and $U_1$ and $U_2$ partitions of $B_t$ that are compatible with $a$. Then $\U = \U_1 \sqcup \U_2 $ is also compatible with $a$ as we only merge sets that contain a common vertex in $B_t$ and are thus are assigned to the same center already. Similarly as above we can also argue that all solutions for $G_{s_1}$ and $G_{s_2}$ that fulfill $(a,\U_1)$ and $(a,\U_2)$, respectively, can be combined to a solution of $G_t$ that fulfills $(a,\U)$ which implies that $p_{s_1}(a,\U_1)+ p_{s_2}(a,\U_2) \geq p_t(a,\U)$. 

Following this logic we can calculate the values of $R_t$ by going over all assignment functions $a: B_t \rightarrow V$ and all pairs of compatible partitions $\U_1,\U_2$ and update the value of $\U_1 \sqcup \U_2$ accordingly (see Algorithm~\ref{Alg:Join_Center_Opt}).

% One may note that since the sets that get merged always contain a common vertex that we are only merging sets assigned to the same center. Thus the resulting partition will still be compatible with the assignment function:

% \begin{observation}
%     For any node $t$, any set $S$ and any assignment function $a: B_t \rightarrow S$ it holds that if two partitions $\U_1$ and $\U_2$ of $B_t$ are compatible with $a$ the same also holds for $\U_1 \sqcup \U_2$.
% \end{observation}

\begin{algorithm}
	\LinesNumbered
	\DontPrintSemicolon
	\SetKwInOut{Input}{input}
	\SetKwInOut{Output}{output}
    \textbf{Given:} Node $t$ with children $s_1$ and $s_2$\;
    \ForAll{ Valid Assignments $a:B_t \rightarrow V$}{
    \ForAll{ Pairs of compatible partitions $\U_1,\U_2$ of $B_t$}{
    $R_t[a, \U_1 \sqcup \U_2] = \min\left(R_t[a, \U_1 \sqcup \U_2], R_{s_1}[a,\U_1] + R_{s_2}[a,\U_2]\right)$\;
    }
    }
    \caption{Calculating $R$ for a Join node.}\label{Alg:Join_Center_Opt}
\end{algorithm}

In the following we give a more formal proof of the correctness of this procedure:

\begin{lemma}\label{lem:join_center}
    If for a join node $t$ Invariant~\ref{inv:opt_center} was fulfilled for $s_1$ and $s_2$ before $t$ was considered it is also true for $t$ after Algorithm~\ref{Alg:Join_Center_Opt} was executed on $t$.
\end{lemma}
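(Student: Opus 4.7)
The plan is to prove the two inequalities $R_t[a,\U]\le p_t(a,\U)$ and $R_t[a,\U]\ge p_t(a,\U)$ separately, for every specification $(a,\U)$ of $B_t$. The key structural fact driving both directions is that $B_t = B_{s_1} = B_{s_2}$ is a vertex separator between $V_{s_1}\setminus B_t$ and $V_{s_2}\setminus B_t$ in $G_t$, and that the edge set $E_t$ is the disjoint union of $E_{s_1}$ and $E_{s_2}$ (since every edge is introduced exactly once).

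For the direction $R_t[a,\U]\le p_t(a,\U)$, I would start from an optimal solution $\Psc$ for $G_t$ fulfilling $(a,\U)$ with $p_t(a,\U)$ finished clusters, and split it into partial solutions on $G_{s_1}$ and $G_{s_2}$ along the lines already sketched before the lemma: every finished cluster of $\Psc$ lives entirely in one $G_{s_i}$, and every unfinished cluster $P$ decomposes in $G_{s_i}$ into the connected components that meet $B_t$. For each $i\in\{1,2\}$, let $\Psc_i$ collect the finished clusters of $\Psc$ lying in $G_{s_i}$ together with these connected components, and let $\U_i$ be the partition of $B_t$ whose blocks are the $B_t$-traces of those components. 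I would then verify the four properties of Definition~\ref{def:sol_spec_center}: connectivity is immediate by construction; the distance and reachability properties (\ref{prop:sol_dist_uf}, \ref{prop:sol_reachable}, \ref{prop:sol_dist_finished}) follow because each split-off piece inherits the same center from $a$ and the center, if it lies in $V_{s_i}$, remains accessible via the same block whose image under $a$ equals $c$. Compatibility of $\U_i$ with $a$ is automatic since blocks are contained in original blocks of $\U$. By the definition of join one checks $\U_1\sqcup\U_2=\U$, so the pair $(\U_1,\U_2)$ is considered in the loop of Algorithm~\ref{Alg:Join_Center_Opt}. Finally, counting finished clusters gives $p_{s_1}(a,\U_1)+p_{s_2}(a,\U_2)\le p_t(a,\U)$, which by the induction hypothesis Invariant~\ref{inv:opt_center} on $s_1,s_2$ equals $R_{s_1}[a,\U_1]+R_{s_2}[a,\U_2]$, so the algorithm updates $R_t[a,\U]$ to a value at most $p_t(a,\U)$.

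For the reverse direction $R_t[a,\U]\ge p_t(a,\U)$, I take any update performed by the algorithm: an assignment $a$ and compatible partitions $\U_1,\U_2$ of $B_t$ with $\U_1\sqcup\U_2=\U$, and optimal solutions $\Psc_1$ for $(a,\U_1)$ on $G_{s_1}$ and $\Psc_2$ for $(a,\U_2)$ on $G_{s_2}$ (whose finished-cluster counts equal $R_{s_{i}}[a,\U_i]$ by the invariant). I would construct a candidate solution $\Psc$ for $G_t$ as follows: the finished clusters of $\Psc_1$ and $\Psc_2$ stay untouched, while the unfinished clusters of $\Psc_1\cup\Psc_2$ are merged along shared vertices in $B_t$ using exactly the union–find process that realises the join $\U_1\sqcup\U_2$. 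The resulting clusters are connected because merging is driven by shared $B_t$-vertices and because each $G_{s_i}$-side piece was already connected. Compatibility with $a$ carries over from both sides. For property~\ref{prop:sol_reachable}, any center $c\in V_t$ appearing as $a(v)$ for some $v\in B_t$ lies in some $V_{s_i}$ and is witnessed there by $\Psc_i$; the block of $\U_i$ carrying it survives into a block of $\U=\U_1\sqcup\U_2$ after merging. The distance properties~\ref{prop:sol_dist_uf} and~\ref{prop:sol_dist_finished} are preserved since merging only unites clusters that already share the same assigned center. The number of finished clusters of $\Psc$ equals $R_{s_1}[a,\U_1]+R_{s_2}[a,\U_2]$, so $p_t(a,\U)\le R_{s_1}[a,\U_1]+R_{s_2}[a,\U_2]$, matching exactly the value written into $R_t[a,\U]$.

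The two steps I expect to be the main obstacle are the verification of property~\ref{prop:sol_reachable} in both directions, and the bookkeeping showing that the split components indeed form a partition of $B_t$ whose join equals $\U$. The first requires a careful case distinction depending on whether the center $c$ lies in $V_{s_1}\setminus B_t$, $V_{s_2}\setminus B_t$, or $B_t$ itself, since in the third case the witnessing block on the $s_i$-side may contain only $B_t$-vertices and no copy of $c$. The second requires arguing that the merging operation defined on $B_t$-traces really matches the combinatorial definition of $\U_1\sqcup\U_2$; this is where one must invoke connectivity of unfinished clusters in $G_t$ through $B_t$ to ensure that no block is missed and no spurious block is created. Once those two points are handled, combining both inequalities with the initial value $R_t[a,\U]=\infty$ and the fact that the algorithm takes a minimum yields $R_t[a,\U]=p_t(a,\U)$ as required.
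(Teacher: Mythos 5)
Your plan follows the paper's proof essentially step for step: for $R_t[a,\U]\le p_t(a,\U)$ you split an optimal solution for $G_t$ along the separator $B_t$ into component-wise partial solutions on $G_{s_1}$ and $G_{s_2}$ whose $B_t$-traces join to $\U$, and for the reverse inequality you merge the two children's optimal solutions along shared $B_t$-vertices, exactly as the paper does, with the same delicate points (Property~\ref{prop:sol_reachable} and the identity $\U_1\sqcup\U_2=\U$) identified and handled in the same way. This is correct and takes the same route as the paper.
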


%\Jan{Should we maybe only talk about the value of $p$ at this point and talk about the algorithm in a corollary?}

\begin{proof}
    Let $(a,\U)$ with $\U = \{U_1,...,U_j\}$ be a specification for $B_t$. To show that $R_t(a,\U) \leq p_t(a,\U)$ let $\Psc = \{P_1,...,P_{j+k'}\}$ be a solution for $G_t$ fulfilling $(a,\U)$ and requiring $k' = p_t(a,\U)$ clusters. Let $\mathcal{F}^1 = \{F_1^1,...,F_{k_1}^1\}$ be the finished clusters in $P_{j+1},\ldots,P_{j+k'}$ which are subsets of $V_{s_1}$ and $\mathcal{F}^2 = \{F_1^2,...,F_{k_2}^2\}$ the subsets of $V_{s_2}$. One may note that each cluster $P_i$ with $i > j$ is contained in exactly one of these two sets given that $P_i \cap B_t = \emptyset$ and $B_t = V_{s_1} \cap V_{s_2}$. Thus $k_1 + k_2 = k'$.

    For every $i \in [j]$  let $\Psc_i^1 = \{P_{i,1}^1,\ldots,P_{i,l_i}^1\}$ be the connected components of $P_i \cap V_{s_1}$ when we only consider edges in $E_{s_1}$. Similarly we define $\U_i^1 = \{U_{i,1}^1,...,U_{i,l_i}^1\}$ where for any $h \in \{1,\ldots,l_i\}$ we define $U_{i,h}^1 = P_{i,h}^1 \cap B_t$. One may note that $U_{i,h}^1 \neq \emptyset$ as either $P_{i,h}^1= P_i$ or $P_{i,h}^1$ is connected with the remainder of $P_i$ in $G_t$ which means that it had at least on adjacent edge in $E_{s_2}$ which can only happen if it contains a vertex in $B_t$. Furthermore as $\Psc_i^l$ is a partition of $P_i \cap V_{s_1}$ we have that $\U_i^1$ is a partition of $U_i$. Using this argument we may conclude that
    \begin{equation*}
        \U^1= \bigcup_{i = 1,\ldots,j} \U_i^1
    \end{equation*}
    is a partition of $B_t$. Furthermore one can verify that:
    \begin{equation*}
        \Psc^1 = \bigcup_{i = 1,\ldots,j} \Psc_i^1 \cup \mathcal{F}_1
    \end{equation*}
    is a solution for $G_{s_1}$ fulfilling $(a,\U^1)$. The most critical part is that Property~\eqref{prop:sol_reachable} is still fulfilled: Since $\Psc$ fulfills $(a,\U)$ we know that for any $c \in V_t$ with $c = a(u)$ for a vertex $u \in B_t$ there existed an $i \leq j$ with $a(U_i) = c$ and $c \in P_i$. If $c \in V_{s_1} \subseteq V_t$ then there exists a $h$ such that $c \in P_{i,h}^1$ and since $U_{i,h}^1 \subseteq U_i$ and $a(U_i) = c$ we also have that $a(U_{i,h}^1) = c$. At the same time if $c \not\in V_{s_1}$ it does not influence Property~\eqref{prop:sol_reachable} anymore. Thus Property~\eqref{prop:sol_reachable} is fulfilled. The solutions requires exactly $k_1$ clusters as only the clusters in $\mathcal{F}_1$ have no intersection with $B_t$. Thus $p_{s_1}(a,\U^1) \leq k_1$.

    We can also define for $G_{s_2}$ a partition $\U^2$ of $B_t$ accordingly and obtain that $p_{s_2}(a,\U^2) \leq k_2$. Now we will show that $\U^1 \sqcup \U^2 = \U$. Suppose for an $U_i \in \U$ there exists an $S \subset U_i$ such that there exists no set $U' \in \left(\U_i^1 \cup \U_i^1\right)$ with $S \cap U'\neq \emptyset$ and $S \cap U'\neq S$. Then this does also imply that there exists no set $P \in \left(\Psc_i^1 \cup \Psc_i^2\right)$ that contains some vertices in $S$ and some vertices outside of $S$. However this means that neither in  $G_{s_1}$ nor in $G_{s_2}$ there exists a path between vertices in $S$ and vertices in $P_i\setminus S$ (as otherwise the vertices would be added to the same connected component in $\Psc_i^1$ and $\Psc_i^2$, respectively). Since the only common vertices of $G_{s_1}$ and $G_{s_2}$ are in $B_t$ this also means that there exists no path between vertices in $S$ and $P_1 \setminus S$ in $G_t$ which violates that $P_i$ is a connected component in $G_t$. At the same time every set in $\U_i^1 \cup \U_i^2$ is completely contained in $U_i$ while all other sets in $\U^1 \cup \U^2$ are entirely contained in another subset in $\U$. Thus we may conclude that $U_i \in \U^1 \sqcup \U^2$. By applying this logic for any $U_i \in \mathcal{U}$ we obtain $\U^1 \sqcup \U^2 = \U$. Thus $R_t[a,\U]$ is at most $ R_{s_1}[a,\U^1] + R_{s_2}[a,\U^2]$ which by induction is precisely $p_{s_1}(a,\U^1) + p_{s_2}(a,\U^2) $ which is bounded by $k_1 + k_2 = k' = p_t(a,\U)$.

    To show that $R_t[a,\U] \geq p_t(a,\U)$ consider the iteration in which the final value of $R_t[a,\U]$ is set. In this iteration the algorithm considers two partitions $\U_1$ and $\U_2$ of $B_t$ compatible with $a$ such that $\U_1 \sqcup \U_2 = \U$. If we denote $k_1 = R_{s_1}[a,\U_1] = p_{s_1}(a,\U_1)$ and $k_2 = R_{s_2}[a,\U_2]$ then $R_t[a,\U] = k_1 + k_2$. Let $\Psc_1$ and $\Psc_2$ be solutions for $G_{s_1}$ and $G_{s_2}$ that fulfill $(a,\U_1)$ and $(a,\U_2)$, respectively, and require exactly $k_1$ and $k_2$ clusters. One may note that only the vertices in $B_t$ appear both in $\Psc_1$ and $\Psc_2$. Thus the finished clusters of the two solutions are pairwise disjoint. Additionally if we merge the unfinished clusters in $\Psc_1$ and $\Psc_2$ that share a common vertex in $B_t$ we end up with a partition $\Psc$ of $V_t$ where it is easy to verify that every set forms a connected component and that there exists a one to one correspondence between sets $U \in \U$ and the unfinished clusters $P \in \Psc$ that contain the vertices in the respective set $U$ (as in both cases the sets with common vertices in $B_t$ got merged). As a result we obtain that $\Psc$ is a solution for $G_t$ that fulfills $(a,\U)$ and contains exactly $k_1+ k_2$ finished clusters, which implies $p_t(a,\U) \leq k_1 + k_2 = R_t[a,\U]$. Thus the lemma holds. 
\end{proof}

Let $t$ be a Forget node, $v$ the vertex that gets removed and $s$ the child of $t$. We will again consider how a solution for $G_s$ can be transformed into a solution for $G_t$. Let $(a,\U)$ be a specification for $B_s$. Let $U_v \in \U$ be the set containing $v$. We will distinguish whether or not $v$ is the only element in this set. If yes then we know for any solution $\Psc$ for $G_s$ fulfilling $(a,\U)$ that the cluster $P_v \in \Psc$ corresponding to $U_v$ contains no other vertex from $B_s$ and thus will be finished if we forget $v$. Under this circumstances we have to ensure that no other vertex $u$ in $B_t$ is assigned to $a(v)$ as otherwise the cluster containing $u$ and $P_v$ still need to be connected which cannot happen anymore. Additionally we have to make sure that $a(v)$ is contained in $P_v$ as we require for any finished cluster that it contains its center. If no other vertex in $B_t$ is assigned to $a(v)$ this is equivalent to $a(v) \in V_s$ as Property~\ref{prop:sol_reachable} requires that $a(v)$ is contained in one of the open clusters assigned to it if $a(v) \in V_s$. So if both of these requirements are fulfilled we know that $\Psc$ is also a solution for $G_t$ that fulfills $a' = a - (v,a(v))$ and $\U' = \U - \{v\} $ and requires one additional cluster, as $P_v$ gets finished. We thus know $p_t(a',\U') \leq p_s(a,\U) + 1$.

If $U_v$ contains another vertex than $v$ the respective cluster $P_v$ still stays unfinished if we forget $v$. As a result $\Psc$ always forms a solution for $G_t$ that fulfills $(a',\U')$, where in $\U'$ we replace $U_v$ by $U_v - v$. As a result $p_t(a',\U') \leq p_s(a,\U)$.

In a similar fashion we can also transform a solution for $G_t$ into a solution for $G_s$. Thus we can calculate the values of $R_t$ by going over all specifications for $B_s$ and reduce the corresponding entries in $R_t$ accordingly. A more formal description of this procedure can be found in Algorithm~\ref{Alg:Forget_Center_Opt}.

\begin{algorithm}
	\LinesNumbered
	\DontPrintSemicolon
	\SetKwInOut{Input}{input}
	\SetKwInOut{Output}{output}
    \textbf{Given:} Node $t$ forgetting a vertex $v$  with child $s$\;
    \ForAll{ Specifications $(a,\U)$ for $B_s$}{
        Let $U_v$ be the set in $\U$ containing $v$\;
        $a' = a - (v,a(v))$\;
    \eIf{$U_v = \{v\}$}{
        \If{$a(v) \in V_t$ and $U_v$ is the only set in $\U$ assigned to $a(v)$}{
            $\U' = \U - \{v\}$\;
            $R_t\left[a',\U'\right] = \min(R_t\left[a',\U'\right], R_s[a,\U]+1)$\;
        }
        \tcp{Otherwise the specification is not feasible}
    }{
    Define $\U'$ as the partition on $B_t$ when $U_v$ gets replaced by $U_v -v$ in $\U$\;
    $R_t\left[a',\U'\right] = \min(R_t\left[a',\U'\right], R_s[a,\U])$\;
    }
    }
    \caption{Calculating $R$ for a Forget node.}\label{Alg:Forget_Center_Opt}
\end{algorithm}

In the following we present a more detailed proof of the correctness of this procedure:

\begin{lemma}
    If for a forget node $t$ Invariant~\ref{inv:opt_center} was fulfilled for its child $s$ before $t$ was considered it is also true for $t$ after Algorithm~\ref{Alg:Forget_Center_Opt} was executed on $t$.
\end{lemma}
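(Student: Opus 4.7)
The plan is to prove both directions $R_t[a',\U'] \leq p_t(a',\U')$ and $R_t[a',\U'] \geq p_t(a',\U')$ for every specification $(a',\U')$ of $B_t$, exploiting the fact that $G_t = G_s$ so that only the bookkeeping of which clusters count as ``finished'' changes between $s$ and $t$. The correspondence hinges on whether the cluster containing the forgotten vertex $v$ has additional vertices in $B_t$, which is exactly the case distinction the algorithm makes.

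For the upper bound, I would fix an optimal solution $\Psc$ for $G_t$ fulfilling $(a',\U')$ with $k' = p_t(a',\U')$ finished clusters and let $P_v \in \Psc$ be the cluster containing $v$. If $P_v \cap B_t = \emptyset$, then $P_v$ is already finished at $t$; I would extend to a specification $(a,\U)$ for $B_s$ by setting $a(v)$ equal to the center of $P_v$ (which lies in $V_s = V_t$) and by adding $\{v\}$ to $\U'$ as a new singleton class. Then $\Psc$ is a solution for $G_s$ fulfilling $(a,\U)$ in which $P_v$ is unfinished, so it requires only $k'-1$ finished clusters; the algorithm's ``$U_v = \{v\}$'' branch is triggered because $a(v) \in V_t$ and by construction no other set is assigned to $a(v)$, yielding $R_t[a',\U'] \leq R_s[a,\U] + 1 \leq k'$. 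If instead $P_v \cap B_t \neq \emptyset$, I would let $U_v' \in \U'$ be the class containing $P_v \cap B_t$ and define $(a,\U)$ by setting $a(v) = a'(U_v')$ and inserting $v$ into $U_v'$; then $\Psc$ fulfills $(a,\U)$ at $s$ with the same $k'$ finished clusters and the algorithm's other branch writes $R_t[a',\U'] \leq R_s[a,\U] \leq k'$.

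For the lower bound, I would analyze an arbitrary update: any iteration that modifies $R_t[a',\U']$ uses some $R_s[a,\U]$, which by the inductive hypothesis equals $p_s(a,\U)$, so fix an optimal solution $\Psc$ for $G_s$ fulfilling $(a,\U)$. In the singleton branch, $U_v = \{v\}$ forces $P_v \cap (B_s \setminus \{v\}) = \emptyset$, hence $P_v \cap B_t = \emptyset$, so $P_v$ becomes finished at $t$; the algorithm's requirement $a(v) \in V_t$ combined with Property~\ref{prop:sol_reachable} at $s$ then guarantees $a(v) \in P_v$, so Property~\ref{prop:sol_dist_finished} for the newly finished cluster holds, and the total finished count at $t$ is $p_s(a,\U)+1$. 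In the non-singleton branch, $P_v$ still contains a vertex of $B_t$ and remains unfinished; the bijection between sets in $\U' = (\U \setminus \{U_v\}) \cup \{U_v \setminus \{v\}\}$ and unfinished clusters is preserved, and the finished count is unchanged. Either way $p_t(a',\U')$ is at most the value written by the algorithm, giving $p_t(a',\U') \leq R_t[a',\U']$.

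The main obstacle I anticipate is the careful verification around the moment a cluster is promoted to finished: I need to confirm that the condition $a(v) \in V_t$ precisely captures when the center of $P_v$ sits inside $V_t$ (so the new finished cluster legitimately owns a feasible center), and that the uniqueness requirement ``$U_v$ is the only set in $\U$ assigned to $a(v)$'' rules out the pathological case in which another class at $s$ would still need $a(v) \in V_t$ to later witness Property~\ref{prop:sol_reachable} but becomes orphaned at $t$. Once these properties are traced through, the remaining cluster counting and connectivity arguments are immediate since $G_t = G_s$.
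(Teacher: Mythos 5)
Your proposal is correct and follows essentially the same route as the paper's proof: the same case split on whether the cluster containing the forgotten vertex $v$ still meets $B_t$, the same constructions translating an optimal solution at $t$ into a specification at $s$ (choosing a Property~\ref{prop:sol_dist_finished} witness as $a(v)$ in the finished case) and vice versa, and the same use of Property~\ref{prop:sol_reachable} together with the algorithm's two conditions to justify the ``$+1$'' when a cluster is promoted to finished. The one point you label ``by construction'' (that no other bag vertex is assigned to $c_v$) in fact follows from Property~\ref{prop:sol_reachable} at $t$ plus disjointness of the clusters, exactly the level of detail at which the paper itself asserts it, so no substantive gap remains.
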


\begin{proof}
    Let $(a',\U')$ be a specification for $B_t$. Let $v$ the vertex that gets forgotten in $t$. First we prove that $R_t[a',\U'] \leq p_t(a',\U')$. Since this is obviously fulfilled if $k' = p_t(a,\U) = \infty$ we only consider the case that $k' \in \mathbb{N}$. Let $\Psc$ be a solution for $G_t$ that fulfills $(a',\U')$ and requires $k'$ clusters. We consider the cluster $P_v \in \Psc$ that contains the vertex $v$. If the cluster is not finished, i.e.\ there exists at least one vertex $u \in B_t \cap P_v$ let $U_v$ be the set in $\U'$ corresponding to $P_v$. If we define $\U$ as the partition on $B_s$ if we add $v$ to $U_v$ and $a:B_s \rightarrow V$ as the assignment function  on $B_s$ where we also set $a(v) = a(U_v)$ it is easy to verify that $\Psc$ is also a solution for $G_s = G_t$ fulfilling $(a,\U)$. Thus by our assumption $R_s[a,\U] = p_s(a,\U) \leq k'$. In the iteration of Algorithm~\ref{Alg:Forget_Center_Opt} in which $a$ and $\U$ get considered the value of $R_t[a',\U']$ gets reduced to at most $R_s[a,\U]$ which in turn ensures that the final value of it will be at most $k' = p_t(a,\U)$.
    
    If $P_v \cap B_t = \emptyset$ let $c_v \in P_v$ be a vertex in $P_v$ such that every vertex in the cluster has distance at most $r$ to it (such a vertex exists because of Property~\eqref{prop:sol_dist_finished}). Let $a: B_s \rightarrow V$ be the function resulting from $a'$ by adding an assignment from $v$ to $c_v$ and let $\U = \U' + \{v\}$. Again we may verify that $\Psc$ is a solution on $G_s$ fulfilling $(a,\U)$. Since the cluster that contains $v$ is not finished when we consider $s$ this solution requires $k' - 1$ clusters. Thus $R_s[a,\U] = p_s[a,\U] \leq k' - 1$. As $c_v$ is contained in $G_s$ and none of the other vertices in $B_t$ are assigned to $c_v$ Algorithm~\ref{Alg:Forget_Center_Opt} thus reduces $R_t[a',\U']$ to $R_s[a,\U] + 1 \leq k' = p_t(a',\U')$ when considering $a$ and $\U$.

    To show that $R_t[a',\U'] \geq p_t(a',\U')$ we consider the moment in which the final value of $R_t[a',\U']$ was set. Let $(a,\U)$ be the specification for $B_c$ considered in this iteration. There are two cases that can happen: If the set $U_v \in \U$ containing $v$ contains further elements then the only difference between $\U'$ and $\U$ is that $v$ gets removed from $\U'$ and the algorithm sets $R_t[a',\U'] = R_s[a,\U]$. It is easy to verify that any solution for $G_s$ fulfilling $(a,\U)$ is also a solution for $G_t$ that fulfills $(a',\U')$. Thus we obtain $R_s[a,\U] = p_s(a,\U) \geq p_t[a',\U']$ which implies also $R_t[a',\U']\geq p_t(a',\U')$.

    If $U_v = \{v\}$ then $\U' = \U - \{v\}$ and the algorithm sets $R_t\left[a',\U'\right] = R_s[a,\U] + 1$ which by our assumption is equal to $p_s(a,\U) + 1$. Furthermore we know that $a(v) \in V_s$ and that no other vertex in $B_t$ is assigned to $a(v)$ since otherwise the algorithm would not have changed the value. Let us consider an arbitrary solution $\Psc$ that fulfills $(a,\U)$ and let $P_v$ be the set containing $v$. Given that $v$ is the only vertex in $B_s$ assigned to $a(v)$ and $a(v) \in V_s$, we know by Property~\eqref{prop:sol_reachable} that $a(v)$ is contained in $P_v$. Thus $P_v$ forms a feasible finished cluster if we forget $v$ and $\Psc$ is a solution for $G_t$ that fulfills $(a',\U')$. However since $P_v$ is now a finished cluster the number of required clusters increases by $1$. Since this holds for all solutions we have that $p_t(a',\U') \leq p_s(a,\U) + 1 = R_s[a,\U] + 1 = R_t\left[a',\U'\right]$. Thus the lemma holds.
\end{proof}

By applying the previous lemmas inductively we obtain that the algorithm calculates the values of $R_t$ correctly for every node $t \in V_T$. We can bound the running time to do this as follows:

\begin{lemma}\label{lem:center_dynamic_r}
    For any node $t$ the values of $R_t$ can be calculated in time $O\left(n^{w + 1}(b_{w + 1})^2p(w)\right)$, where $w$ is the treewidth, $b_{w+1}$ is the $(w+1)$-th.\ Bell-number and $p$ is a polynomial function.
\end{lemma}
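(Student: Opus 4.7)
The plan is to handle each node type in turn and bound (number of specifications processed) times (work per specification), relying on the fact that for any bag $B_t$ with $|B_t| \le w+1$ there are at most $n^{w+1}$ assignment functions $a\colon B_t \to V$, and for any fixed $a$ at most $b_{w+1}$ compatible partitions (since every compatible partition is a refinement of the level sets of $a$, and trivially a partition of a set of size $w+1$). So the total number of specifications for $B_t$ is bounded by $n^{w+1}\,b_{w+1}$.

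For a leaf there is exactly one specification to set, at cost $O(1)$. For a forget node (Algorithm~\ref{Alg:Forget_Center_Opt}) and an introduce-edge node (Algorithm~\ref{Alg:IntE_Center_Opt}) the outer loop ranges over the specifications of the child, and per iteration we perform a constant number of array accesses plus a few partition operations on a set of size at most $w+1$ (splitting off a singleton, merging two blocks, testing membership in $V_t$), all of which can be implemented in some polynomial $p(w)$. For an introduce-vertex node (Algorithm~\ref{Alg:IntV_Center_Opt}) the child bag $B_s$ has size at most $w$, so there are at most $n^{w}\,b_{w}$ specifications for $B_s$; per specification we iterate over up to $|A(B_s)| + 1 + n = O(n)$ assignments for the new vertex, which folds the extra $n$ back in and yields $O(n^{w+1}\,b_{w+1}\,p(w))$.

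The dominating case is the join node (Algorithm~\ref{Alg:Join_Center_Opt}). Here the outer loop ranges over the $O(n^{w+1})$ assignments $a\colon B_t\to V$, and the inner loop over all ordered pairs $(\U_1,\U_2)$ of partitions of $B_t$ compatible with $a$, of which there are at most $(b_{w+1})^2$. For each pair we must compute the join $\U_1\sqcup\U_2$; this can be done by initializing a union-find structure on $B_t$ and performing a union per block of $\U_1$ and $\U_2$, which costs polynomial time in $w$ and can be folded into $p(w)$. After that a constant number of array accesses and one $\min$-update suffice. Multiplying out we get $O\bigl(n^{w+1}\,(b_{w+1})^2\,p(w)\bigr)$, matching the claimed bound, and since $b_{w+1}\le (b_{w+1})^2$ this also absorbs the estimates for the other node types.

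The only thing to check carefully is that all the auxiliary operations on bags and partitions of size at most $w+1$, in particular computing joins, checking compatibility with the current assignment, and locating the block containing a given vertex, can indeed be carried out in time polynomial in $w$ so that they are subsumed by $p(w)$; this is straightforward. No obstacle beyond that bookkeeping is anticipated, since the running time is just the product of the enumeration sizes identified above.
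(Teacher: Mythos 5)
Your proposal is correct and follows essentially the same route as the paper: bound the number of specifications per bag by $n^{w+1}b_{w+1}$, observe that leaf, forget, introduce-vertex and introduce-edge nodes need only a pass over the child's specifications with poly$(w)$ work each (with the $O(n)$ inner loop at introduce-vertex nodes folded back into the $n^{w+1}$ factor), and let the join node dominate with its $n^{w+1}(b_{w+1})^2$ pairs of compatible partitions, each handled in poly$(w)$ time. The only cosmetic difference is that you spell out the union-find computation of $\U_1\sqcup\U_2$ and the introduce-vertex inner loop slightly more explicitly than the paper does.
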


\begin{proof}
    We will store the values $B_t$ for any node $t$ as follows: An outer array lists all possible assignment functions $a: B_t \rightarrow V$ and stores for each of them an inner array that stores for each compatible partition (in a fixed order) the respective value $R_t[a,\U]$. The number of assignment functions is trivially upper bounded by $O(n^{w+1})$ as for each of the at most $w + 1$ vertices in $B_t$ we have at most $n$ possible center choices. The number of compatible partitions is clearly upper bounded by the number of all partitions of a set of size $w + 1$ which is by definition exactly the $w + 1$-th.\ Bell number $b_{w + 1}$\footnote{For many connectivity problems on graphs with bounded treewidth it has been shown that one does not actually need to consider the solutions for every possible partition of the bag which reduces the dependency on $w$ \cite{bodlaender2015deterministic}. Given that in our case the number of specifications is dominated by $n^{w+1}$ we did not put any focus on this improvement}. We end up with at most $n^{w + 1} b_{w + 1}$ entries that each can be accessed in constant time.

    If $t$ is a leaf $R_t$ has only a single entry that can be set in time $O(1)$. If it is a Forget, Introduce Vertex or Introduce Edge node we need to iterate over all $O(n^{w + 1} b_{w + 1})$ specifications for $B_s$ (where $s$ is the child of $t$) and only apply some modifications that can be done in time polynomially bounded in $w$.

    For a merge node however we need to iterate for every assignment function $a$ over all pairs of compatible partitions. Given that we store all solutions for a singular assignment function in one of the inner arrays for the children of $t$ this can be done in $O(n^{w + 1}(b_{w + 1})^2)$ For each of this iterations the join of the two partitions can obviously again be calculated in polynomial time and updating the respective entry of $R_t$ costs $O(1)$. Thus there exists a polynomial function $p$ such that for any kind of node the running time is bounded by $O\left(n^{w + 1}(b_{w + 1})^2p(w)\right)$.
\end{proof}

\begin{corollary}
    Given a nice tree decomposition $\left(T,\{B_t\}_{t \in V_T}\right)$ of a graph $G$ with treewidth $w$ and $O(nw)$ nodes, a natural number $k$ and a radius $r$, one can decide whether there exists a connected k-center solution on $G$ with radius $r$ in time $O\left( n^{w + 2}(b_{w + 1})^2p(w)\right)$, where $p$ is a polynomial function.\footnote{The function $p$ differs from the one in Lemma \ref{lem:center_dynamic_r} by a factor $w$ stemming from the number of nodes.}
\end{corollary}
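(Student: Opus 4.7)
The plan is to run the dynamic program described in Algorithms \ref{Alg:IntV_Center_Opt}--\ref{Alg:Forget_Center_Opt} bottom-up on the given nice tree decomposition and then read off the answer at the root. First I would process the nodes of $T$ in order of non-increasing depth, so that whenever a node $t$ is considered, the tables $R_s$ of all its children have already been computed. At each node I apply the appropriate update routine according to whether $t$ is a leaf, an introduce vertex, introduce edge, join, or forget node. By the lemmas established for each node type, Invariant \ref{inv:opt_center} is preserved inductively, so after processing every node we have $R_t[a,\U] = p_t(a,\U)$ for every specification $(a,\U)$ of $B_t$.

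Next I would look at the root $z$. Since the decomposition is nice, $B_z = \emptyset$, so the only specification for $B_z$ is the pair $(a_\emptyset, \emptyset)$ where $a_\emptyset : \emptyset \to V$ is the empty function. Moreover $G_z = G$ because every vertex and edge of $G$ must be introduced somewhere in the decomposition. By Definition \ref{def:sol_spec_center} the value $p_z(a_\emptyset,\emptyset)$ equals the minimum number of finished clusters needed to cover all of $G$ by connected sets each of radius at most $r$, which is precisely the minimum number of clusters in any connected $k$-center solution on $G$ of radius $r$. Therefore a solution with radius $r$ exists if and only if $R_z[a_\emptyset,\emptyset] \le k$, and the algorithm can return the answer after a single constant-time comparison.

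It remains to bound the running time. By Lemma \ref{lem:center_dynamic_r}, processing a single node takes time $O(n^{w+1}(b_{w+1})^2 p(w))$ for some polynomial $p$. The given nice tree decomposition has $O(nw)$ nodes, so summing over all nodes yields a total running time of $O(n^{w+2}(b_{w+1})^2 p'(w))$ for a polynomial $p'$ obtained from $p$ by absorbing the additional factor of $w$ from the number of nodes. The only mild subtlety is making sure that the tables $R_t$ are stored in a form that allows the constant-time lookups assumed by the per-node analysis; using the nested-array representation described in the proof of Lemma \ref{lem:center_dynamic_r} (outer array indexed by assignment functions, inner array indexed by compatible partitions in a fixed order) takes care of this and does not dominate the stated running time, which completes the proof.
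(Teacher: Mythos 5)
Your proposal is correct and follows essentially the same route as the paper: run the dynamic program bottom-up (using the per-node lemmas to maintain Invariant~\ref{inv:opt_center}), read off $p_z(a_\emptyset,\emptyset)$ at the root where $B_z=\emptyset$ and $G_z = G$, compare it to $k$, and multiply the per-node bound of Lemma~\ref{lem:center_dynamic_r} by the $O(nw)$ nodes, absorbing the extra factor $w$ into the polynomial as the paper's footnote indicates.
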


Using binary search we can find the minimum $r$ such that there exists a connected $k$-center solution with radius $r$ by applying our dynamic program $O(\log(n))$ times. To get the actual clustering we can for every node $t \in V_t$ and every entry of $R_t$ store which specifications of the child nodes correspond to the final value of this entry (via a pointer to the corresponding entry of the array $R_s$ of the child $s$). Then we can use backtracking to reconstruct the respective solution. Thus:

\begin{theorem}\label{thm:tw_center_opt}
    Given a nice tree decomposition $\left(T,\{B_t\}_{t \in V_T}\right)$ of a graph $G$ with treewidth $w$ and $O(nw)$ nodes and a natural number $k$, one can calculate an optimal connected k-center solution on $G$ in time $O\left(\log(n)\cdot n^{w + 2}(b_{w + 1})^2p(w)\right)$, where $p$ is a polynomial function.
\end{theorem}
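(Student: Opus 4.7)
The plan is to combine the decision procedure from the preceding Corollary with a binary search over the possible radii, and to augment the dynamic program with backtracking pointers so that an optimal clustering (not just its cost) is recovered.

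First I would observe, as noted earlier in the paper following Hochbaum and Shmoys, that every achievable objective value of connected $k$-center is the distance $d(u,v)$ between some pair of vertices $u,v \in V$. Hence the optimal radius $r^\ast$ lies in the sorted list of the $O(n^2)$ pairwise distances. I would sort this list in time $O(n^2 \log n)$ and then perform binary search over it: for each candidate $r$, I invoke the decision procedure of the Corollary, which in time $O(n^{w+2}(b_{w+1})^2 p(w))$ decides whether $G$ admits a partition into at most $k$ connected clusters of radius at most $r$ by checking whether $R_z[a_\emptyset,\emptyset] \le k$ at the root $z$. The binary search converges in $O(\log n)$ rounds, giving total time $O(\log(n)\cdot n^{w+2}(b_{w+1})^2 p(w))$; the sorting cost is dominated. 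The smallest radius $r^\ast$ for which the decision procedure returns YES is the optimum.

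Next I would recover an actual optimal clustering by the standard backtracking trick for dynamic programs. When filling the table $R_t$ for each node $t \in V_T$ (via Algorithms \ref{Alg:IntV_Center_Opt}–\ref{Alg:Forget_Center_Opt}), I additionally store, for every entry $R_t[a,\mathcal{U}]$ that was updated to a finite value, a pointer to the specification(s) of the child (or children, in the case of a join) that witnessed the minimum, together with any local choice made at $t$ (the assigned center of an introduced vertex, the partition pair at a join, the merge choice at an edge introduction, or the closure of a set at a forget node). These extra pointers only add constant per-entry overhead and therefore do not affect the asymptotic running time. Starting from the unique entry $R_z[a_\emptyset,\emptyset]$ at the root $z$ (with $B_z=\emptyset$ and $V_z=V$), I walk back down the tree, at each node following the stored pointer to reconstruct the specification of each child and the local decisions made at $t$. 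This yields a concrete partition of $V$ together with centers realizing radius $r^\ast$, and by the inductive correctness proved in the preceding lemmas the reconstructed clustering is feasible and optimal.

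There is essentially no new obstacle here: correctness of the binary search reduces to the observation that the optimum is attained at some pairwise distance, and correctness of the backtracking step reduces to the already-proved invariants $R_t[a,\mathcal{U}] = p_t(a,\mathcal{U})$ combined with the constructive nature of the update rules in Algorithms \ref{Alg:IntV_Center_Opt}, \ref{Alg:IntE_Center_Opt}, \ref{Alg:Join_Center_Opt}, and \ref{Alg:Forget_Center_Opt}. The mildly delicate point I would be careful about is making sure the pointer bookkeeping faithfully records, at a join node, both child specifications $(\mathcal{U}_1,\mathcal{U}_2)$ with $\mathcal{U}_1\sqcup\mathcal{U}_2=\mathcal{U}$ so that the two subtree solutions can be glued consistently, and at a forget node whether the forgotten vertex closed a cluster or simply left an existing one. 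Once these choices are recorded, assembling the pieces during backtracking is straightforward and the claimed running time $O(\log(n)\cdot n^{w+2}(b_{w+1})^2 p(w))$ follows directly from the Corollary multiplied by the $O(\log n)$ binary search rounds.
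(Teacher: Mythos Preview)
Your proposal is correct and follows essentially the same approach as the paper: binary search over the $O(n^2)$ pairwise distances combined with the decision procedure from the Corollary, plus backtracking via stored pointers to recover the actual clustering. The paper's argument is a terse one-paragraph sketch of exactly these two ingredients, and your write-up simply spells out the details more carefully.
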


\subsection{Obtaining a constant approximation in FPT time} \label{sec:tw_center_apx}

While the algorithm presented in the previous section has a polynomial running time for a fixed treewidth $w$, its running time is actually not fixed parameter tractable, as the factor $n^{w + 2}$ cannot be bounded by a function purely depending on $w$. Also going away from this specific definition a running time in $\Omega(n^w)$ is clearly not desirable. The aim of this section is to reduce this factor to $n \cdot k^{w+1}$, which is a notable improvement as $k$ tends to be a lot smaller than $n$ and also results in an FPT running time with regard to the parameters $k$ and $w$. However to obtain this speedup we drop the requirement that the algorithm calculates an optimal solution and are content with a $6$-approximation. Given that we provide a reduction that shows for arbitrary graphs that it is NP-hard to approximate connected $k$-center with a constant approximation factor which is independent of $k$, the dependency on the treewidth seems to be reasonable in this context.

We might observe that the factor $n^{ w + 1}$ in our running time analysis results from the fact that for every bag node $t$ we calculate entries of $R_t$ for every possible assignment of the vertices in $B_t$ to the $n$ possible center choices in $V$. To improve this, it would make sense to restrict ourselves to a smaller set $F$ of possible centers. In this context we introduce the \emph{connected $k$-center with facilities} problem:

% \begin{definition}
%     In the \emph{connected $k$-center with facilities} problem we are given a graph $G = (V,E)$, a set of facilities $F$, a constant $k$ and a distance metric $d: (V \cup F)^2 \rightarrow \mathbb{R}_{\geq 0}$. The goal is to find a tuple $(c_1,...,c_k) \in F^k$ of $k$ facilities (where one facility might appear multiple times) and  clusters $P_1,...,P_k \subseteq V$ such that:
%     \begin{itemize}
%         \item $P_1,...,P_k$ is a partition of $V$.
%         \item For all $i \in [k]$ the vertices in $P_i$ form a connected subgraph of $G$.
%         \item $\max_{i \in [k]}\max_{v \in P_i} d(v,c_i)$ gets minimized.
%     \end{itemize}
% \end{definition}

\FacilitiesCenterTW*

One might note that this definition neither requires that the centers of the clusters need to be contained in the cluster itself nor that two different clusters have different centers. In principle the centers are only ensuring that the vertices within one cluster are reasonably close to each other such that we can later find a feasible center that is actually contained in the cluster.

%One might note that $V$ and $F$ are not necessarily disjoint and that we also do not require that all facilities are contained in $V$. Additionally this definition neither requires that the centers of the clusters need to be contained in the cluster itself (which would also be impossible for a facility not contained in $V$) nor that two different clusters have different centers. In principle the centers are only ensuring that the vertices within one cluster are reasonably close to each other such that we can later find a feasible center that is actually contained in the cluster.

By modifying our algorithm for connected $k$-center we are able to calculate the optimal connected $k$-center with facilities solution for given tree decompositions. The factor $n^{w + 1}$ changes into $|F|^{w+1}$, the exact details will be presented below.

\begin{theorem}\label{thm:con_center_facilities}
    Given a nice tree decomposition $\left(T,\{B_t\}_{t \in V_T}\right)$ of a graph $G$ with treewidth $w$ and $O(nw)$ nodes, a natural number $k$ and a facility set $F$, one can calculate the optimal connected k-center with facilities solution in time $O\left(n^2\log(n) +\log(n)\cdot n\cdot |F|^{w + 1}(b_{w+1})^2p(w)\right)$, where $p$ is a polynomial function.
\end{theorem}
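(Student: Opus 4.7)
The plan is to adapt the dynamic program of Section \ref{sec:tw_center_opt} to the facilities setting; the structure is identical and the changes are essentially bookkeeping. A specification for a bag $B_t$ becomes a pair $(a,\U)$ with $a: B_t\to F$ rather than $a: B_t\to V$; the partition $\U$ is compatible with $a$ exactly as before. Definition \ref{def:sol_spec_center} is relaxed by simply dropping Property \ref{prop:sol_reachable}, since in the facilities variant a cluster is not required to contain its center; the remaining properties are unchanged once ``center'' is read as ``facility in $F$''. The quantity $p_t(a,\U)$ is defined exactly as before, and at the root $z$ we accept a guessed radius $r$ iff $p_z(a_\emptyset,\emptyset)\le k$. (A clustering with strictly fewer than $k$ clusters can always be split into exactly $k$ connected clusters by peeling off leaves of spanning trees, without increasing the radius.)

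Every attainable optimal radius has the form $d(v,f)$ with $v\in V$, $f\in F$; since $F\subseteq V$ there are at most $n|F|\le n^2$ candidate values. Sorting them yields the $O(n^2\log n)$ preprocessing term, and binary search performs $O(\log n)$ decision queries. The per-query DP follows Algorithms \ref{Alg:IntV_Center_Opt}--\ref{Alg:Forget_Center_Opt} almost verbatim, with two small tweaks. At an Introduce Vertex node the fresh vertex $v$ forms a singleton set $\{v\}$ and may be assigned to any facility $f\in F$ with $d(v,f)\le r$; the case distinction previously enforcing Property \ref{prop:sol_reachable} collapses into a single loop over $F$. At a Forget node the side condition ``$a(v)\in V_t$'' disappears, so a singleton $\{v\}\in\U$ that is the only set in $\U$ assigned to the facility $a(v)$ always closes off into a finished cluster; the infeasibility rule when another vertex of $B_t$ is assigned to $a(v)$ is retained, since forgetting $v$ would then disconnect pieces that must later be merged. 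The Introduce Edge and Join recurrences are syntactically unchanged. The correctness proofs of Section \ref{sec:tw_center_opt} transfer line by line once every invocation of Property \ref{prop:sol_reachable} is deleted; the only mildly nontrivial check is that the argument of Lemma \ref{lem:join_center} still works, which it does because that property was invoked there purely to re-establish the ``center lies in its cluster'' requirement, which is no longer present.

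The running-time analysis of Lemma \ref{lem:center_dynamic_r} goes through unchanged except that the $n^{w+1}$ factor (from enumerating assignments $B_t\to V$) is replaced by $|F|^{w+1}$. The Join node is still the bottleneck at $O(|F|^{w+1}(b_{w+1})^2 p(w))$ per node; multiplying by the $O(nw)$ nodes of the nice tree decomposition and absorbing $w$ into $p$ yields $O(n\cdot |F|^{w+1}(b_{w+1})^2 p(w))$ per decision query, and the $O(\log n)$ binary-search iterations together with the $O(n^2\log n)$ preprocessing give exactly the claimed total. The actual clustering is recovered by standard pointer-backtracking through the table entries. The main (quite modest) obstacle is verifying that removing Property \ref{prop:sol_reachable} does not silently invalidate any case in the four node-type recurrences; this reduces to the observation that the property was only present to keep centers inside clusters, and that requirement has been dropped by hypothesis in the facilities variant.
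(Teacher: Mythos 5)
Your plan coincides with the paper's: same specifications with $a:B_t\to F$, same relaxed solution notion obtained by deleting Property~\ref{prop:sol_reachable}, unchanged Introduce Edge and Join recurrences, the same per-node bound $|F|^{w+1}(b_{w+1})^2p(w)$ dominated by the join nodes, and the same binary search over $O(n^2)$ candidate radii with $O(n^2\log n)$ preprocessing. There is, however, one genuine error, at the Forget node. You retain the infeasibility rule of Algorithm~\ref{Alg:Forget_Center_Opt}: when $U_v=\{v\}$ and some other vertex of $B_t$ is assigned to the facility $a(v)$, you refuse to close the cluster, arguing that the two pieces ``must later be merged.'' In the facilities problem this is false: by definition the center tuple lies in $F^k$ with repetitions allowed and clusters need not contain their centers, so two distinct clusters may legitimately share the same facility and never be merged. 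The paper's Algorithm~\ref{Alg:Forget_Center_Apx} therefore closes the singleton cluster unconditionally (this, together with dropping ``$a(v)\in V_t$,'' is precisely what the facilities relaxation buys). With your rule the DP is no longer complete: it can fail to account for valid solutions, so $R_t[a',\U']$ can end up strictly larger than $p_t(a',\U')$, breaking the invariant and the claim that the algorithm returns the \emph{optimal} connected $k$-center with facilities solution.

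A concrete failure: let $V=\{x,y\}$ with no edge, $F=\{x\}$, $d(x,y)=r$, $k=2$, and take the nice tree decomposition that introduces $x$ and $y$ into a common bag and then forgets them. The optimum consists of the two singleton clusters, both with facility $x$, and has radius $r$. In your DP the only finite entry at the bag $\{x,y\}$ has both vertices assigned to $x$ in different sets of $\U$ (no edge is ever introduced, so they cannot be merged), and your retained rule blocks the forget update for whichever vertex is forgotten first; hence no finite value reaches the root and every candidate radius is declared infeasible. The fix is exactly what the paper does: delete the condition at the Forget node altogether; the rest of your argument, including the running-time analysis, then goes through as you describe.
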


The aim is now to transform our connected $k$-center instance into an instance with facilities by finding a small facility set such that if we limit ourselves to this facility set, the cost of the optimal solution does not increase by too much. This is the case if for every vertex (including the optimal centers of the connected $k$-center clustering) there exists a nearby facility:

\begin{lemma}\label{lem:pos_nesting_center}
    If for every $v \in V$ there exists a facility $f \in F$ at distance $r'$, the radius of the optimal connected $k$-center with facilities solution is upper bounded by the radius of the optimal connected $k$-center solution plus $r'$.
\end{lemma}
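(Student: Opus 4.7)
\medskip
\noindent\textbf{Proof plan for Lemma~\ref{lem:pos_nesting_center}.}
The plan is a direct nesting argument: I take an optimal connected $k$-center solution and replace each of its centers by a nearby facility while keeping the partition, and hence the connectivity, unchanged. The fact that in the facility variant the centers need not lie in their own clusters makes this swap essentially for free.

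\medskip
\noindent\textbf{Step 1 (setup).} Let $(P_1,\ldots,P_k)$ with centers $c_1,\ldots,c_k \in V$ be an optimal connected $k$-center solution of radius $r^\ast$, so that each $P_i$ induces a connected subgraph of $G$, $c_i \in P_i$, and $d(v,c_i) \le r^\ast$ for every $v \in P_i$. By assumption, for each $i \in [k]$ there exists a facility $f_i \in F$ with $d(c_i, f_i) \le r'$.

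\medskip
\noindent\textbf{Step 2 (the candidate solution).} Keep the same partition $(P_1,\ldots,P_k)$ and replace the centers by $(f_1,\ldots,f_k) \in F^k$. The partition and its connectivity are unaffected, so this is a feasible solution for the connected $k$-center with facilities problem (note that Definition~\ref{def:k-center-facilities} does not require $f_i \in P_i$, nor does it require the facilities to be distinct, so nothing more needs to be verified).

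\medskip
\noindent\textbf{Step 3 (bounding the radius).} For any $i \in [k]$ and any $v \in P_i$, the triangle inequality gives
\[
d(v, f_i) \;\le\; d(v, c_i) + d(c_i, f_i) \;\le\; r^\ast + r'.
\]
Taking the maximum over $i$ and $v \in P_i$, this candidate solution has radius at most $r^\ast + r'$, and therefore the optimal connected $k$-center with facilities solution has radius at most $r^\ast + r'$ as well.

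\medskip
\noindent\textbf{Remark on difficulty.} There is essentially no obstacle: the only thing worth double-checking is that the relaxed center requirements in Definition~\ref{def:k-center-facilities} (no containment in the cluster, repetitions allowed) make the swap $c_i \mapsto f_i$ always legal, so that connectivity, which came from the original optimum, is preserved verbatim.
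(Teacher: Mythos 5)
Your proposal is correct and matches the paper's proof essentially verbatim: keep the optimal connected partition, replace each center $c_i^*$ by its nearest facility, and bound the new radius by the triangle inequality, using that the facilities problem does not require the center to lie in its cluster. Nothing further is needed.
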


\begin{proof}
    Let $P_1^*,\ldots,P_k^*$ be the optimal connected $k$-center solution with centers $c_1^*,...,c_k^*$. For any $i \in [k]$ let $f_i^*$ be the facility in $F$ minimizing $d(c_i^*,f_i^*)$. Then the clusters $P_1^*,\ldots,P_k^*$ with centers $f_1^*,...,f_k^*$ form a connected $k$-center with facilities solution with radius:
    \begin{equation*}
        \max_{i \in [k]}\max_{v \in P_i^*} d(v,f_i^*) \leq \max_{i \in [k]}\max_{v \in P_i^*} d(v,c_i^*) + d(c_i^*,f_i^*) \leq \max_{i \in [k]}\max_{v \in P_i^*} d(v,c_i^*) + r'
    \end{equation*}
\end{proof}

One might note that the requirement in Lemma~\ref{lem:pos_nesting_center} basically is the same as saying that the facility set $|F|$ of size $l$ is an $l$-center solution with radius $r'$ that ignores the connectivity requirement. Following this logic, we can calculate an unconstrained $k$-center solution and use this as our facility set:

\begin{lemma}\label{lem:facilities_center}
    For any connected $k$-center instance with optimal radius $r^*$, one can calculate in $O(n^2 \log(n))$ time a facility set $F \subseteq V$ with $|F| = k$ such that there exists a solution for the connected $k$-center with facilities problem using $F$ with radius at most $3r^*$.
\end{lemma}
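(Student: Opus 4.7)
The plan is to apply a standard unconstrained $k$-center approximation algorithm to the point set $V$ with the given metric $d$, and use its chosen centers as the facility set $F$. Specifically, I would run one of the classical $2$-approximation algorithms for (unconstrained) $k$-center, e.g.\ Hochbaum and Shmoys' threshold algorithm or Gonzalez's farthest-first traversal. This produces a set $F \subseteq V$ of exactly $k$ points such that every vertex $v \in V$ has distance at most $2 r_u^*$ to some point in $F$, where $r_u^*$ denotes the optimal radius of the unconstrained $k$-center problem on $(V,d)$. The running time of Hochbaum-Shmoys-style algorithms, which sort the $O(n^2)$ pairwise distances and apply binary search plus a linear threshold check, is $O(n^2 \log n)$, matching the claimed bound.

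The key observation tying this to the connected problem is that the optimal connected $k$-center solution is a feasible (but typically not optimal) unconstrained $k$-center solution, so $r_u^* \le r^*$. Consequently, every $v \in V$ lies within distance $2 r_u^* \le 2 r^*$ of some facility in $F$. This is precisely the hypothesis of Lemma~\ref{lem:pos_nesting_center} with $r' = 2 r^*$. Applying that lemma yields a connected $k$-center with facilities solution using $F$ of radius at most $r^* + 2 r^* = 3 r^*$, which is what needs to be shown.

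There is really no obstacle here: the argument is a direct chaining of (i) the guarantee of any $2$-approximation for unconstrained $k$-center, (ii) the trivial inequality $r_u^* \le r^*$ since connectivity only restricts the feasible set of partitions, and (iii) Lemma~\ref{lem:pos_nesting_center}. The only minor care needed is to make the transition between the two problems explicit, namely that dropping the connectivity constraint can only decrease the optimal objective value, so $r_u^* \le r^*$ holds in the $k$-center setting where both problems share the same set of admissible centers $V$. No adaptation of the unconstrained algorithm is required, since Lemma~\ref{lem:pos_nesting_center} makes no assumption about how the facility set was obtained, only about its covering radius.
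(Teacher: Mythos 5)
Your proposal is correct and matches the paper's own proof: it computes an unconstrained $2$-approximate $k$-center solution (Hochbaum--Shmoys) as the facility set, uses that dropping connectivity only relaxes the problem so the covering radius is at most $2r^*$, and then applies Lemma~\ref{lem:pos_nesting_center} to get the $3r^*$ bound. No differences worth noting.
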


\begin{proof}
    By ignoring the connectivity constraint one can use the classic algorithm by Hochbaum and Shmoys \cite{hochbaum1985best} to calculate a 2-approximation for $k$-center on the vertex set $V$ with metric $d$. Let $c_1,...,c_k$ be the respective centers. Given that removing the connectivity constraint only relaxes the problem we know that the radius of this solution is upper bounded by $2r^*$. Thus Lemma~\ref{lem:pos_nesting_center} tells us that  if we choose $F = \{c_1,...,c_k\}$ the radius of the optimal connected $k$-center with facilities solution is upper bounded by $3r^*$.
\end{proof}

As we argue below one can calculate the optimal solution of the resulting connected $k$-centers with facilities instance with a similar algorithm as for the regular connected $k$-center problem. The factor $n^{w + 1}$ decreases to $|F|^{w + 1}$ which in our case is equal to $k^{w + 1}$. In the last step we need to find a feasible center for each of the resulting clusters. One might observe that by taking the vertex closest to the corresponding facility of the cluster the radius of the clustering increases by at most a factor of two:

\begin{observation}\label{obs:cen_rem_facilities}
    Every connected $k$-center with facilities solution can be transformed into a connected $k$-center solution while increasing the radius by a factor of at most $2$.
\end{observation}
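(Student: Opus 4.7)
\medskip

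The plan is to replace each facility $c_i$ (which need not lie in $P_i$) by the vertex of $P_i$ closest to $c_i$. More precisely, for every $i \in [k]$ with $P_i \neq \emptyset$, I would set
\[
c_i' := \argmin_{v \in P_i} d(v, c_i),
\]
and keep the clusters $P_1, \ldots, P_k$ unchanged. Since the partition is not modified, every $P_i$ still induces a connected subgraph of $G$, and by construction $c_i' \in P_i$, so the resulting tuple satisfies both requirements of a connected $k$-center solution.

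To bound the new radius, let $r$ denote the radius of the given connected $k$-center with facilities solution. Pick any $i \in [k]$ and any $v \in P_i$. Because $c_i'$ is the vertex of $P_i$ minimizing the distance to $c_i$, and $v$ itself lies in $P_i$, we have $d(c_i, c_i') \leq d(c_i, v) \leq r$. The triangle inequality then yields
\[
d(v, c_i') \;\leq\; d(v, c_i) + d(c_i, c_i') \;\leq\; r + r \;=\; 2r.
\]
Taking the maximum over $i$ and $v \in P_i$ shows that the transformed solution has radius at most $2r$, which is the claimed factor-$2$ blowup.

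The only subtlety is the handling of potentially empty clusters, which can occur formally in a partition into $k$ parts but are trivially harmless here: if some $P_i$ is empty, one can assign it an arbitrary vertex $c_i'$ as center without affecting the radius, since the maximum is taken over $v \in P_i = \emptyset$ and thus contributes nothing. I do not expect any genuine obstacle in the argument, as the bound is a direct one-line consequence of the triangle inequality combined with the nearest-vertex choice of $c_i'$.
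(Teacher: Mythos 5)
Your proof is correct and follows exactly the route the paper intends (and only sketches): replace each facility by the closest vertex of its cluster and apply the triangle inequality, using $d(c_i,c_i')\leq d(c_i,v)\leq r$ to get radius at most $2r$. Your remark on empty clusters is a harmless formality that the paper does not address either.
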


In total we can use the algorithm for connected $k$-center with facilities to obtain a $6$-approximation for the connected $k$-center problem:

\begin{theorem}\label{thm:tw_center_apx}
    Given a nice tree decomposition of a graph $G$ with treewidth $w$ and $O(nw)$ bag nodes and a natural number $k$, one can calculate a 6-approximation of connected $k$-center in time $O\left(n^2\log(n) + n \log(n) (wk)^{O(w)}\right)$, which lies in FPT with respect to the parameter $\max(w,k)$.
\end{theorem}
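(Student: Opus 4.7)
The plan is to combine the three ingredients developed just above: the facility construction of Lemma~\ref{lem:facilities_center}, the exact algorithm for connected $k$-center with facilities from Theorem~\ref{thm:con_center_facilities}, and the cheap center-repair step of Observation~\ref{obs:cen_rem_facilities}. Concretely, I would first run the Hochbaum--Shmoys $2$-approximation for (unconstrained) $k$-center on $(V,d)$ to obtain a set $F\subseteq V$ of $k$ facilities in time $O(n^2\log n)$, exactly as in the proof of Lemma~\ref{lem:facilities_center}. This gives a facility set such that the optimal connected $k$-center with facilities solution using $F$ has radius at most $3r^*$, where $r^*$ is the optimum radius of the original connected $k$-center instance.

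Next, I would feed $F$ together with the given nice tree decomposition into the algorithm of Theorem~\ref{thm:con_center_facilities}. Since every candidate optimal radius is still a pairwise distance, binary search over the $O(n^2)$ sorted distances costs an extra $O(\log n)$ factor, which matches the bound already stated in that theorem. The output is a partition of $V$ into $k$ connected pieces together with facilities $f_1,\dots,f_k\in F$ such that every vertex is within distance $3r^*$ of its assigned facility. To obtain a bona fide connected $k$-center solution I apply Observation~\ref{obs:cen_rem_facilities}: for each cluster $P_i$ I replace $f_i$ by the vertex $c_i\in P_i$ minimizing $d(c_i,f_i)$. By the triangle inequality and the argument inside that observation, every vertex of $P_i$ is then within distance $2\cdot 3r^*=6r^*$ of $c_i$, so the overall approximation ratio is $6$.

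For the running time, the dominant term from Theorem~\ref{thm:con_center_facilities} is $\log(n)\cdot n\cdot |F|^{w+1}(b_{w+1})^2 p(w)$ with $|F|=k$. The Bell number satisfies $b_{w+1}\le (w+1)^{w+1}$, so $(b_{w+1})^2 p(w)=w^{O(w)}$, and $k^{w+1}\cdot w^{O(w)}=(wk)^{O(w)}$. Adding the $O(n^2\log n)$ spent on the facility set and on sorting the candidate radii yields the claimed bound $O\bigl(n^2\log n + n\log n\,(wk)^{O(w)}\bigr)$. Since this is a polynomial in $n$ multiplied by a function of $\max(w,k)$ only, the algorithm is FPT in $\max(w,k)$, completing the theorem.

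The only place where anything can really go wrong is showing that the $6$ approximation survives both replacements: first restricting centers to $F$ (paying a factor $3$ via Lemma~\ref{lem:pos_nesting_center}), and then repairing facilities inside the cluster (paying a further factor $2$ via Observation~\ref{obs:cen_rem_facilities}). Both steps are purely metric arguments using the triangle inequality and the key observation that neither restriction can destroy feasibility of the partition itself, since the connectivity of each cluster is preserved throughout, so the main obstacle is really just bookkeeping rather than a new idea.
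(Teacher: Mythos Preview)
Your proposal is correct and follows essentially the same approach as the paper: apply Lemma~\ref{lem:facilities_center} to obtain a facility set of size $k$ with optimum facility radius at most $3r^*$, solve connected $k$-center with facilities optimally via Theorem~\ref{thm:con_center_facilities}, and then convert to a genuine connected $k$-center solution via Observation~\ref{obs:cen_rem_facilities} at the cost of a factor $2$, yielding the $6$-approximation and the stated running time.
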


\begin{proof}
    Let $r^*$ be the radius of the optimal connected $k$-center solution. By Lemma~\ref{lem:facilities_center} we can calculate in $O(n^2 \log(n))$ time a facility set of size $k$ such that the optimum solution for the resulting connected $k$-center with facilities instance has at most radius $3r^*$. By Theorem~\ref{thm:con_center_facilities} we can calculate this solution in time $O\left(n^2\log(n) +\log(n)\cdot nw\cdot k^{w+1}(b_{w+1})^2p(w)\right)$, which which lies in $O\left(n^2\log(n) + n \log(n) (wk)^{O(w)}\right)$. Lastly by Observation~\ref{obs:cen_rem_facilities} we can find for each cluster a center that is contained in it while increasing the radius by at most $2$. Thus we obtain a solution with radius at most $2\cdot 3r^* = 6r^*$.
\end{proof}

It remains to explain how we can calculate a solution for the connected $k$-center with facilities problem. The approach is rather similar to the dynamic program in the previous section. For a guessed radius $r$ we calculate a value $R_t[a,\U]$ for any bag node $t$ and any specification $(a,\U)$ of $B_t$ that tells us how many clusters we need, to partition $G_t$ under this specification. The main difference is that the assignment function $a:B_t \rightarrow F$ only assigns vertices to the facility set $F$. Additionally we do not require anymore that the centers of the clusters are contained in the clusters themselves. As a result we do not need Property~\ref{prop:sol_reachable} of Definition~\ref{def:sol_spec_center} anymore when defining a solution for $G_t$ fulfilling $(a,\U)$:

\begin{definition}

Let $t \in T$ be a bag node and $(a,\U)$ with $a: B_t \rightarrow F$ and $\U = \{U_1,\ldots,U_l\}$ a specification for $B_t$. We say that a partition $P_1,\ldots,P_l,P_{l+1},\ldots,P_{l + k'}$ of $V_t$ is a connected $k$-center with facilities solution for $G_t$ fulfilling $(a,\U)$ if:
\begin{enumerate}
\item For all $i \in \{1,\ldots,l + k'\}$ the cluster $P_i$ forms a connected component of $G_t$. 
\item For all $i \in \{1,\ldots,l\}$ it holds that $U_i \subseteq P_i$ and for all $v \in P_i$ we have $d(v,a(U_i))\leq r$.
\item For all $i \in \{l+1,\ldots, l + k'\}$ it holds that there exists an $f \in F$ such that for all $v \in P_i$ $d(v,f) \leq r$.
\end{enumerate}
We say that such a solution requires $k'$ clusters and define $p_t(a, \mathcal{U})$ to be the minimum value $k' \in \mathbb{N}$ such that there exists a solution for $t$ fulfilling $(a,\U)$ that only requires $k'$ clusters. If such a $k'$ does not exist, we say $p_t( a,\mathcal{U}) = \infty$.
\end{definition}

Starting with the leaf nodes one can recursively calculate for every bag node $t$ the value $p_t(a,\U)$ and store it in $R_t(a,\U)$ for any specification $(a,\U)$. For Introduce Edge nodes and Join nodes one can follow the exact same procedure as in Algorithm~\ref{Alg:IntE_Center_Opt} and~\ref{Alg:Join_Center_Opt}. The primary function of both algorithms is that they merge sets in $\U$ if they are assigned to the same center and a new connection between them is available (either via a new edge or a path in the subgraph of the other child of the join node). Given that for every vertex it is already determined where it gets assigned the change in the definition of valid solutions does not affect these procedures besides the fact that the number of assignment functions decreases.

If $t$ is introducing a vertex $v$ we do only consider assignments of $v$ to the facilities in $F$. Additionally since we do not require anymore that the centers themselves are contained in the clusters and thus also need not to ensure, that the respective facilities are reachable, we can assign $v$ without checking where the other vertices in $B_t$ got assigned to, which simplifies the algorithm. The resulting procedure can be found in Algorithm~\ref{Alg:IntV_Center_Apx}.

\begin{algorithm}
	\LinesNumbered
	\DontPrintSemicolon
	\SetKwInOut{Input}{input}
	\SetKwInOut{Output}{output}
    \textbf{Given:} Node $t$ introducing a vertex $v$  with child $s$\;
    \ForAll{ Specifications $(a,\U)$ for $B_s$}{
    \ForAll{$f \in F: d(v,f) \leq r$}{
        $R_t[a + (v,f),\U + \{v\}] = R_s[a,\U]$\;
    }
    }
    \caption{Calculating $R$ for an Introduce Vertex node.}\label{Alg:IntV_Center_Apx}
\end{algorithm}

The fact that the clusters do not need to contain the respective facility also affects the algorithm when $t$ is a Forget node, deleting a vertex $v$. In the algorithm for connected $k$-center one needs to check, whether removing the center finishes the cluster. If yes one has to make sure that no other vertex in $B_t$ is assigned to this center as otherwise the two clusters would still need to be connected in the remainder in the graph which is not possible anymore. However, when we introduce facilities two distinct clusters can have the same center. Thus if the cluster containing $v$ gets finished, one can simply increase the number of required clusters (which are the finished ones) by one independently of the assignment of the other vertices in the bag. For the same reasons one also does not need to check whether the center to which $v$ gets assigned is contained in the subgraph $G_t$. As a result we end up with Algorithm~\ref{Alg:Forget_Center_Apx}.

\begin{algorithm}
	\LinesNumbered
	\DontPrintSemicolon
	\SetKwInOut{Input}{input}
	\SetKwInOut{Output}{output}
    \textbf{Given:} Node $t$ forgetting a vertex $v$  with child $s$\;
    \ForAll{ Specifications $(a,\U)$ for $B_s$}{
        Let $U_v$ be the set in $\U$ containing $v$\;
        $a' = a - (v,a(v))$\;
    \eIf{$U_v = \{v\}$}{
            $\U' = \U - \{v\}$\;
            $R_t\left[a',\U'\right] = \min(R_t\left[a',\U'\right], R_s[a,\U]+1)$\;
    }{
    Define $\U'$ as the partition on $B_t$ when $U_v$ gets replaced by $U_v -v$ in $\U$\;
    $R_t\left[a',\U'\right] = \min(R_t\left[a',\U'\right], R_s[a,\U])$\;
    }
    }
    \caption{Calculating $R$ for a Forget node.}\label{Alg:Forget_Center_Apx}
\end{algorithm}

The correctness of this algorithm can be shown analogously as for the dynamic program for the connected $k$-center problem. Regarding the running time the main difference is that the maximum number of specifications of a bag node $t$ can be bounded by $|F|^{w + 1} b_{w+1}$ instead of $n^{w+1} b_{w+1}$, as we are considering assignment functions from $B_t$ to $F$ instead of $V$. This proves Theorem~\ref{thm:con_center_facilities}.

\subsection{An FPT approximation algorithm for connected \texorpdfstring{$k$}{k}-median and \texorpdfstring{$k$}{k}-means} \label{sec:tw_median}

%\Jan{Add definition with facilities?}

In the following we present how our approach can be used to approximate $k$-median on graphs with bounded treewidth in FPT time. The idea is to combine it with the dynamic program for $k$-median on trees by Eube et al.~\cite{eube2025connectedkmediandisjointnondisjoint}. In the $k$-median problem the distance of all vertices to the respective center contributes to the objective function. As a result we cannot guess a radius $r$ in advance and determine how many clusters are needed to cover a subgraph of $G$ with this radius. Instead we have to add a value $k' \leq k$ to our specification for a node $t$ that tells us how many finished clusters are allowed within $G_t$ and determine the minimum cost required to cover $G_t$ under these circumstances.

To avoid repetition, we do not present how connected $k$-median can be solved exactly on graphs with bounded treewidth but concentrate on the dynamic program for connected $k$-median with facilities which can be defined analogously to the $k$-center variant. The modifications necessary are roughly the same. Afterwards we present how one can obtain a constant approximation for connected $k$-median in FPT time with our algorithm. Let $t \in V_T$. For the connected $k$-median with facilities problem a specification $(k',a,\U)$ for $B_t$ consists of a natural number $k'$ with $0 \leq k' \leq k$, an assignment function $a:B_t \rightarrow F$ and a compatible partition $\U$ of $B_t$. We then define the value $p_t(k',a,\U)$ as the minimum cost of a solution fulfilling the specification. For any vertex $v \in B_t$ we ignore the distance from $v$ to its facility $a(v)$ and only add this distance once the vertex is forgotten. This avoids that we count the distance multiple times in a join node:

\begin{definition}

Let $t \in T$ be a bag node and $(k',a,\U)$ with $k' \leq k$, $a: B_t \rightarrow F$ and $\U = \{U_1,\ldots,U_l\}$ a specification for $B_t$. We say that a partition $P_1,\ldots,P_l,P_{l+1},\ldots,P_{l + k'}$ is a connected $k$-median with facilities solution for $G_t$ fulfilling $(k',a,\U)$ if:
\begin{enumerate}
\item For all $i \in [l + k']$ the cluster $P_i$ forms a connected component of $G_t$. 
\item For all $i \in [l]$ it holds that $U_i \subseteq P_i$.
\end{enumerate}
We say that such a solution costs
\begin{equation*}
    \sum_{i = 1}^l\left(\sum_{v \in P_i \setminus U_i} d(v,a(U_i))\right) +  \sum_{i= l + 1}^{l + k'} \left(\min_{f \in F} \sum_{v \in P_{i}} d(v,f)\right)
\end{equation*}
 and define $p_t(k', a, \mathcal{U})$ to be the minimum cost of a solution for $G_t$ fulfilling $(k', a, \mathcal{U})$. If there is no solution fulfilling $(k',a,\U)$ we say $p_t( a,\mathcal{U}) = \infty$.

\end{definition}

As for the connected $k$-center problem the algorithm maintains for every bag node $t$ an array $R_t$ that stores for any specification $(k',a,\U)$ for $B_t$ a value $R_t[k',a,\U]$. Initially all of these values are set to $\infty$. Starting from the leaves the algorithm goes over all bag nodes $t \in V_T$ ordered by their height and recursively calculates the value $p_t(k',a,\U)$ for every specification for  $B_t$ and stores it in $R_t[k',a,\U]$. Once we reach the root $z$ with $B_z = \emptyset$, the value $R[k,\emptyset,\emptyset]$ is equal to the cost of the optimal connected $k$-median with facilities solution.

For a leave node $l$ we have $B_l = \emptyset$ and can set $B_l[0,\emptyset,\emptyset] = 0$ and $B_l[k',\emptyset,\emptyset] = \infty$ for $k' \neq 0$ as there is no partition of the emptyset of size greater $0$.

Let $t$ be a introduce vertex node, introducing $v \in V$ and let $s$ be the child of $t$. As $v$ is an isolated vertex in $G_t$ it cannot be connected with any other vertex in $B_t$. Given that we only count finished clusters and ignore the distances of the vertices in $B_t$ to the respective facilities  we can set $R_t[k',a + (v,f),\U + \{v\}] = R_s[k',a,\U]$ for any specification $(k',a,\U)$ for $B_s$. This results in Algorithm~\ref{Alg:IntV_Median_Apx}.

\begin{algorithm}
	\LinesNumbered
	\DontPrintSemicolon
	\SetKwInOut{Input}{input}
	\SetKwInOut{Output}{output}
    \textbf{Given:} Node $t$ introducing a vertex $v$  with child $s$\;
    \ForAll{ Specifications $(k',a,\U)$ for $B_s$}{
    \ForAll{$f \in F$}{
        $R_t[k',a + (v,f),\U + \{v\}] = R_s[k',a,\U]$\;
    }
    }
    \caption{Calculating $R$ for an Introduce Vertex node.}\label{Alg:IntV_Median_Apx}
\end{algorithm}

If the bag node $t$ with child $s$ introduces an edge $\{v,w\}$, every solution for $G_s$ is also a solution for $G_t$ fulfilling the same specifications. Additionally the edge could connect two unfinished clusters if $a(v) = a(w)$. Following this logic, we end up with Algorithm~\ref{Alg:IntE_Median_Apx}.

\begin{algorithm}
	\LinesNumbered
	\DontPrintSemicolon
	\SetKwInOut{Input}{input}
	\SetKwInOut{Output}{output}
    \textbf{Given:} Node $t$ introducing a edge $\{v,w\}$  with child $s$\;
    \ForAll{ Specifications $(k',a,\U)$ for $B_s$}{
    $R_t[k',a,\U] = \min(R_t[k',a,\U], R_s[k',a,\U])$\;
    \If{$a(v) = a(w)$}{
        Let $U_v$ be the set in $\U$ containing $v$ and $U_w$ the set containing $w$\;
        \If{$U_v \neq U_w$}{
            Let $\U'$ be the partition formed by merging $U_v$ and $U_w$ in $\U$\;
            $R_t[k',a,\U'] = \min(R_t[k',a,\U'], R_s[k',a,\U])$\;
        }
    }
    }
    \caption{Calculating $R$ for an Introduce Edge node.}\label{Alg:IntE_Median_Apx}
\end{algorithm}

If the node $t$ joins two nodes $s_1$ and $s_2$ with $B_t = B_{s_1} = B_{s_2}$ let $(k',a,\U)$ be a specification of $B_t$. We know that in a solution of $G_t$ fulfilling  $(k',a,\U)$ the finished clusters are in their entirety contained in $V_{s_1} \setminus B_t$ or $V_{s_2} \setminus B_t$. Additionally we also know that only vertices in $V_{s_1} \setminus B_t$ or $V_{s_2} \setminus B_t$ contribute to the cost of the solution. Let $\Psc$ be the optimal solution of $G_t$ fulfilling  $(k',a,\U)$. Following the same logic as for connected $k$-center we know that there exist specifications $(k_1,a,\U_1)$ for $B_{s_1}$ and $(k_2,a,\U_2)$ for $B_{s_2}$ such that $k_1 + k_2 = k'$ and $\U_1 \sqcup \U_2 = \U$ and $\Psc$ can be subdivided into two solutions for $G_{s_1}$ and $G_{s_2}$ fulfilling the respective specifications (see Lemma~\ref{lem:join_center}). The sum of the cost of these two  solutions is then exactly the cost of $\Psc$ itself, as every vertex is still assigned to the same facility and thus $p_t(k',a,\U) = p_{s_1}(k_1,a,\U_1) + p_{s_2}(k_2,a,\U_2)$. 

At the same time it holds for any pair of specifications $(k_1,a,\U_1)$ for $B_{s_1}$ and $(k_2,a,\U_2)$ for $B_{s_2}$ that two solutions for $G_{s_1}$ and $G_{s_2}$ fulfilling these specifications can be combined to a solution for $G_t$ fulfilling $(k_1 + k_2,a,\U_1 \sqcup \U_2)$, which implies $p_t(k_1 + k_2,a,\U_1 \sqcup \U_2) \leq p_t(k_1,a,\U_1) + p_t(k_2,a,\U_2)$. As a result we can simply go over all pairs of specifications of the child nodes with the same assignment function and update the entry of the join of these two specifications accordingly. Algorithm~\ref{Alg:Join_Median_Apx} gives a more formal description of this procedure.

\begin{algorithm}
	\LinesNumbered
	\DontPrintSemicolon
	\SetKwInOut{Input}{input}
	\SetKwInOut{Output}{output}
    \textbf{Given:} Node $t$ with children $s_1$ and $s_2$\;
    \ForAll{ Assignments $a:B_t \rightarrow F$}{
    \ForAll{$k_1,k_2: k_1 + k_2 \leq k$}{
    $k' = k_1 + k_2$\;
    \ForAll{ Pairs of compatible partitions $\U_1,\U_2$ of $B_t$}{
    $R_t[k',a, \U_1 \sqcup \U_2] = \min\left(R_t[k',a, \U_1 \sqcup \U_2], R_{s_1}[k_1,a,\U_1] + R_{s_2}[k_2,a,\U_2]\right)$\;
    }
    }
    }
    \caption{Calculating $R$ for a Join node.}\label{Alg:Join_Median_Apx}
\end{algorithm}

Let $t$ be a forget node with child $s$ that forgets $v$. Given that $v \in B_s$ and $v \not\in B_t$ we have to add the distance from $v$ to $a(v)$ to the cost when we transform a solution for $G_s$ into one for $G_t$. Additionally we have to check whether $v$ is the only vertex in $B_s$ that is contained in its respective cluster, which is the same as checking if $\{v\} \in \U$ for a given specification $(k',a,\U)$ for $B_s$. If yes the respective cluster turns into a finished one when we forget $v$. Accordingly we have to increase the value of $k'$ by $1$. This results in Algorithm~\ref{Alg:Forget_Median_Apx}.

\begin{algorithm}
	\LinesNumbered
	\DontPrintSemicolon
	\SetKwInOut{Input}{input}
	\SetKwInOut{Output}{output}
    \textbf{Given:} Node $t$ forgetting a vertex $v$  with child $s$\;
    \ForAll{ Specifications $(k',a,\U)$ for $B_s$}{
        Let $U_v$ be the set in $\U$ containing $v$\;
        $a' = a - (v,a(v))$\;
    \eIf{$U_v = \{v\}$}{
            $\U' = \U - \{v\}$\;
            $R_t\left[k' + 1,a',\U'\right] = \min(R_t\left[k' + 1,a',\U'\right], R_s[k',a,\U]+d(v,a(v)))$\;
    }{
    Define $\U'$ as the partition on $B_t$ when $U_v$ gets replaced by $U_v -v$ in $\U$\;
    $R_t\left[k',a',\U'\right] = \min(R_t\left[k',a',\U'\right], R_s[k',a,\U] +d(v,a(v)))$\;
    }
    }
    \caption{Calculating $R$ for a Forget node.}\label{Alg:Forget_Median_Apx}
\end{algorithm}

%By iteratively applying these algorithms on all nodes of $T$ we end up with the optimum connected $k$-median with facilities solution.

\begin{theorem}\label{thm:con_median_facilities}
    Given a nice tree decomposition $\left(T,\{B_t\}_{t \in V_T}\right)$ of a graph $G= (V,E)$ with tree-width $w$ and $O(nw)$ nodes, a natural number $k$ and a facility set $F$, one can calculate the optimal connected k-median with facilities solution in time $ \left(n\cdot k^2\cdot |F|^{w + 1}(b_{w + 1})^2p(w)\right)$, where $p$ is a polynomial function.
\end{theorem}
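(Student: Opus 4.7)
The plan is to mirror the correctness-and-running-time analysis carried out for connected $k$-center with facilities in Section~\ref{sec:tw_center_apx}. I would state and prove, by induction on the height of $t$ in the nice tree decomposition, the invariant that after the appropriate procedure among Algorithms~\ref{Alg:IntV_Median_Apx}--\ref{Alg:Forget_Median_Apx} has been executed on $t$, the entry $R_t[k',a,\U]$ equals $p_t(k',a,\U)$ for every specification $(k',a,\U)$ of $B_t$. The base case of a leaf $l$ is immediate: $V_l=\emptyset$, so the only feasible specification is $(0,\emptyset,\emptyset)$ with cost $0$, and all other values are correctly left at $\infty$.

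For the introduce-vertex, introduce-edge, and forget nodes the arguments closely parallel their counterparts for connected $k$-center, and the only new ingredients are the bookkeeping for the number of finished clusters $k'$ and for the accumulated cost. In the introduce-vertex node, $v$ is isolated in $G_t$ and the distance $d(v,a(v))$ is deferred since $v\in B_t$, so every child-solution extends one-to-one to a solution for $(k',a+(v,f),\U+\{v\})$ at the same cost. The introduce-edge node leaves both $k'$ and the cost unchanged and only potentially coarsens $\U$ by merging the two sets containing the endpoints when $a(v)=a(w)$. In a forget node forgetting $v$, the deferred distance $d(v,a(v))$ is added to the cost, and if the set containing $v$ is the singleton $\{v\}$ then the corresponding cluster becomes finished, which increments $k'$ by one.

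The main obstacle is the join node, where I need to argue that both $k'$ and the cost split additively between the two children. For $R_t[k',a,\U]\le p_t(k',a,\U)$, I would take an optimal solution $\Psc$ for $G_t$ fulfilling $(k',a,\U)$ and decompose it along $B_t$ exactly as in the proof of Lemma~\ref{lem:join_center}: every finished cluster lies entirely in $V_{s_1}\setminus B_t$ or $V_{s_2}\setminus B_t$, yielding a split $k'=k_1+k_2$, and every unfinished cluster is cut into connected components whose $B_t$-intersections form partitions $\U_1,\U_2$ with $\U_1\sqcup\U_2=\U$. Since every vertex in $V_t\setminus B_t$ belongs to exactly one of $V_{s_1}\setminus B_t$ or $V_{s_2}\setminus B_t$, and since the contribution of boundary vertices is deferred to the later forget steps rather than being counted here, the total cost decomposes additively into $p_{s_1}(k_1,a,\U_1)+p_{s_2}(k_2,a,\U_2)$, a sum that Algorithm~\ref{Alg:Join_Median_Apx} inspects in some iteration. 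Conversely, in the iteration that sets the final value of $R_t[k',a,\U]$, any pair of child-solutions fulfilling the considered specifications can be glued along $B_t$ into a solution for $G_t$ whose cost matches the sum, giving the matching lower bound on $p_t(k',a,\U)$.

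For the running time, the number of specifications of a single bag is at most $(k+1)\cdot|F|^{w+1}\cdot b_{w+1}$, since there are $k+1$ choices for $k'$, at most $|F|^{w+1}$ assignment functions, and at most $b_{w+1}$ compatible partitions. For every node type except join, a single pass through the child's table suffices, giving work $O(k\cdot|F|^{w+1}\cdot b_{w+1}\cdot p(w))$ per node. The join node dominates: iterating over all tuples $(a,k_1,k_2,\U_1,\U_2)$ with $k_1+k_2\le k$ and computing $\U_1\sqcup\U_2$ in time polynomial in $w$ costs $O(|F|^{w+1}\cdot k^2\cdot (b_{w+1})^2\cdot p(w))$. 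Summing over the $O(nw)$ nodes of the nice tree decomposition and absorbing the factor $w$ into the polynomial $p(w)$ yields the bound stated in Theorem~\ref{thm:con_median_facilities}.
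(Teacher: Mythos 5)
Your proposal is correct and follows essentially the same route as the paper: the paper likewise argues correctness by transferring the $k$-center invariant proofs (with the same bookkeeping of deferred boundary costs and the $k'=k_1+k_2$ split at join nodes) and obtains the running time from the bound $(k+1)\cdot|F|^{w+1}\cdot b_{w+1}$ on specifications, the dominating join-node cost $O(k^2\cdot|F|^{w+1}(b_{w+1})^2p(w))$, and the $O(nw)$ node count. Your write-up simply spells out the induction that the paper leaves as "analogous to the $k$-center variant."
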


\begin{proof}
    The correctness of the dynamic program can be shown the same way as for the $k$-center variant. Regarding the running time the number of specifications for the bag $B_t$ of a node $t$ can be bounded by $(k + 1) \cdot |F|^{|B_t|}\cdot b_{|B_t|} \leq (k + 1) \cdot |F|^{w + 1}\cdot b_{w + 1}$. For an introduce vertex, introduce edge or forget node we only need time polynomial in $w$ for every specification of the child node. For the join node we have to combine for every assignment function $a: B_t \rightarrow F$ the results for every pair of compatible partitions (which are at most $(b_{w + 1})^2$) for the bags of the child nodes and choices of values $k_1, k_2$ with $k_1 + k_2 \leq k$ (of which there are $O(k^2)$ many). As a result the running time of a join node is bounded in $O(k^2\cdot |F|^{w + 1}(b_{w + 1})^2 p(w))$, which is the largest among any type of nodes. Since there are only $O(nw)$ bag nodes the theorem holds.
\end{proof}

To obtain a constant approximation for connected $k$-median we have to find a facility set $F$ such that restricting us to this facility set does not increase the cost of the optimal solution by too much. As for connected $k$-center this is the case if the cost of a non-connected $k$-median clustering solution with center set $F$ is reasonably small. To be more precise for a graph $G = (V,E)$, a distance metric $d$ and a set $F$ we define
\begin{equation*}
    cost(F) = \sum_{v \in  V} \min_{f \in F} d(v,f),
\end{equation*}
which is exactly the k-median objective if we use $F$ as a center set. By the same arguments as used in the nesting technique for hierarchical clustering \cite{lin2010general} we can bound the cost of the optimal connected $k$-median with facilities solution by $cost(F)$ plus twice the cost of the optimal solution without facilities:

\begin{lemma}\label{lem:facility_nesting_median}
    For a graph $G = (V,E)$ with distance metric $d$ and an integer $k \in \mathbb{N}$, let $\opt$ be the cost of the optimal connected $k$-median solution. For any set of facilities $F$ we have that the cost of the optimal $k$-median with facilities solution is upper bounded by $cost(F) + 2 \cdot \opt$.
\end{lemma}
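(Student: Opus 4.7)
The plan is to exhibit an explicit feasible connected $k$-median with facilities solution whose cost is at most $\text{cost}(F) + 2\cdot\opt$. I would reuse the clusters of the optimal connected $k$-median solution (so that connectivity is preserved for free) and only reassign each cluster to a carefully chosen facility. Concretely, let $P_1^*,\ldots,P_k^*$ be the optimal clusters with centers $c_1^*,\ldots,c_k^*$, and for each $i\in[k]$ define $f_i^* = \arg\min_{f\in F} d(c_i^*, f)$. The clusters $P_1^*,\ldots,P_k^*$ together with $f_1^*,\ldots,f_k^*$ form a feasible connected $k$-median with facilities solution; the goal is then to bound its cost.

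The key step is a pointwise triangle inequality argument of the kind used in the nesting technique for hierarchical clustering. For every vertex $v$, write $g(v)\in F$ for a facility minimizing $d(v,g(v))$. Fix $i$ and $v\in P_i^*$. Since $f_i^*$ is the facility closest to $c_i^*$, we have $d(c_i^*, f_i^*)\le d(c_i^*, g(v))$, and hence by two applications of the triangle inequality
\begin{equation*}
d(v, f_i^*) \;\le\; d(v, c_i^*) + d(c_i^*, f_i^*) \;\le\; d(v, c_i^*) + d(c_i^*, g(v)) \;\le\; 2\,d(v, c_i^*) + d(v, g(v)).
\end{equation*}

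Summing this inequality over all $v\in P_i^*$ and then over $i\in[k]$, the first term contributes at most $2\,\opt$ because it is twice the cost of the optimal connected $k$-median solution, and the second term contributes exactly $\sum_{v\in V}\min_{f\in F} d(v,f) = \text{cost}(F)$, since $P_1^*,\ldots,P_k^*$ partition $V$. This yields
\begin{equation*}
\sum_{i=1}^{k}\sum_{v\in P_i^*} d(v, f_i^*) \;\le\; 2\,\opt + \text{cost}(F),
\end{equation*}
which upper bounds the cost of the optimal connected $k$-median with facilities solution. No genuine obstacle is expected here; the main thing to be careful about is the two-step triangle inequality that introduces $g(v)$ in order to couple the facility cost $d(c_i^*, f_i^*)$ to both $\opt$ and $\text{cost}(F)$, rather than the weaker bound $\opt + \sum_i |P_i^*|\cdot d(c_i^*,f_i^*)$ that one obtains from a single application of the triangle inequality.
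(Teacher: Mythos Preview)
Your proof is correct and essentially identical to the paper's own argument: both reuse the optimal connected clusters $P_1^*,\ldots,P_k^*$, reassign cluster $i$ to the facility $f_i^*$ closest to $c_i^*$, and bound $d(v,f_i^*)$ pointwise via the two-step triangle inequality through $c_i^*$ and the facility nearest $v$ (your $g(v)$ is the paper's $f_v$), before summing to obtain $\text{cost}(F)+2\,\opt$.
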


\begin{proof}
    Let the clusters $P_1^*$,\ldots,$P_k^*$ with centers $c_1^*,\ldots,c_k^*$ be a connected $k$-median solution with cost $\opt$. For every $i \in [k]$ we set $f_i^* = \argmin_{f \in F} d(f,c_i^*)$. Let $v \in P_i^*$, we would like to bound $d(v,f_i^*)$. Let $f_v$ be the facility in $F$ closest to $v$. Then it holds that $d(c_i^*,f_i^*) \leq d(c_i^*,f_v)$. By using the triangle inequality we obtain $d(c_i^*,f_i^*) \leq d(c_i^*,v) + d(v,f_v)$. Additionally $d(v,f_i^*) \leq d(v,c_i^*) + d(c_i^*,f_i^*)$ and thus:
    \begin{equation*}
        d(v,f_i^*) \leq  d(v,f_v) + 2 d(v,c_i^*)
    \end{equation*}

    Thus the clusters $P_1^*$,\ldots,$P_k^*$ and facilities $f_1^*,\ldots,f_k^*$ form a connected $k$-median with facilities solution with cost
    \begin{align*}
        \sum_{i=1}^k \sum_{v \in P_i^*} d(v,f_i^*) & \leq \sum_{i=1}^k \sum_{v \in P_i^*} d(v,f_v) + 2 d(v,c_i^*)\\
        &= \sum_{v \in V} d(v,f_v) + 2 \sum_{i=1}^k \sum_{v \in P_i^*} d(v,c_i^*)\\
        &= cost(F) + 2 \opt
    \end{align*}
\end{proof}

We can thus use a constant approximation for regular $k$-median to obtain a facility set of size $k$ and can then calculate the optimal solution of the resulting connected $k$-median with facilities instance to obtain a constant approximation for the problem in FPT time:

% \begin{theorem}\label{thm:tw_median}
%     Given a nice tree decomposition of a graph $G$ with tree-width $w-1$ and $O(nw)$ bag nodes and a natural number $k$, one can calculate a $(2\alpha +4)$-approximation for connected $k$-median in FPT-time with respect to the parameters $w$ and $k$, where $\alpha$ is the approximation ratio of a given $k$-median approximation algorithm.
% \end{theorem}

\begin{theorem}\label{thm:tw_median}
    Given a nice tree decomposition of a graph $G$ with tree-width $w$ and $O(nw)$ bag nodes and a natural number $k$, as well as an $\alpha$-approximation algorithm for unconnected $k$-median with running time $p(n)$, one can calculate a $(2\alpha +4)$-approximation for connected $k$-median in $O\left(p(n) + n \cdot (wk)^{O(w)}\right)$ time , which lies in FPT with respect to the parameter $\max(w,k)$.
\end{theorem}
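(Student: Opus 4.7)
The plan is to mirror the pipeline already used for connected $k$-center in Section~\ref{sec:tw_center_apx}: first obtain a small facility set via an unconnected algorithm, then solve the connected $k$-median with facilities problem optimally on the given tree decomposition, and finally convert the facility solution back into a feasible connected $k$-median solution by selecting a real center in each cluster. The three steps are glued together by Lemma~\ref{lem:facility_nesting_median} (nesting) and Theorem~\ref{thm:con_median_facilities} (the FPT dynamic program).

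First I would run the given $\alpha$-approximation for unconnected $k$-median in time $p(n)$ and let $F$ be the resulting set of at most $k$ centers. Since any connected $k$-median solution is also a feasible unconnected $k$-median solution, the optimum cost $\opt$ of connected $k$-median upper bounds the optimum of unconnected $k$-median, and therefore $cost(F) \le \alpha \cdot \opt$. Next I invoke Theorem~\ref{thm:con_median_facilities} with this $F$: its running time is $O(n \cdot k^2 \cdot |F|^{w+1} (b_{w+1})^2 p(w))$, and since $|F| \le k$ and $b_{w+1} \in 2^{O(w \log w)}$, this is absorbed into $O(n\cdot (wk)^{O(w)})$. Applying Lemma~\ref{lem:facility_nesting_median} to $F$, the cost of the optimal connected $k$-median with facilities solution is bounded by $cost(F)+2\opt \le (\alpha+2)\opt$, and the dynamic program returns a solution achieving exactly this optimum.

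The remaining step is to replace each facility by a real vertex of the same cluster. For each cluster $P_i$ of the returned solution with facility $f_i \in F$, let $c_i \in \argmin_{v\in P_i} d(f_i,v)$. Then $c_i \in P_i$, so $(P_i,c_i)_i$ is a valid connected $k$-median solution. By the triangle inequality,
\begin{equation*}
\sum_{v \in P_i} d(v,c_i) \le \sum_{v \in P_i} \bigl(d(v,f_i)+d(f_i,c_i)\bigr) \le \sum_{v \in P_i} d(v,f_i) + |P_i|\cdot d(f_i,c_i),
\end{equation*}
and since $d(f_i,c_i) \le d(f_i,v)$ for every $v \in P_i$ by the choice of $c_i$, the second summand is bounded by $\sum_{v\in P_i}d(v,f_i)$. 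Hence the total cost at most doubles, giving a final cost of at most $2(\alpha+2)\opt = (2\alpha+4)\opt$.

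I do not expect any real obstacle: the nesting lemma and the facility dynamic program do all the heavy lifting, and the center-replacement step is a two-line triangle-inequality argument that is standard in median-style reductions. The only place to be slightly careful is to check that $|F|\le k$ and that the factor $b_{w+1}$ is indeed $2^{O(w\log w)}$ so that the combined running time collapses to $O(p(n)+n\cdot(wk)^{O(w)})$, which is FPT in $\max(w,k)$.
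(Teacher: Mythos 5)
Your proposal matches the paper's proof essentially step for step: compute the facility set $F$ via the unconnected $\alpha$-approximation, note $cost(F)\le\alpha\cdot\opt$, apply Lemma~\ref{lem:facility_nesting_median} and Theorem~\ref{thm:con_median_facilities} to get a connected solution with facilities of cost at most $(\alpha+2)\opt$ in FPT time, and then pick in each cluster the vertex closest to its facility as center, doubling the cost via the triangle inequality to $(2\alpha+4)\opt$. The argument is correct and is the same route the paper takes.
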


%$p_1(n) + n \cdot (wk)^{O(w)}$

\begin{proof}
   We can use the approximation algorithm to calculate an $\alpha$-approximation for the non-connected $k$-median problem on $V$ with metric $d$ in $p(n)$. Let $F$ be the resulting center set. Given that removing the connectivity constraint only decreases the cost, we have $cost(F) \leq \alpha \opt$. By Lemma~\ref{lem:facility_nesting_median} the cost of the optimal connected $k$-median with facilities solution using $F$ as the facilities is upper bounded by $(\alpha + 2) \opt$. By Theorem~\ref{thm:con_median_facilities} we can calculate the respective clusters $P_1,...,P_k$ and center $c_1,...,c_k$ in $O(n \cdot (wk)^{O(w)})$. By choosing for every cluster $P_i$ the vertex $v \in P_i$ minimizing $d(v,f_i)$ as the center the distance of every vertex to the center is bounded by twice the distance to $f_i$ and we obtain a connected $k$-median solution with cost $(2\alpha + 4) \opt$.
\end{proof}

By combining this with the $(2 + \epsilon)$-approximation for $k$-median by Cohen-Addad et al.~\cite{cohen2approx} we obtain the following corollary:

\begin{corollary}
    Given a nice tree decomposition of a graph $G$ with tree-width $w$ and $O(nw)$ bag nodes and a natural number $k$. One can calculate a $(8 + \epsilon)$-approximation for connected $k$-median in FPT-time with respect to the parameter $\max(w,k)$.
\end{corollary}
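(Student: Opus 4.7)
The plan is to instantiate Theorem~\ref{thm:tw_median} with the best-known approximation algorithm for unconstrained $k$-median as the black-box subroutine. Concretely, I would invoke the $(2 + \epsilon')$-approximation algorithm of Cohen-Addad et al.~\cite{cohen2approx}, which runs in polynomial time $p(n)$ for any fixed $\epsilon' > 0$, and pass its output centers to the dynamic program underlying Theorem~\ref{thm:tw_median}.

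Plugging $\alpha_1 = 2 + \epsilon'$ into the approximation guarantee $2\alpha_1 + 4$ from Theorem~\ref{thm:tw_median} yields a ratio of $8 + 2\epsilon'$ for connected $k$-median. Choosing $\epsilon' := \epsilon/2$ then immediately produces the claimed $(8 + \epsilon)$-approximation. For the running time, Theorem~\ref{thm:tw_median} gives a bound of $O\!\left(p(n) + n \cdot (wk)^{O(w)}\right)$; since $p(n)$ is polynomial in $n$, the whole expression has the form $\mathrm{poly}(n) \cdot f(\max(w,k))$ for a suitable function $f$, which is exactly the definition of FPT with respect to the parameter $\max(w,k)$.

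There is essentially no real obstacle here, as the corollary is a direct composition of two existing results. The only points worth verifying carefully are that (i) the $\epsilon'$-dependence in Cohen-Addad et al.\ stays in the constants rather than leaking into the exponent of $n$, so that the $\mathrm{poly}(n)$ factor is preserved, and (ii) the rescaling $\epsilon' = \epsilon/2$ is compatible with the ``for any $\epsilon > 0$'' phrasing of the corollary, which it is. No further structural argument is required beyond chaining these two results.
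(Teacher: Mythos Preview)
Your proposal is correct and matches the paper's approach exactly: the paper obtains the corollary by the single sentence ``By combining this with the $(2 + \epsilon)$-approximation for $k$-median by Cohen-Addad et al.~\cite{cohen2approx} we obtain the following corollary,'' which is precisely the instantiation $\alpha = 2+\epsilon'$ in Theorem~\ref{thm:tw_median} that you spell out.
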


One may note that the dynamic program for the connected $k$-median with facilities problem does also work if $d$ is not a metric. Accordingly we can also solve the connected $k$-means with facilities problem with it. Given that $k$-means fulfills $(8,2)$-nesting \cite{lin2010general}, we can obtain a similar result as Lemma~\ref{lem:facility_nesting_median}. Using this we end up with a similar approximation result as for $k$-median:

\begin{theorem}\label{thm:tw_means}
    Given a nice tree decomposition of a graph $G$ with tree-width $w$ and $O(nw)$ bag nodes and a natural number $k$, as well as an $\alpha$-approximation algorithm for unconnected $k$-means with running time $p(n)$, one can calculate a $(8\alpha +32)$-approximation\footnote{An additional factor of two is added because we cannot use the regular triangle inequality when replacing the facilities by vertices that are contained in the clusters to obtain feasible centers.} for connected $k$-means in $O\left(p(n) + n \cdot (wk)^{O(w)}\right)$ time, which is in FPT with respect to the parameter $\max(w,k)$.
\end{theorem}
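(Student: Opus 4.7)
The plan is to mirror the proof of Theorem~\ref{thm:tw_median}, adapting each step to squared distances. First I would run the given $\alpha$-approximation for unconnected $k$-means on $V$ in time $p(n)$, obtaining a set $F$ of $k$ candidate centers. Writing $\opt$ for the optimal connected $k$-means cost and $cost(F) = \sum_{v \in V} \min_{f \in F} d(v,f)^2$, removing the connectivity constraint can only decrease the objective, so $cost(F) \leq \alpha \cdot \opt$.

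Next I would prove a $k$-means analogue of Lemma~\ref{lem:facility_nesting_median}. Let $P_1^*, \ldots, P_k^*$ be optimal connected clusters with centers $c_1^*, \ldots, c_k^*$, and set $f_i^* = \argmin_{f \in F} d(c_i^*, f)$. For any $v \in P_i^*$, letting $f_v = \argmin_{f \in F} d(v, f)$, the choice of $f_i^*$ together with the triangle inequality yields $d(c_i^*, f_i^*) \leq d(c_i^*, f_v) \leq d(c_i^*, v) + d(v, f_v)$, hence $d(v, f_i^*) \leq 2 d(v, c_i^*) + d(v, f_v)$. Squaring and applying $(a+b)^2 \leq 2a^2 + 2b^2$ gives
\begin{equation*}
 d(v, f_i^*)^2 \leq 8\, d(v, c_i^*)^2 + 2\, d(v, f_v)^2.
\end{equation*}
Summing over all vertices produces a connected $k$-means-with-facilities solution using $F$ of cost at most $8\,\opt + 2\,cost(F) \leq (8 + 2\alpha)\,\opt$, which is exactly the $(8,2)$-nesting of \cite{lin2010general}.

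Third, I would invoke an obvious variant of the dynamic program behind Theorem~\ref{thm:con_median_facilities}. The recursion never exploits the fact that $d$ is a metric; distances enter only when a forget node adds $d(v, a(v))$ to the stored cost and when the finished-cluster cost $\min_{f \in F} \sum_{v \in P_i} d(v,f)$ is assessed. Replacing both quantities by their squared analogues produces a program that computes the optimal connected $k$-means-with-facilities clustering on $F$ within the same running-time bound $n \cdot (wk)^{O(w)}$, and by the previous paragraph its value is at most $(2\alpha + 8)\,\opt$.

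Finally, each cluster $P_i$ is paired with a facility $f_i \in F$ that may lie outside $P_i$, so I would choose as the actual center the vertex $v_i \in P_i$ minimizing $d(v_i, f_i)$. For any $v \in P_i$, two applications of the triangle inequality plus $(a+b)^2 \leq 2a^2 + 2b^2$ give $d(v, v_i)^2 \leq 2\,d(v, f_i)^2 + 2\,d(v_i, f_i)^2 \leq 4\,d(v, f_i)^2$, which is exactly the extra factor of two flagged in the footnote (squared). Multiplying the bound from the previous step by $4$ yields a final cost at most $(8\alpha + 32)\,\opt$; the running time is the sum of those of the four steps, which is $O(p(n) + n \cdot (wk)^{O(w)})$. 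The only genuinely delicate point is propagating the correct constants through the two squared-triangle-inequality steps, since squared distances violate the usual triangle inequality and the $8$ and $4$ blow-ups must be charged in the right order.
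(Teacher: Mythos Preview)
Your proposal is correct and follows essentially the same route as the paper: the paper also observes that the dynamic program for connected $k$-median with facilities does not rely on $d$ being a metric and hence solves the $k$-means-with-facilities variant verbatim, then invokes the $(8,2)$-nesting property of $k$-means from \cite{lin2010general} (which you rederive explicitly) to bound the facilities optimum by $(2\alpha+8)\,\opt$, and finally pays the extra squared factor when moving from facilities to in-cluster centers to arrive at $8\alpha+32$.
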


\section{Hardness of approximation for connected \texorpdfstring{$k$}{k}-center}\label{sec:hardness}

%\Jan{Refer to $\phi$ where it makes sense, maybe find better name}

In this section we will show that it is NP-hard to obtain a constant approximation for the assignment version of the connected $k$-center problem. As we prove later, this also carries over to the non-assignment version. The hardness is based on a reduction from the 3-SAT problem. In this problem we are given a Boolean formula $\phi$ over $n$ variables $x_1,\ldots,x_n$ consisting of $m$ clauses $\cla_1,\ldots,\cla_m$, where each $\cla_j$ contains at most three literals from the set $\{x_i, \overline{x}_i \mid i \leq n\}$. The question is whether one can assign truth values to the variables such that each clause contains at least one literal that is true. Previous reductions already used that in connected $k$-center we can model the binary decision whether a variable (or a literal to be more precise) is true by assigning a corresponding point to one of two possible centers \cite{ge2008joint, drexler2024connected}. By doing this, one obtains that it is NP-hard to approximate the assignment version of connected $k$-center with an approximation factor better than $3$ even if $k = 2$. The main idea to obtain stronger inapproximability results is to increase the number of centers to create multiple layers, where in each layer the assignment of the vertices to the centers needs to increase the radius by an additive value if the formula cannot be fulfilled. More precisely the centers will only be connected to the first layer and each of them is paired up with another center such that they model the Boolean formula. The vertices corresponding to the literals of the formula connect them to an instance of the second layer. In this layer the centers are then again paired up (multiple times) with other centers to again model the formula where again the literals connect them with the next layer and so on. 

By using sufficiently many centers and a fitting metric, one can ensure that if the formula cannot be fulfilled there exists a vertex in the last layer such that on the path connecting it to its center we have an undesired assignment (corresponding to an unfulfilled clause or an invalid truth value of a Boolean variable) in each layer. This results in a radius linear in the number of layers while the minimum radius would have been $1$ if the formula could have been fulfilled.

Let us first discuss how we can model the formula $\phi$ using two centers. We create a connectivity graph $G = (V,E)$ with the vertex set

\begin{equation*}
    V = \{T,F\} \cup \{x_i,\overline{x}_i,a_i\mid i \in [n]\} \cup \{b_i\mid i \in [m]\},
\end{equation*}
where $T$ and $F$ are chosen as centers. We connect these vertices by adding the following edges to $E$:

\begin{itemize}
    \item For every $i \in [n]$, we connect $T$ and $F$ to $x_i$ and $\overline{x}_i$, respectively. Additionally we also connect $a_i$ to $x_i$ and $\overline{x}_i$.
    \item For every $j \in [m]$ and any literal $l$ contained in $\cla_j$, we connect the vertex corresponding to $l$ with the vertex $b_j$.
\end{itemize}

As for a distance metric we map our vertices to positions in $\mathbb{N} \cup \{\perp\}$ and use the following distance function:
\begin{equation*}
d'(a,b) = \begin{cases}
  0  & \text{if }a = b, \\
  1 & \text{if } a = \perp \oplus \text{ } b = \perp,\\
  2 & \text{else}.
\end{cases}
\end{equation*}
We place $T$ and the vertices $\{b_j\}_{j \in [m]}$ at position $1$, $F$ and the vertices in $\{a_i\}_{i \in [n]}$ at position $2$ and $\{x_i, \overline{x}_i\}_{i \in [n]}$ at position $\perp$. As a result the vertices $\{b_j\}_{j \in [m]}$ have distance $0$ to $T$ and $2$ to $F$ while for $\{a_i\}_{i \in [n]}$ it is the other way around and all vertices corresponding to literals have distance $1$ to both $T$ and $F$. An example of an instance created by this reduction can be found in Figure~\ref{fig:reduction_one_layer}. One might note that one could have created such distances using a simpler metric but the advantage of our definition is that it can be easily extended once we introduce additional layers. 

\begin{figure}
\centering
    \includegraphics[width= 0.5\textwidth]{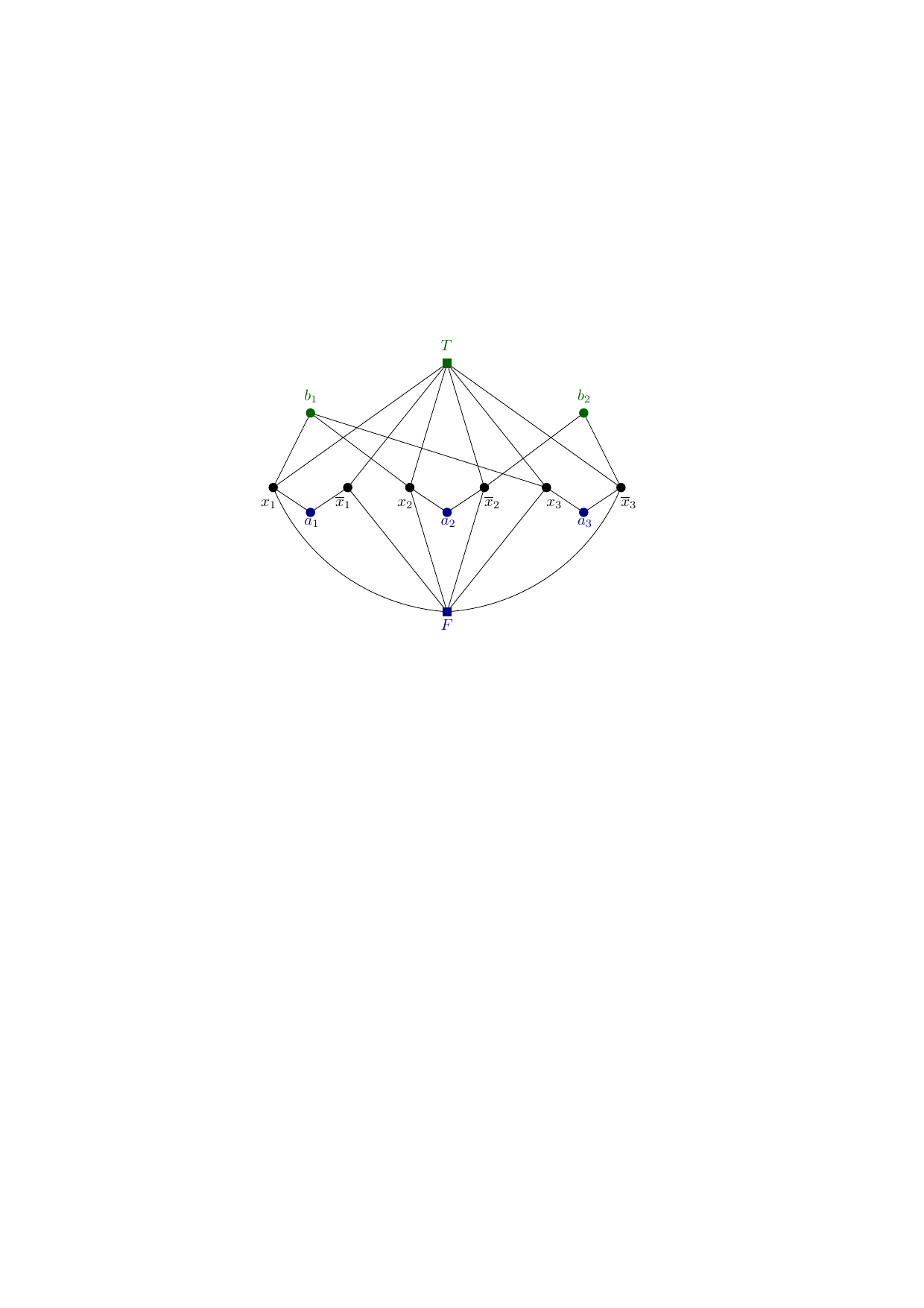}
    \caption{The connectivity graph corresponding to the 3-SAT formula $(x_1 \lor x_2 \lor x_3) \land (\overline{x}_2 \lor \overline{x}_3)$. Green vertices are placed at position $1$, blue ones at $2$ and black ones at $\perp$.}\label{fig:reduction_one_layer}
\end{figure}

Any assignment to the centers in this graph with radius $1$ can be transformed into a solution of the formula $\phi$ and vice versa. The idea behind this is that we can interpret all literals whose corresponding vertices are assigned to $T$ as being true and all assigned to $F$ to be false. Since for any $i \in [n]$ the distance of $a_i$ to $T$ is $2$, we know that $a_i$ needs to be assigned to $F$ which together with the connectivity constraint implies that $x_i$ or $\overline{x}_i$ is assigned to $F$. Intuitively this ensures that we do not allow $x_i$ to be true and false at the same time. For every $j \in [m]$ the vertex $b_j$ has distance $2$ to $F$ which means that $b_j$ needs to be assigned to $T$. Since $b_j$ is only neighbored to vertices that correspond to literals contained in $\cla_j$, we know that one of these vertices is assigned to $T$ which implies that at least one of the literals in $\cla_j$ is set to be true and $\cla_j$ is fulfilled. Since this holds for any clause we know that the resulting truth values fulfill the formula $\phi$ if we consider an assignment with radius $1$. The transformation of a solution of $\phi$ into an assignment works similarly. A more formal argument can be found in the paper by Drexler et al.~\cite{drexler2024connected}.

Before providing our detailed construction, we want to sketch how one can add an additional layer to obtain an instance with two layers. The addition of further layers follows the same ideas. If we have two layers the positions of the vertices consist of two entries and thus are elements of $(\mathbb{N} \cup \{\perp\})^2$. For an arbitrary $L \in \mathbb{N}$ the distance $d$ of two tuples $\alpha,\beta \in (\mathbb{N} \cup \{\perp\})^L$ is defined as:
\begin{equation*}
    d(\alpha,\beta) = \sum_{i=1}^{L} d'(\alpha_i,\beta_i).
\end{equation*}

In our construction we use multiple copies of the instance only using one layer. For a $p \in [2]$, $t,f \in [3]$ with $t \neq f$ let $I(p,\{t,f\})$ denote a copy of the one layer instance where the centers are replaced with normal vertices, the first coordinate of the position of every vertex is $p$ and the second coordinate is $t$ if the vertices are placed at the same position as $T$ and $f$ if they are placed at the same position as $F$ (otherwise the second coordinate stays $\perp$). The only vertices in this subgraph that are connected with the remainder of the instance will be $T$ and $F$. Let us assume that $T$ and $F$ are assigned to centers whose second coordinate is $t$ and $f$, respectively. By the same reasoning as for the instance with one layer, we obtain that an assignment adding the vertices $I(p,\{t,f\})$ with second coordinate $t$ to the same cluster as $T$ and the ones with second coordinate $f$ to the same cluster as $F$, corresponds to a solution of the formula $\phi$ and vice versa. 
%check42
%By the same reasoning as for the instance with one layer, we obtain that an assignment to these centers, such that any vertex in $I(p,\{t,f\})$, whose second coordinate is not $\perp$, gets assigned to the center with the same second coordinate, corresponds to a solution of the formula $\phi$ and vice versa. \Jan{fix sentence}

If we further assume that the first coordinate of the centers, to which $T$ and $F$ get assigned to, is $p$ and that there exists a solution for $\phi$, we know that it is possible to assign all vertices in $I(p,\{t,f\})$ such that the distance of any vertex to its center is at most $1$. If we assume instead that the first coordinate of the centers, to which $T$ and $F$ get assigned to, is unequal $p$ and that there exists no solution for $\phi$, we know that at least one vertex in $I(p,\{t,f\})$ gets assigned to a center whose position differs in both coordinates, which results in a distance of $4$. The goal of our construction is to ensure that there always exists a copy of an instance $I(p,\{t,f\})$ such $T$ and $F$ get assigned to centers whose coordinates are unequal to $p$ if $\phi$ cannot be fulfilled, while there exists an assignment where this does not happen if there is a solution for $\phi$.

For every $a \in [3]$ we have two centers $T_a$ and $F_a$ which are placed at $(1,a)$ and $(2,a)$. Additionally we have for every $i \in [n]$ two vertices $x_{i,a}$ and $\overline{x}_{i,a}$ that are both placed at position $(\perp,a)$ and connected to $T_a$ and $F_a$. Instead of adding vertices to represent the clauses and ensure that not both $x_{i,a}$ and $\overline{x}_{i,a}$ get assigned to $T_a$, we will use the copies of the instance with one layer.

For every $a,b \in [3]$ with $a < b$ and any pair of indices $i_a,i_b \in [n]$, we add a copy of $I(2,\{a,b\})$ to our instance where we connect the vertex $T$ of this instance with $x_{i_a,a}$ and $\overline{x}_{i_a,a}$ and the vertex $F$ with $x_{i_b,b}$ and $\overline{x}_{i_b,b}$. We say for any $e \in [3]$ and any $i \in [n]$ that $x_i$ is \emph{violated} regarding $e$ if both $x_{i,e}$ and $\overline{x}_{i,e}$ get assigned to $T_e$. We also say that $e$ itself is violated if there exists any $i \in [m]$ such that $x_i$ is violated regarding $e$. If both $a$ and $b$ are violated by an assignment then there exist indices $i_a$ and $i_b$ that are violated regarding $a$ and $b$ which means that $x_{i_a,a}$ and $\overline{x}_{i_a,a}$ are assigned to $T_a$ and $x_{i_b,b}$ and $\overline{x}_{i_b,b}$ are assigned to $T_b$. By our construction we know that there exists a copy of $I(2,\{a,b\})$ that is only neighbored to these vertices which implies that every vertex contained in it is assigned to a center whose first coordinate is $1 \neq 2$.

Similarly we say for any $e \in [3]$ and any $j \in m$ that $\cla_j$ is \emph{fulfilled} regarding $e$ if there exists a $x_i \in \cla_j$ such that $x_{i,e}$ is assigned to $T_e$ or if there exists a $\overline{x}_{i} \in \cla_j$ such that $\overline{x}_{i,e}$ is assigned to $T_e$. We call $e$ unfulfilled if there exists a clause $\cla_j$ that is not fulfilled regarding $e$. For every $a,b \in [3]$ with $a < b$ and any pair of indices $j_a,j_b \in [m]$ we add a copy of $I(1,\{a,b\})$ to our graph. We connect the vertex $T$ in this copy to the vertices $x_{i,a}$ and $\overline{x}_{i,a}$ corresponding to literals contained in $\cla_{j_a}$ and the vertex $F$ with the vertices $x_{i,b}$ and $\overline{x}_{i,b}$ corresponding to literals contained in $\cla_{j_b}$. As in the previous section it holds that if $a$ and $b$ are unfulfilled there exists a copy of $I(1,\{a,b\})$ such that every vertex in that copy is assigned to a center whose first coordinate is $2 \neq 1$. A sketch depicting the resulting structure of the connectivity graph can be found in Figure~\ref{fig:reduction_two_layers}.

\begin{figure}
\centering
    \includegraphics[width= 0.9\textwidth]{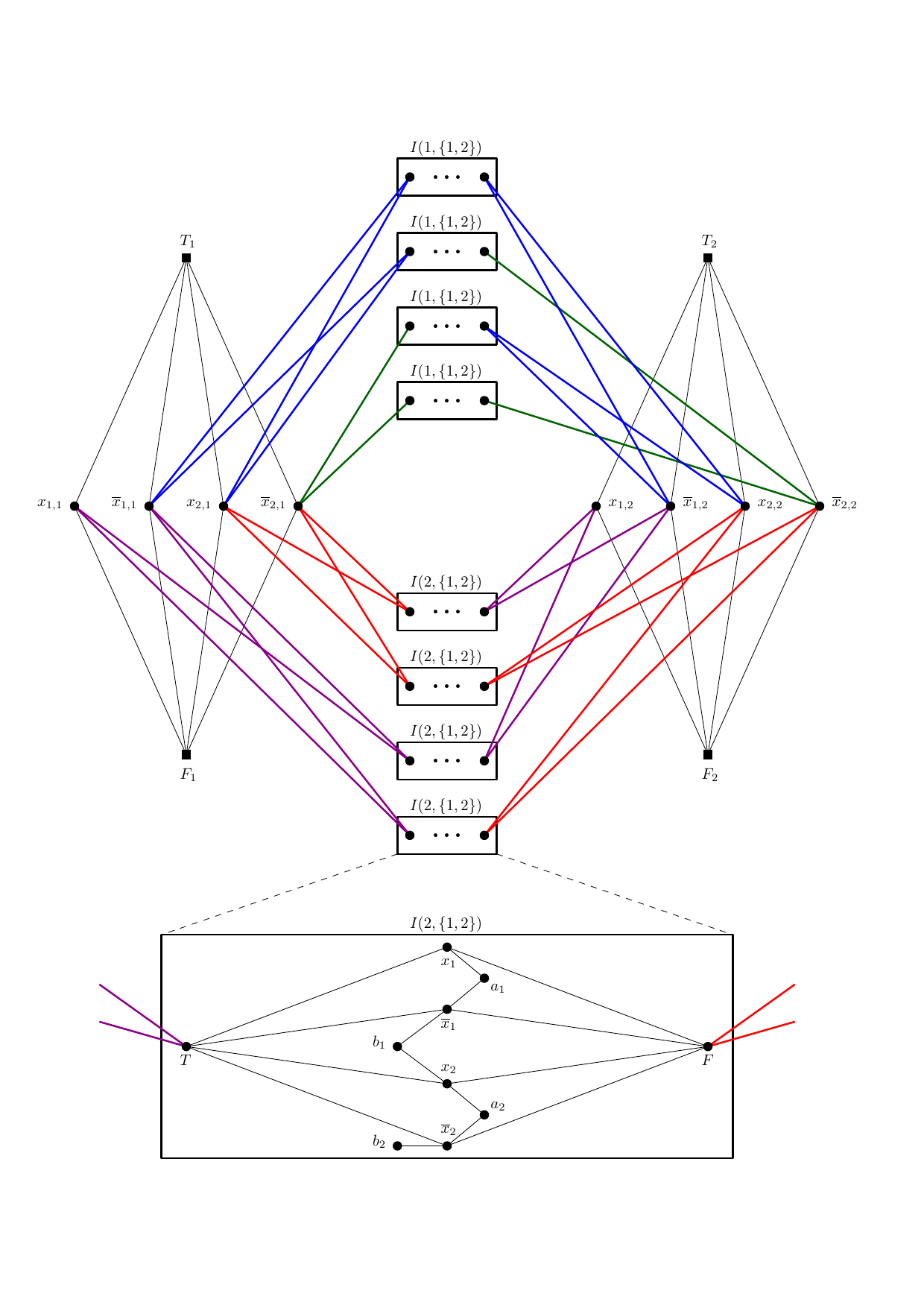}
    \caption{A subgraph of the connectivity graph with two layers corresponding to the 3-SAT formula $ (\overline{x}_1 \lor x_2) \land (\overline{x}_2)$. Blue edges correspond to the clause $(\overline{x}_1 \lor x_2)$, green ones to the clause $(\overline{x}_2)$, magenta ones to the variable $x_1$ and red ones to the variable $x_2$. The vertices and edges in the subgraph $I(p,\{1,2\})$ (with $p \in [2]$) are for the most part not depicted. The complete connectivity graph also contains the vertices $T_3$, $F_3$, $x_{1,3}$, $\overline{x}_{1,3}$, $x_{2,3}$ and $\overline{x}_{2,3}$, as well as $8$ copies of one-layer instances connecting them with the vertices with first coordinate $1$ and $8$ copies connecting them with the vertices with first coordinate $2$. }\label{fig:reduction_two_layers}
\end{figure}

If there exists a solution for $\phi$ we can assign for each $e \in [3]$ the vertices $\{x_{i,e},\overline{x}_{i,e}\}_{i \in [n]}$ according to this solution and do the same within the copies of the 1-layer instance to obtain a radius of $1$. If there is no solution for $\phi$ it holds for every assignment of the vertices that every $e \in [3]$ is violated or unfulfilled. Thus there exist two unfulfilled or two violated indices $a,b \in [3]$ with $a \neq b$. This means that there exists a $p \in [2]$ and a copy of $I(p,\{a,b\})$ such that every vertex in that copy is assigned to centers whose first coordinate is unequal $p$. By our previous argument, this results in a radius of $4$. This implies that it is NP-hard to approximate the assignment version of connected $k$-center with an approximation factor smaller than $4$.

Now we provide a formal construction for an arbitrarily chosen number $L$ of layers. This construction creates an instance for the connected clustering problem that provides an assignment with radius $1$ if the formula $\phi$ can be fulfilled and cannot be solved with a radius smaller $2 L$ otherwise. 

Let $\mathcal{S}_1,\ldots,\mathcal{S}_L$ be natural numbers that will be defined later fulfilling that $\mathcal{S}_1 = 2$ and $\mathcal{S}_i < \mathcal{S}_j$ for any $ i < j$. We define a distance metric $(\mathcal{V}_L,d)$, with set $\mathcal{V}_L = ([\mathcal{S}_1]\cup\{\perp\}) \times ([\mathcal{S}_2]\cup\{\perp\})\times \ldots \times ([\mathcal{S}_L]\cup\{\perp\})$ and metric $d$: For each $\alpha = (\alpha_1,\ldots,\alpha_L), \beta = (\beta_1,\ldots,\beta_L) \in \mathcal{V}_L$ it is defined as $d(\alpha,\beta) = \sum_{i=1}^L  d'(\alpha_i,\beta_i)$ where for all $a,b\in \mathbb{N} \cup \{\perp\}$:
\begin{equation*}
d'(a,b) = \begin{cases}
  0  & \text{if }a = b, \\
  1 & \text{if } a = \perp \oplus \text{ } b = \perp,\\
  2 & \text{else.}
\end{cases}
\end{equation*}

We iteratively define an instance with $k = \prod_{i = 1}^L \mathcal{S}_i$ many centers and $L$ layers. For a fixed $l \in \{0,\ldots, L\}$ and a tuple $\pi^l = (\pi^l_1,\ldots,\pi^l_l) \in [\mathcal{S}_1] \times [\mathcal{S}_2] \times \ldots \times [\mathcal{S}_l]$ and a tuple of sets $\Psi^l =(\Psi^l_1,\ldots,\Psi^l_{L-l})$ where for all $i \leq L - l$ it holds that $\Psi^l_i \subseteq [\mathcal{S}_{i + l}]$ and $|\Psi^l_i| =  \mathcal{S}_{i}$ we define the instance $I(l,\pi^l, \Psi^l)$ as follows:

For $l = L$ it holds that $I(L,\pi^L)$ only contains a single point $Q_L$ that is placed at position $\pi^L$. Let now for a fixed $l + 1 > 0$ and all fitting tuples $\pi^{l+1} \in [\mathcal{S}_1] \times \ldots \times [\mathcal{S}_{l+1}]$ and $\Psi^{l+1} \in \mathbb{P}([\mathcal{S}_{l+2}])\times \ldots \times \mathbb{P}([\mathcal{S}_{L}])$ ($\mathbb{P}(S)$ denotes the power set of a set $S$) the instance $I(l + 1,p^{l+1},\Psi^{l+1})$ be already defined. We will use these instances in the definition of $I(l,\pi^l,\Psi^l$).

One may note that $\Psi^l_{1}$ contains only two numbers. Let us denote the smaller one as $t$ and the larger one as $f$.

For every $\epsilon \in \Psi^l_{2} \times \ldots \times \Psi^l_{L-l}$, we add a vertex $Q_{l, t \cdot \epsilon}$ which is also called $T_{l,\epsilon}$ that is placed at position $\pi^l \cdot t \cdot \epsilon$ to $I(l,\pi^l,\Psi^l)$, where we define $\cdot$ as the concatenation of two tuples. Similarly we also add a vertex $Q_{l, f \cdot \epsilon}$ at position $\pi^l \cdot f \cdot \epsilon$ which is also called $F_{l,\epsilon}$. Additionally for every $i \in \{1,\ldots,n\}$ we will add two vertices $x_{i,\epsilon}$ and $\overline{x}_{i,\epsilon}$. Both vertices are placed at position $\pi^l \cdot \perp \cdot \epsilon$ and both are connected to $T_{l, \epsilon}$ and $F_{l, \epsilon}$ via an edge, respectively.

Furthermore for each $\Psi^{l+1}$ of length $L - l -1$ with $\Psi^{l+1}_i \subset \Psi^l_{i + 1}$ and $|\Psi^{l+1}_i| = \mathcal{S}_{i}$ there exist multiple copies of the instance $I(l + 1, \pi^{l}\cdot t, \Psi^{l+1})$ and $I(l + 1, \pi^{l}\cdot f, \Psi^{l+1})$. To be more precise let  $ \{\epsilon^{(1)},\ldots,\epsilon^{(u)}\} = 
\Psi^{l+1}_1 \times\ldots \times \Psi^{l+1}_{L-l-1}$, where the $\epsilon^{(i)}$ are in lexicographic order. We create a copy of $I(l + 1, \pi^l \cdot t, \Psi^{l+1})$ for each tuple $(j_1,\ldots,j_u) \in [m]^u$ which we denote as $I(l + 1, \pi^l \cdot t, \Psi^{l+1})[j_1,\ldots,j_u]$. For every $u' \in [u]$ we connect the vertex $Q_{\epsilon^{(u')}}$ in the respective copy with all $x_{i,\epsilon^{(u')}}$ such that $x_i \in \cla_{j_{u'}}$ and all $\overline{x}_{i,\epsilon^{(u')}}$ such that $\overline{x}_i \in \cla_{j_{u'}}$. In the same fashion we create a copy of $I(l + 1, \pi^l \cdot f, \Psi^{l+1})$ denoted as $I(l + 1, \pi^l \cdot f, \Psi^{l+1})[i_1,\ldots,i_u]$ for every tuple $(i_1,\ldots,i_u) \in [n]^u$ and for every $u' \in [u]$ we connect the vertex $Q_{\epsilon^{u'}}$ in this instance with $x_{i_{u'},\epsilon^{(u')}}$ and $\overline{x}_{i_{u'},\epsilon^{(u')}}$.

By applying this construction inductively, one ends up with the instance $I(0,([\mathcal{S}_1],\ldots,[\mathcal{S}_L]))$. In this instance we choose for every $\epsilon \in [\mathcal{S}_1] \times \ldots \times [\mathcal{S}_L]$ the vertex $Q_{0,\epsilon}$ as a center. 

To prove the correctness of this reduction we have to fix the values $\mathcal{S}_1,\ldots.,\mathcal{S}_L$. As mentioned above $\mathcal{S}_1 = 2$ and for every $l \in \mathbb{N}$ with $l \geq 2$ we define:
\begin{equation*}
\mathcal{S}_{l+1} =  (\mathcal{S}_l -1) 2\prod_{i=1}^{l-1} \binom{\mathcal{S}_{i+1}}{\mathcal{S}_i} + 1
\end{equation*}

%This values are chosen in such a way that the following technical property holds:

\begin{figure}
    \centering
    \includegraphics[width = 0.3\textwidth]{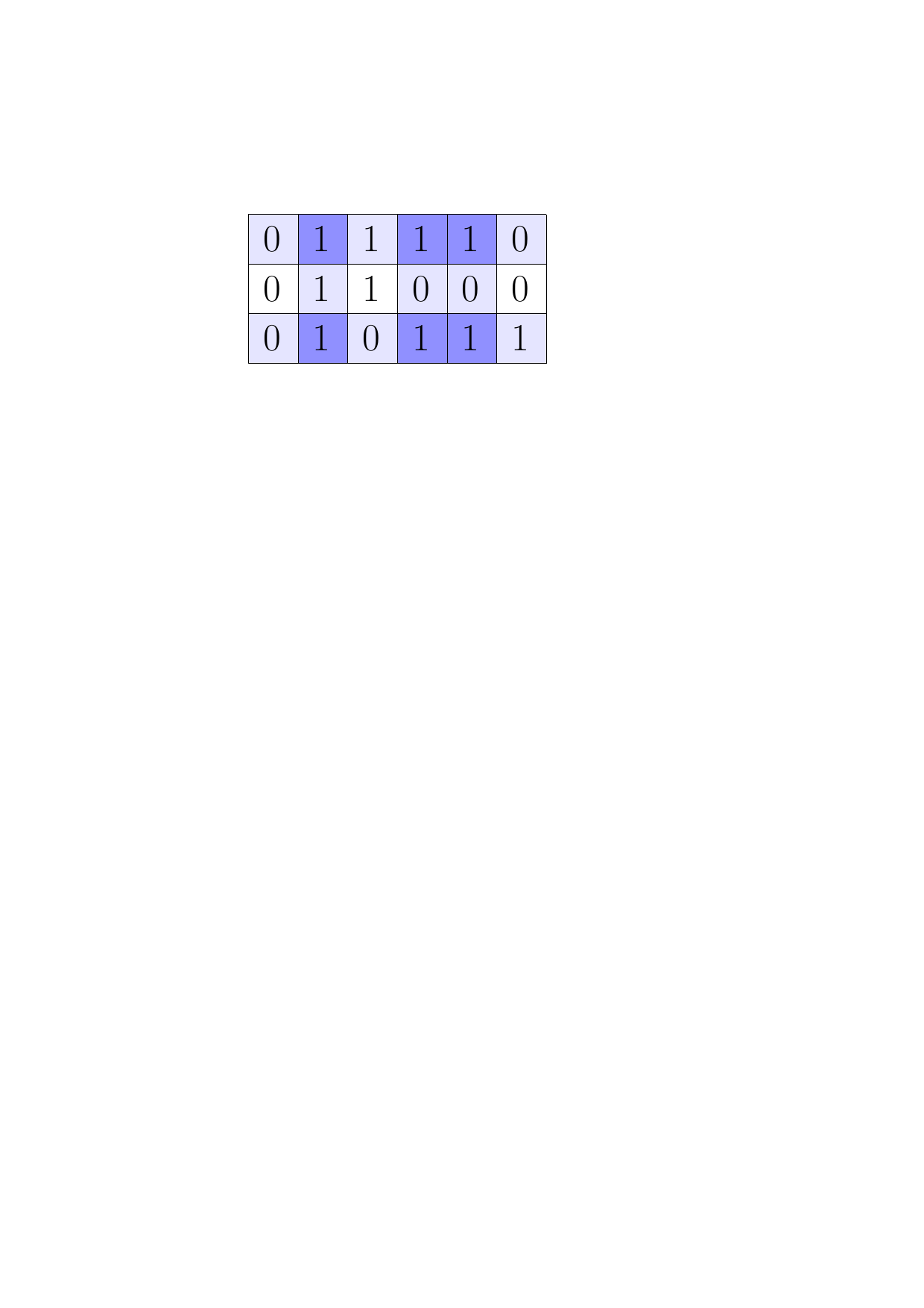}
    \caption{A binary $3\times6$ tensor with a constant $2 \times 3$ subtensor (blue).}\label{fig:subtensor}
\end{figure}

These values are chosen in such a way that they fulfill a certain technical property which can be expressed using \emph{tensors}:

\begin{definition}
    For a given natural number $o$ and $d_1,...,d_o$ we say that a binary $o$-th-order $d_1 \times d_2 \times \ldots \times d_o$-tensor is a function $T:[d_1] \times [d_2] \times \ldots \times [d_o] \rightarrow \{0,1\}$. 
    For any $S_1 \subseteq [d_1]$, $S_2 \subseteq [d_2]$,\ldots, $S_o \subseteq [d_o]$ we say that the entries for $S_1 \times \ldots \times S_o$ form a $|S_1| \times  \ldots \times |S_o|$ subtensor of $T$. We say that this subtensor is constant if either all of its entries are equal to zero or all of its entries are equal to $1$.
\end{definition}

The easiest way to think about binary tensors is probably just to imagine them as a $d$-dimensional table filled with values in $\{0,1\}$. Figure \ref{fig:subtensor} depicts a two dimensional tensor (i.e.\ a matrix) and a constant subtensor this way. We can show that we have chosen the $\mathcal{S}_i$ values in such a way that any binary subtensor using them as dimensions contains a constant subtensor of sufficient size:

\begin{lemma}
\label{lem:tensors}
For any $l \leq L$, $l \geq 2$ it holds that any $(l-1)$th-order $\mathcal{S}_2 \times \ldots \times \mathcal{S}_l$-tensor filled with binary values contains a constant $\mathcal{S}_1 \times\ldots \times \mathcal{S}_{l-1}$-subtensor. 
\end{lemma}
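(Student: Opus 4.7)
The plan is to prove the statement by induction on $l$, using a pigeonhole argument at each step that is precisely calibrated to the recursive definition of the $\mathcal{S}_i$ values.

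For the base case $l = 2$, we have a first-order $\mathcal{S}_2$-tensor, i.e.\ a binary sequence of length $\mathcal{S}_2 = (\mathcal{S}_1 - 1)\cdot 2 + 1 = 3$. By pigeonhole, at least two of its three entries share the same value, yielding a constant subtensor of length $\mathcal{S}_1 = 2$.

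For the inductive step, suppose the claim holds at level $l$ and consider an $l$-th order $\mathcal{S}_2 \times \ldots \times \mathcal{S}_{l+1}$-tensor $T$. The plan is to slice $T$ along its last axis into $\mathcal{S}_{l+1}$ many $(l-1)$-th order slices, each of shape $\mathcal{S}_2 \times \ldots \times \mathcal{S}_l$. By the induction hypothesis, each slice $k \in [\mathcal{S}_{l+1}]$ contains some constant $\mathcal{S}_1 \times \ldots \times \mathcal{S}_{l-1}$-subtensor, which is determined by subsets $S_2^{(k)} \subseteq [\mathcal{S}_2], \ldots, S_l^{(k)} \subseteq [\mathcal{S}_l]$ with $|S_j^{(k)}| = \mathcal{S}_{j-1}$ together with a constant bit $v^{(k)} \in \{0,1\}$. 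The total number of such tuples $(S_2, \ldots, S_l, v)$ is exactly
\[
M \;=\; 2 \prod_{i=1}^{l-1} \binom{\mathcal{S}_{i+1}}{\mathcal{S}_i},
\]
which is precisely the factor appearing in the recursion. Since $\mathcal{S}_{l+1} = (\mathcal{S}_l - 1)\cdot M + 1$, the pigeonhole principle guarantees that at least $\mathcal{S}_l$ slices share a common tuple $(S_2^*, \ldots, S_l^*, v^*)$. Collecting these slice indices into a set $S_{l+1}^* \subseteq [\mathcal{S}_{l+1}]$ with $|S_{l+1}^*| = \mathcal{S}_l$, the subtensor indexed by $S_2^* \times \ldots \times S_l^* \times S_{l+1}^*$ has the required dimensions $\mathcal{S}_1 \times \ldots \times \mathcal{S}_l$ and is constant with value $v^*$.

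The main (minor) obstacle is bookkeeping: checking that the choice of the quantity $M$ in the definition of $\mathcal{S}_{l+1}$ matches exactly the number of possible \emph{selection patterns} of a subtensor in a single slice, including the choice of the constant bit. This is why the formula carries the factor of $2$ and the product $\prod_{i=1}^{l-1} \binom{\mathcal{S}_{i+1}}{\mathcal{S}_i}$, and verifying this alignment is the only nontrivial bit of the argument; the rest is pigeonhole and induction.
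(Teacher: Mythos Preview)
Your proof is correct and follows essentially the same approach as the paper's own argument: both proceed by induction on $l$, slice the tensor along its last coordinate, invoke the induction hypothesis on each slice, and then use the pigeonhole principle (with exactly the count $M = 2\prod_{i=1}^{l-1}\binom{\mathcal{S}_{i+1}}{\mathcal{S}_i}$ matching the definition of $\mathcal{S}_{l+1}$) to find $\mathcal{S}_l$ slices sharing the same constant subtensor pattern.
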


\begin{proof}
A simple calculation gives that $\mathcal{S}_2 = 3$. Since any binary $3\times 1$ tensor contains either two times $0$ or two times $1$, the statement trivially holds for $l = 2$.

Let us now assume the statement holds for a fixed $l$. We will consider an arbitrary $\mathcal{S}_2 \times\ldots. \times \mathcal{S}_{l+1}$ tensor. By the assumption it holds that for any choice of the last index of the tensor, the resulting $\mathcal{S}_2 \times\ldots \times \mathcal{S}_{l}\times 1$ tensor contains a constant $\mathcal{S}_1 \times\ldots \times \mathcal{S}_{l-1}\times 1$ subtensor. If we can find $\mathcal{S}_l$ choices of the last index such that the resulting subtensors are the same, we have proven the statement for $l+1$. To do this, we will calculate the number of possible subtensors, this is exactly:
\begin{equation*}
2 \prod_{i= 1}^{l - 1} \binom{\mathcal{S}_{i+1}}{\mathcal{S}_i}
\end{equation*}
as there are two choices for the value contained in the subtensor and $\binom{\mathcal{S}_{i+1}}{\mathcal{S}_i}$ possibilities to choose the $\mathcal{S}_i$ indices in dimension $i$ among the $\mathcal{S}_{i+1}$ choices. Thus we can apply the pigeonhole principle to conclude that at least one subtensor appears $\mathcal{S}_l$ times. This directly implies that there exists a constant $\mathcal{S}_1 \times\ldots \times \mathcal{S}_{l}$ subtensor which directly proves the lemma.
\end{proof}

This property ensures that if $\phi$ cannot be fulfilled then for every $l\in [L-1]$ every instance $I(l,\pi^l,\Psi^l)$ contains a copy of an instance $I(l + 1,\pi^{l+1},\Psi^{l+1})$ where all vertices are assigned to a center whose $(l+1)$-th coordinate is unequal $\pi_{l+1}^{l+1}$. By inductively going over all layers from $0$ to $L$, we can use this to identify a vertex that is assigned to a center whose position differs in every coordinate, which results in a radius of $2L$:

\begin{lemma}
\label{lem:red_center_lower}
If there does not exist a solution for $\phi$, every disjoint assignment of the vertices to the centers results in a radius of at least $2L$.
\end{lemma}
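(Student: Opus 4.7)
I will prove the lemma by reverse induction on $l$, from $l = L$ down to $l = 0$, establishing the stronger claim:

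\emph{Claim.} If $\phi$ is unsatisfiable, then for every $l \in \{0, \ldots, L\}$, every valid $\pi^l$ and $\Psi^l$, and every disjoint assignment of the entire vertex set to the centers, there exists a vertex $v \in I(l, \pi^l, \Psi^l)$ with position $\pi^l \cdot \beta$ whose assigned center $\gamma$ satisfies
\[
\sum_{i = l+1}^{L} d'(\beta_{i-l}, \gamma_i) \;\geq\; 2(L - l).
\]
Specializing to $l = 0$ yields the lemma, since then $d(v, \gamma)$ coincides with this sum. The base case $l = L$ is vacuous because the sum is empty and $2(L-L) = 0$.

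For the inductive step, suppose the claim holds at level $l + 1$ and fix an arbitrary assignment. For every copy of a subinstance $I(l+1, \pi^l \cdot x, \Psi^{l+1})[\,\cdot\,]$ with $x \in \{t, f\}$ contained in $I(l, \pi^l, \Psi^l)$, the inductive hypothesis produces a vertex whose $d'$-contribution from coordinates $l+2,\ldots,L$ is at least $2(L - l - 1)$. That vertex has coordinate $l + 1$ equal to $x$, which is not $\perp$, so the $d'$-contribution at coordinate $l+1$ is either $0$ or $2$. If it equals $2$ for any subinstance, that vertex already witnesses the claim at level $l$ and we are done. Hence we may assume every level-$(l+1)$ subinstance is \emph{aligned}: its inductively guaranteed witness is assigned to a center whose $(l{+}1)$-th coordinate equals the subinstance's $x \in \{t,f\}$. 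We will show this aligned case forces a satisfying assignment for $\phi$, contradicting our standing assumption.

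In the aligned case, the witness from each $I(l+1, \pi^l \cdot x, \Psi^{l+1})[\,\cdot\,]$ sits in a cluster whose center has $(l{+}1)$-th coordinate $x$. Connectivity of that cluster, together with the fact that the subinstance's outer vertices $Q_{l+1, \epsilon^{(u')}}$ are attached in the graph only to the local literal vertices $x_{i,\epsilon^{(u')}}, \overline{x}_{i,\epsilon^{(u')}}$ of the enclosing instance (for $i$ in specific clauses or variable indices), propagates constraints into the local one-layer gadget at each $\epsilon \in \Psi^l_2 \times \ldots \times \Psi^l_{L-l}$, where $T_{l,\epsilon}, F_{l,\epsilon}$ and the literal vertices play exactly the roles of $T, F$ and literals in the one-layer reduction of Ge et al. From each such gadget we extract a candidate local truth assignment (setting $x_i$ true iff $x_{i,\epsilon}$ is clustered with $T_{l,\epsilon}$). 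Because each gadget has $\mathcal{S}_1$ ``true'' outputs and $\mathcal{S}_{l+1}$ admissible coordinate choices, the $(L-l-1)$-order binary tensor indexed by $\Psi^l_2, \ldots, \Psi^l_{L-l}$ whose entries encode whether a literal is assigned consistently has, by Lemma~\ref{lem:tensors}, a constant $\mathcal{S}_1 \times \ldots \times \mathcal{S}_{L-l-1}$ subtensor. This constancy provides enough copies of level-$(l+1)$ subinstances with matching index tuples in $[n]^u$ and $[m]^u$ that the local truth assignments across different $\epsilon$ agree, and they can be spliced together into a global satisfying assignment of $\phi$, contradicting unsatisfiability.

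The main obstacle is the alignment-to-satisfiability step: precisely translating the aligned condition, via the connectivity graph, into local truth assignments on the literal vertices, and invoking Lemma~\ref{lem:tensors} with the right dimensions $\mathcal{S}_1, \ldots, \mathcal{S}_{L-l-1}$ so the constant subtensor indexes subinstances whose outer-vertex wiring stitches the local assignments into a single global satisfying assignment. The choice of the $\mathcal{S}_i$ as iterated products of binomial coefficients is exactly what makes Lemma~\ref{lem:tensors} applicable at every level, and the pigeonhole-type growth of $\mathcal{S}_{l+1}$ in terms of $\mathcal{S}_l$ is what forces the $\log^*(k)$ bound on the achievable number of layers $L$.
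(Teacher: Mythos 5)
Your overall skeleton (reverse induction on $l$, with the claim that \emph{every} copy of a level-$l$ subinstance contains a vertex mismatched in all of coordinates $l+1,\ldots,L$) can be made to work and is essentially the paper's induction run in the opposite direction, but the heart of your inductive step --- the ``aligned case forces a satisfying assignment'' argument --- does not hold as you argue it, and you acknowledge this yourself. Alignment only tells you that, for each level-$(l+1)$ copy, the \emph{single} witness vertex produced by the induction hypothesis lies in a cluster whose center matches in coordinate $l+1$; by connectivity that cluster leaves the copy through \emph{some} boundary vertex $Q_{l+1,\epsilon^{(u')}}$ into \emph{some} adjacent literal vertex, and you have no control over which boundary vertex or which literal that is (a copy has many boundary vertices, and the witness is one vertex deep inside it). To extract a satisfying assignment of $\phi$ you would need, for a \emph{single} $\epsilon$, that no variable is violated and every clause is fulfilled with respect to the $(l+1)$-th coordinate; one ``good'' exit literal per copy, scattered over different $\epsilon$'s, different clauses and different variables, gives nothing of the sort. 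Your use of Lemma~\ref{lem:tensors} is also misplaced: satisfying assignments are never ``spliced together'' across different $\epsilon$'s (the contradiction always comes from one $\epsilon$ whose induced literal assignment is consistent and fulfills all clauses), and the tensor in the actual argument is the violated/unfulfilled indicator over $\epsilon\in\Psi^l_2\times\ldots\times\Psi^l_{L-l}$, not a ``consistency'' tensor.

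The correct inductive step (which is exactly the paper's step, and can be slotted into your reverse induction) goes the other way around and never needs the alignment case distinction as you set it up: since $\phi$ is unsatisfiable, \emph{every} $\epsilon$ must be violated or unfulfilled with respect to coordinate $l+1$ (otherwise the assignment of the $x_{i,\epsilon},\overline{x}_{i,\epsilon}$ at that one $\epsilon$ already encodes a satisfying assignment); Lemma~\ref{lem:tensors} applied to this binary tensor yields a $\Psi^{l+1}$ on which all $\epsilon$ are \emph{uniformly} violated or \emph{uniformly} unfulfilled; choosing the copy $I(l+1,\pi^l\cdot t,\Psi^{l+1})[j_1,\ldots,j_u]$ (resp.\ the $f$-side copy indexed by the violating variables) makes \emph{all} literal vertices adjacent to that copy mismatched in coordinate $l+1$; and since the only edges leaving the copy go from its $Q_{l+1,\cdot}$ vertices to precisely those literals, every cluster meeting the copy must contain one of them, so \emph{every} vertex of that copy is mismatched in coordinate $l+1$. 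Only now do you invoke your induction hypothesis on that particular copy: its guaranteed witness is mismatched in coordinates $l+2,\ldots,L$ and, being a vertex of the copy, also in coordinate $l+1$ (which is $t$ or $f$, never $\perp$, so the contribution is $2$), giving $2(L-l)$. The uniformity from the tensor lemma is indispensable here because a single copy is attached across all $u$ of its coordinates either to clause-type or to variable-type literal neighborhoods, so one needs the same failure type at every $\epsilon$ of the sub-product; this is the step your proposal replaces with an argument that does not go through.
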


\begin{proof}
 We will show inductively that for every $l \in \{0,\ldots,L\}$ there exist a $\pi^l  \in [\mathcal{S}_1] \times \ldots \times [\mathcal{S}_L]$ and a $\Psi^l = (\Psi^l_1,\ldots,\Psi^l_{L-l}) \in \mathbb{P}([\mathcal{S}_{l+1}])\times \ldots \times \mathbb{P}([\mathcal{S}_L])$ and a copy of instance $I(l,\pi^l,\Psi^l)$ such that every vertex in it gets assigned to a center $Q_{0,\epsilon}$ such that for ever $j \leq l$ it holds that $\epsilon_j \neq \pi^l_j$. One might observe that for $l = 0$ this claim trivially holds.
 
Let us assume that the claim holds for a fixed $l$. We will now consider the situation for $l+1$. By our assumption, there exists an instance $I(l,\pi^l,\Psi^l)$ such that every vertex in this instance gets assigned to a center whose first $l$ coordinates are unequal to $\pi^l$. Let $\epsilon \in \Psi^l_{2} \times \Psi^l_{3} \times \ldots \times \Psi^l_{L-l}$. For an arbitrary $i \leq n$, we say that $x_i$ is true regarding $\epsilon$ if $x_{i,\epsilon}$ gets assigned to a center whose $(l+1)$-th coordinate is equal to $t$ and we say that $x_i$ is false if $\overline{x}_{i,\epsilon}$ is assigned to a center whose $(l+1)$-th coordinate is $t$. If $x_i$ is both true and false regarding $\epsilon$, we say that $x_i$ is violated. Now there are two cases that can occur: If no variable is violated regarding $\epsilon$ then the assignment of the $x_{i,\epsilon}$ corresponds to an assignment of truth values to the Boolean variables $x_i$. Since our original formula cannot be satisfied, there has to exist a $j \in [m]$ such that $\cla_j$ is not fulfilled by this assignment. In this case we say that $\epsilon$ is unfulfilled. In the other case there is at least one $i$ such that $x_i$ gets violated regarding $\epsilon$ and in this case we say that also $\epsilon$ gets violated.

Since this holds for all $\epsilon \in \Psi^l_{2} \times \Psi^l_{3} \times \ldots \times \Psi^l_{L - l}$ which are $\prod_{i=2}^{L-l} \mathcal{S}_i$ many we may use Lemma~\ref{lem:tensors} to conclude that there exists an $\Psi^{l+1}$ such that for all $i \in [L-l-1]$ it holds that $\Psi^{l+1}_i \subseteq \Psi^l_{i+1}$ and $|\Psi^{l+1}_i| = \mathcal{S}_{i}$ and all $\epsilon \in \Psi^{l+1}_{1} \times \Psi^{l+1}_2 \times \ldots \times \Psi^{l+1}_{L-l-1}$ are unfulfilled or all of them get violated.

Let us first consider the case that they are unfulfilled. Then for each $\epsilon^{(h)}$ (which denotes the lexicographically $h$-th element in $\Psi^{l+1}_1 \times \Psi^{l+1}_2 \times \ldots \times \Psi^{l+1}_{L-l-1}$) there exists an $j_h \in [m]$ such that for every $x_i \in \cla_{j_h}$ and $\overline{x}_{i'} \in \cla_{j_h}$ the vertices $x_{i,\epsilon^{(h)}}$ and $\overline{x}_{i',\epsilon^{(h)}}$, respectively, are assigned to a center whose $i$-th coordinate is unequal $t$. Let $u = \prod_{i=1}^{L-l-2} \mathcal{S}_i$. Then it holds for the subinstance $I(l+1,\pi^l \cdot t,\Psi^{l+1})[j_1,\ldots,j_u]$ that for each $h \in [u]$ every neighbor of $Q_{l+1,\epsilon^{(h)}}$ outside $I(l+1,\pi^l \cdot t,\Psi^{l+1})[j_1,\ldots,j_u]$ gets assigned to a center whose $(l+1)$-th coordinate is unequal $t$. By combining this with the induction hypothesis we get that every neighbor of a vertex in the instance $I(l+1,\pi^l \cdot t,\Psi^{l+1})[j_1,\ldots,j_u]$ gets assigned to a center whose first $l+1$ coordinates are all unequal $\pi^l \cdot t$ which also means that the vertices in the instance get assigned to these centers which proves our claim for $l+1$.

If they are all violated, it holds similarly for each $\epsilon^{(h)}$ that there exists an $i_h \in [n]$ such that $x_{i_h}$ and $\overline{x}_{i_h}$ get assigned to centers whose $(l+1)$-th coordinates are unequal $f$. Thus all neighbors of the vertices in the instance $I(l+1,\pi^l \cdot f,\Psi^{l+1})[i_1,\ldots,i_u]$ are assigned to centers whose first $l+1$ coordinates are all unequal $\pi^l \cdot f$ which again proves the claim. Thus the induction holds.

Since the claim holds for any $l \leq L$ we may in particular conclude that there exists an $\pi^L$ and a copy of  $I(L,\pi^L)$ such that the vertex $Q_L$ in this instance gets assigned to a center $Q_{0,\epsilon}$ such that for every $l \leq L$ it holds that $\pi^L_l \neq \epsilon_l$. Thus by the definition of the distance function the radius of the respective cluster is at least $2L$ which proves the lemma.
\end{proof}

At the same time the instance is constructed in such a way that if the formula $\phi$ can be fulfilled there exists an assignment of radius $1$:

\begin{lemma}
\label{lem:red_center_upper}
If there exist a solution fulfilling $\phi$ then there exists an assignment to the center vertices such that every vertex is only at distance $1$ from its respective center.
\end{lemma}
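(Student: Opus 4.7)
The plan is to use a satisfying assignment $\tau:\{x_1,\ldots,x_n\}\to\{T,F\}$ of $\phi$ to define an explicit assignment of every vertex to a center, of radius at most $1$. Every non-literal vertex $Q_{l,\eta}$ (whose position $\eta$ has no $\perp$-coordinate) is assigned to the center $Q_{0,\eta}$ at its own position. Every literal vertex $x_{i,\epsilon}$ sitting inside a subinstance $I(l,\pi^l,\Psi^l)$, at position $\pi^l\cdot\perp\cdot\epsilon$, is assigned to $Q_{0,\pi^l\cdot t\cdot\epsilon}$ if $\tau(x_i)=T$ and to $Q_{0,\pi^l\cdot f\cdot\epsilon}$ if $\tau(x_i)=F$, where $\{t,f\}=\Psi^l_1$; the companion vertex $\overline{x}_{i,\epsilon}$ is assigned to the opposite side. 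Non-literal vertices are at distance $0$ from their centers, and each literal vertex differs from its center only in the $\perp$-coordinate, which contributes $1$ to $d$, so the radius bound is immediate.

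The main work is to verify that every cluster $C_{\epsilon^*}=\{v : v \text{ is assigned to } Q_{0,\epsilon^*}\}$ induces a connected subgraph of $G$. I will argue by induction on the depth $l$, from $l=L$ down to $l=0$, showing that for every subinstance $I(l,\pi^l,\Psi^l)$ the intersection $C_{\epsilon^*}\cap I(l,\pi^l,\Psi^l)$ is connected (possibly empty). The base case is trivial. In the induction step, if $\pi^l$ is not a prefix of $\epsilon^*$ the intersection is empty, since every non-literal vertex in the subinstance has first $l$ coordinates $\pi^l$ and, by the induction hypothesis applied to the nested subinstances (whose $\pi^{l+1}$ extends $\pi^l$), these also contribute nothing. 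Otherwise $\pi^l=(\epsilon^*_1,\ldots,\epsilon^*_l)$ and $\epsilon^*_{l+1}$ is either $t$ or $f$; WLOG $\epsilon^*_{l+1}=t$ (the other case is symmetric). Setting $\epsilon^\dagger=(\epsilon^*_{l+2},\ldots,\epsilon^*_L)$, the vertex $T_{l,\epsilon^\dagger}$ (when it exists in this subinstance) sits at position $\epsilon^*$ and is directly adjacent to every literal vertex at level $l$ that the assignment places in $C_{\epsilon^*}$, so the local level-$l$ piece of the cluster forms a star centred at $T_{l,\epsilon^\dagger}$.

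The crucial step, where satisfiability of $\phi$ is really used, is to connect each non-empty nested contribution to this star. Because $\epsilon^*_{l+1}=t$, only the nested $t$-side copies $I(l+1,\pi^l\cdot t,\Psi^{l+1})[j_1,\ldots,j_u]$ can contribute, and any such contribution must already contain the attachment vertex $Q_{l+1,\epsilon^\dagger}$, since that is the only vertex inside such a copy sitting at position $\epsilon^*$ and the induction hypothesis guarantees connectivity. By construction $Q_{l+1,\epsilon^\dagger}$ is adjacent in the parent to the literal vertices corresponding to the literals of the associated clause $\cla_{j_h}$; since $\tau$ satisfies $\cla_{j_h}$, at least one of those literal vertices is assigned by our rule to $Q_{0,\epsilon^*}$ and hence lies in the local star. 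In the symmetric case $\epsilon^*_{l+1}=f$, satisfiability is not needed: the analogous attachment $Q_{l+1,\epsilon^\dagger}$ is adjacent to the two vertices $x_{i_h,\epsilon^\dagger}$ and $\overline{x}_{i_h,\epsilon^\dagger}$, of which exactly one is assigned to the $f$-side by our rule. Combining the local star with all non-empty nested contributions closes the induction, and applying the invariant at the outermost instance gives the connectivity of every cluster. The main bookkeeping difficulty will be tracking precisely which attachment points end up in $C_{\epsilon^*}$ and checking that the $\Psi$-restrictions on the nested subinstances admit $\epsilon^\dagger$ whenever the nested contribution is non-empty.
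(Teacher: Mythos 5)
Your construction is the same as the paper's: the paper assigns every $Q_{l,\delta}$ to the center $Q_{0,\pi^l\cdot\delta}$ at its own position and the literal vertices according to a satisfying assignment, and it verifies connectivity at exactly the attachment edges you use (on the $t$-side a true literal of the clause $\cla_{j_{u'}}$ wired to $Q_{l+1,\epsilon^{(u')}}$, on the $f$-side the one of $x_{i_{u'},\epsilon^{(u')}},\overline{x}_{i_{u'},\epsilon^{(u')}}$ that is sent to the $f$-center). The only presentational difference is that the paper builds the assignment top-down by induction on $l$, under the hypothesis that the interface vertices $Q_{l,\delta}$ are already assigned to $Q_{0,\pi^l\cdot\delta}$, whereas you fix the global assignment first and check cluster connectivity bottom-up; these are the same argument.

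However, as written your connectivity induction has a hole. The invariant ``$C_{\epsilon^*}\cap I(l,\pi^l,\Psi^l)$ is connected (possibly empty)'' is too weak to justify the crucial step that a non-empty contribution of a nested copy contains the attachment vertex $Q_{l+1,\epsilon^\dagger}$, and the reason you give is false: $Q_{l+1,\epsilon^\dagger}$ is in general not the only vertex of that copy at position $\epsilon^*$ --- the top-level vertex of every deeper nested copy whose prefix and $\Psi$-sets match $\epsilon^*$ also sits at position $\epsilon^*$, and your rule sends all of them to $Q_{0,\epsilon^*}$. What you actually need (and defer to ``bookkeeping'') is the stronger invariant: $C_{\epsilon^*}\cap I(l',\pi^{l'},\Psi^{l'})$ is non-empty only if $\epsilon^*=\pi^{l'}\cdot\delta$ for some $\delta\in\Psi^{l'}_1\times\ldots\times\Psi^{l'}_{L-l'}$, in which case the intersection contains $Q_{l',\delta}$ and is connected. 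The ``only if'' direction follows by the same induction using $\Psi^{l'+1}_i\subseteq\Psi^{l'}_{i+1}$, and containment of $Q_{l',\delta}$ is immediate from your assignment rule, so the repair is routine, but it must be built into the induction rather than asserted. Relatedly, when $\pi^l$ is a prefix of $\epsilon^*$ it need not hold that $\epsilon^*_{l+1}\in\{t,f\}$ (nor that later coordinates lie in the sets $\Psi^l_j$), since $|\Psi^l_1|=2$ while $\mathcal{S}_{l+1}$ is much larger; these cases again yield an empty intersection and are handled by the strengthened invariant. With these adjustments your proof goes through and coincides with the paper's.
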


\begin{proof}
We will show the following claim for all $l \leq L$: For all instances $I(l,\pi^l, \Psi^l)$ it holds that if there exists an assignment such that for every $\delta \in \Psi^l_1 \times \Psi^l_2 \times \ldots \Psi^l_{L-l}$ the vertex $Q_{l,\delta}$ of the instance gets assigned to $Q_{0,\pi^l \cdot \delta}$ then there exists a assignment such that every vertex in $I(l,\pi^l,\Psi^l)$ has only distance $1$ to the center it gets assigned to.

Obviously the claim holds for $l = L$. Now we assume that for a fixed $l+1$ the claim is already true and consider an instance $I(l,\pi^l, \Psi^l)$ where each $Q_{l,\delta}$ gets assigned to $Q_{0,\pi^l\cdot \delta}$. Let $x_1^*,\ldots,x_n^*$ be an satisfying assignment for $\phi$. For all $\epsilon \in \Psi^l_2 \times \Psi^l_3 \times \ldots \times \Psi^l_{L-l}$ we assign the vertex $x_{i,\epsilon}$ to $Q_{0,\pi^l \cdot t \cdot \epsilon}$ and $\overline{x}_{i,\epsilon}$ to $Q_{0,\pi^l \cdot f \cdot \epsilon}$ if $x_i^* = true$. Otherwise we assign $x_{i,\epsilon}$ to $Q_{0,\pi^l \cdot f \cdot \epsilon}$ and $\overline{x}_{i,\epsilon}$ to $Q_{0,\pi^l \cdot t \cdot \epsilon}$. One might observe that this assignment does not violate the connectivity since $T_{l,\epsilon}$ and $F_{l,\epsilon}$ have been assigned to $Q_{0,\pi^l\cdot t\cdot \epsilon}$ and $Q_{0,\pi^l\cdot f\cdot \epsilon}$, respectively. Also in both cases the points are only at distance $1$ from the respective center.

Now we consider an arbitrary $\Psi^{l+1}$ fulfilling for all $i \in [L-l-1]$ that $\Psi^{l+1}_i \subseteq \Psi^l_{i+1}$ and $|\Psi^{l+1}_i| = \mathcal{S}_i$. Let $u = \prod_{i = 1}^{L-l - 1} \mathcal{S}_i$ and let $\{\epsilon^{(1)},\ldots,\epsilon^{(u)}\} = \Psi^{l+1}_1 \times \Psi^{l+1}_2 \times \ldots \times \Psi^{l+1}_{L-l-1}$. For each $(i_1,\ldots,i_u) \in [n]^u$ we consider the instance $I(l+1,\pi^l \cdot f,\Psi^{l+1})[i_1,\ldots,i_u]$. For each $u' \in [u]$ we assign the vertex $Q_{l+1,\epsilon^{(u')}}$ in this instance to the center $Q_{0,\pi^l\cdot f \cdot \epsilon^{(u') }}$. One might note that the distance to this center is $0$. Furthermore this assignment does not violate the connectivity since either $x_{i_u, \epsilon^{(u)}}$ or $\overline{x}_{i_u, \epsilon^{(u)}}$ is assigned to $Q_{0,\pi^l\cdot f \cdot \epsilon^{(u') }}$ and both vertices are connected to $Q_{l+1,\epsilon^{u'}}$. Now one might observe that in instance $I(l+1,\pi^l \cdot f,\Psi^{l+1})[i_1,\ldots,i_u]$ for each $\delta \in \Psi^{l+1}_1 \times \Psi^{l+1}_2 \times \ldots \times \Psi^{l+1}_{L-l-1}$ the vertex $Q_{l,\delta}$ gets assigned to $Q_{0,\pi\cdot f \cdot \delta}$. Thus we can apply our assumption to conclude that there exists an assignment of all the vertices in this instance with radius $1$. 

Similarly for any $(j_1,\ldots,j_u) \in [m]^u$ we consider the instance $I(l+1,\pi^l \cdot t,\Psi^{l+1})[j_1,\ldots,j_u]$. For each $u' \in [u]$ we assign the vertex $Q_{l+1,\epsilon^{(u')}}$ to the center $Q_{0,\pi^l\cdot t\cdot \epsilon^{(u')}}$ which is at distance $0$. To show that this assignment does not violate the connectivity we consider the clause $\cla_{j_{u'}}$. Obviously this clause is fulfilled by $x^*$. There exist two cases. If there exist an $i $ with $x_i^* = true$ and $x_i \in \cla_j$ then $x_{i,\epsilon^{(u')}}$ has been assigned to $Q_{0,\pi^l\cdot t\cdot \epsilon^{(u')}}$ while also being a neighbor of $Q_{l+1,\epsilon^{(u')}}$. Otherwise there exists an $i$ with $ \overline{x}_i \in \cla_{j^{(u')}}$ and $x_i^* = false$ which means that $\overline{x}_{i,\epsilon^{(u')}}$ has been assigned to $Q_{0,\pi^l\cdot t\cdot \epsilon^{(u')}}$ and is a neighbor of $Q_{l+1,\epsilon^{(u')}}$. In both cases the connectivity is not violated. Since this assignment can be done for all $\epsilon^{(u')}$, we can again apply our induction hypothesis to conclude that all vertices in $I(l+1,\pi^l \cdot t,\Psi^{l+1})[j_1,\ldots,j_u]$ can be assigned with radius $1$.

One might observe that we have assigned every vertex in $I(l,\pi^l,\Psi^l)$ to a center with radius $1$. Thus the claim also holds for $l$ itself and by induction we may conclude that it holds for all $l' \leq L$ including $L$ itself. Since for every $\delta \in \Psi^0_1 \times \ldots \times \Psi^0_L$ the center $Q_{0,\delta}$ can trivially be assigned to itself, we obtain that there exists an assignment of radius $1$ for the entire instance and the lemma is proven.
\end{proof}

To obtain hardness results for the assignment version we have to show that for a fixed number $L$ of layers the size of the instance we create is bounded polynomially in the size of $\phi$ and can be calculated in polynomial time. One may note that the respective polynomial depends on $L$. For simplicity reasons we denote for any $h \in \mathbb{N}$ the product $ \prod_{i = 1}^h \mathcal{S}_i$ by $\mathcal{P}_h$. Then $\mathcal{P}_L$ is exactly the number of centers of the instance with $L$ layers and we can bound $\mathcal{P}_h$ by $\mathcal{S}_{h+1}$:

\begin{equation}\label{eq:prod_layers}
        \mathcal{P}_h = \prod_{i = 1}^h \mathcal{S}_i = 2 \prod_{i = 1}^{h-1} \mathcal{S}_{i+1} \leq 2 \prod_{i=1}^{h-1} \binom{\mathcal{S}_{i+1}}{\mathcal{S}_i} \leq \mathcal{S}_{h+1}
\end{equation}

Now we bound the size of the constructed instance:

\begin{lemma}
\label{lem:red_center_polynomial}
For any constant $L$, the size of the instance $I(0,([\mathcal{S}_1],\ldots,[\mathcal{S}_L])$ as well as the time needed to calculate the instance is polynomial in $n$ and $m$.
\end{lemma}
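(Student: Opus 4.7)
The plan is to argue by induction on $l$, going from $l = L$ down to $l = 0$, that the size $N(l)$ of any instance $I(l,\pi^l,\Psi^l)$ is bounded by a polynomial in $n$ and $m$ whose degree depends only on $L$. First I would observe that $N(l)$ does not actually depend on $\pi^l$ or on the particular choice of $\Psi^l$: the construction is symmetric under relabeling of the allowed coordinate values, so only the lengths of the tuples matter, and these are determined by $l$.

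For the inductive step I would unfold the recursive definition at level $l$. Within $I(l,\pi^l,\Psi^l)$ one explicitly places $2\cdot|\Psi^l_2 \times \cdots \times \Psi^l_{L-l}| = \mathcal{P}_{L-l}$ center-like vertices $T_{l,\epsilon}, F_{l,\epsilon}$ together with $2n\cdot\mathcal{P}_{L-l}/2 = n\mathcal{P}_{L-l}$ literal vertices, and a corresponding number of edges. Since $L$ is constant, every $\mathcal{S}_i$ (and hence every $\mathcal{P}_i$) is a constant, so this contribution is $O(n)$. Beyond these explicit vertices, for every valid choice of $\Psi^{l+1}$ (there are $\prod_{i=1}^{L-l-1}\binom{\mathcal{S}_{i+1}}{\mathcal{S}_i}$ of them, again a constant) we add $n^u$ copies of $I(l+1,\pi^l\cdot f,\Psi^{l+1})$ and $m^u$ copies of $I(l+1,\pi^l\cdot t,\Psi^{l+1})$, where $u = \mathcal{P}_{L-l-1}$ is constant. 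This yields the recurrence
\begin{equation*}
N(l) \;\leq\; c_l\cdot n \;+\; C_l\cdot(n^u + m^u)\cdot N(l+1),
\end{equation*}
with $N(L) = 1$ and $c_l, C_l, u$ all constants depending only on $L$. Unrolling this recurrence $L$ times gives a bound of the form $N(0) = O((nm)^{D(L)})$ for some constant $D(L)$, which is polynomial in $n$ and $m$ for any fixed $L$.

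For the running time I would make the same recursive accounting: building the vertices and edges at level $l$ can be done in time linear in the number of objects placed, and each recursive call is made once per copy, so the construction time obeys the same recurrence (up to a polynomial overhead for naming the copies and writing down the incident edges) and is therefore also polynomial in $n$ and $m$.

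The only real subtlety, and thus the main thing to handle cleanly, is verifying that the quantity $u = \mathcal{P}_{L-l-1}$ appearing in the exponents $n^u$ and $m^u$ is indeed a constant for constant $L$; this follows because the recursion defining $\mathcal{S}_{l+1}$ only involves earlier $\mathcal{S}_i$ and binomial coefficients among them, so each $\mathcal{S}_i$, and therefore each $\mathcal{P}_i$, is determined by $L$ alone. Once this is in hand, both the size bound and the running-time bound follow by the simple induction sketched above.
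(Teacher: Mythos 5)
Your proposal is correct and follows essentially the same route as the paper: a recursion over the levels (the paper indexes by the ``height'' $h=L-l$) counting the explicitly placed vertices plus the $n^u$ and $m^u$ copies of level-$(l+1)$ instances, with the key point that $u=\mathcal{P}_{L-l-1}$ and all $\mathcal{S}_i$ depend only on $L$ and are therefore constants. The only difference is cosmetic: the paper tracks the constants explicitly (e.g.\ $\mathcal{N}_h \le 2\mathcal{S}_{h+1}(n+m)^{\mathcal{S}_h}\mathcal{N}_{h-1}$, yielding $\mathcal{N}_L \le 2^L\mathcal{P}_{L+1}(n+m)^{L\mathcal{S}_L}$), while you absorb them into $c_l, C_l$, which is fine for the statement as claimed.
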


\begin{proof}
For any instance $I(l,\pi^l,\Psi^l)$ we will call the number $L - l$ the height of the instance. Note that for any $l \in [L]$ all instances of height $h$ have the same number of vertices. We denote this number as $\mathcal{N}_h$. Our goal is now to bound $\mathcal{N}_L$. To do this, we will calculate the respective numbers recursively. Obviously $\mathcal{N}_0 = 1$.

Now we consider an arbitrary height $h \in [L]$ and an arbitrary instance $I(l, \pi^{l}, \Psi^{l})$ of that height, where $ l := L - h$. For every $\epsilon \in \Psi^{l}_2\times \ldots\times \Psi^{L-h}_l$ the instance contains the vertices $T_{l,\epsilon}$, $F_{l,\epsilon}$ as well as $x_{i,\epsilon}$ and $\overline{x}_{i,\epsilon}$ for any $i \in [n]$. Since there are $\prod_{i = 2}^h \mathcal{S}_i$ choices for $\epsilon$, this amounts in a total of $ 2(n + 1) \prod_{i = 2}^h \mathcal{S}_i = (n+1)\mathcal{P}_h$ vertices, which by Inequality~\eqref{eq:prod_layers} is upper bounded by $(n+1)\mathcal{S}_{h+1}$. Additionally for every $\Psi^{l+1}$ with $\Psi^{l+1}_i \subseteq \Psi^l_{i+1}$ and $|\Psi^{l+1}_i| = \mathcal{S}_i$ for any $i \in [L-l-1]$ and any tuple $(i_1,\ldots,i_u) \in [n]^u$ or $(j_1,\ldots,j_u) \in [m]^u$, a copy of an instance of height $h-1$ gets added, where $u = \prod_{g = 1}^{L - l - 1} | \Psi^{l+1}_g| = \mathcal{P}_{h - 1} \leq \mathcal{S}_h$. Since there are $\binom{\mathcal{S}_{g+1}}{\mathcal{S}_{g}}$ choices for $\Psi^{l+1}_g$ for any $g \in [h - 1]$ we get in total
\begin{align*}
\mathcal{N}_h &\leq (n + 1) \mathcal{S}_{h + 1}  + \prod_{i = 1}^{h - 1} \binom{\mathcal{S}_{i+1}}{\mathcal{S}_i} \left(n^{\mathcal{S}_{h}} + m^{\mathcal{S}_{h}} \right) \mathcal{N}_{h-1}\\
&\leq (n+1) \mathcal{S}_{h+1} + \mathcal{S}_{h+1} (n+ m)^{\mathcal{S}_h} \mathcal{N}_{h-1}\\
&\leq 2 \mathcal{S}_{h+1} (n+ m)^{\mathcal{S}_h} \mathcal{N}_{h-1}.
\end{align*}

Thus by solving the recursion we can bound $\mathcal{N}_L$ as follows:
\begin{align*}
\mathcal{N}_L &\leq \prod_{h=1}^L 2 \mathcal{S}_{h+1} (n+ m)^{\mathcal{S}_h}\\
&\leq 2^L \mathcal{P}_{L+1} (n+m)^{L \mathcal{S}_L}.
\end{align*}

Since $\mathcal{P}_{L+1}$ and $\mathcal{S}_L$ only depend on $L$, this is polynomial in $n + m$. The number of edges of the connectivity graph is then obviously upper bounded by $(\mathcal{N}_L)^2$ and it is easy to verify that every single vertex and edge can be added in polynomial time. Thus the lemma holds.
\end{proof}

By combining Lemma~\ref{lem:red_center_lower},~\ref{lem:red_center_upper} and~\ref{lem:red_center_polynomial} we directly obtain that it is NP hard to obtain an $\alpha$-approximation for the assignment version for a constant $\alpha \in \mathbb{N}$, as one could choose $L \geq \frac{\alpha}{2}$ to solve 3-SAT. To get a stronger inapproximability result, we show that $L \in \Omega(\log^*(\mathcal{P}_L))$. As the instance with $L$ layers has exactly $\mathcal{P}_L$ centers, this implies that it is NP-hard to approximate the assignment version of connected $k$-center with an approximation factor better than $\Theta(\log^*(k))$:

%\begin{theorem}
%Unless P = NP there exists no constant approximation algorithm for the assignment version of the connected $k$-center problem. 
%\end{theorem}

\begin{lemma}\label{lem:reduction_bound_prod}
    It holds that $L \in \Omega(\log^*(\mathcal{P}_L))$.
\end{lemma}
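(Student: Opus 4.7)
The plan is to show that the sequence $(\mathcal{S}_l)$ grows at most tower-exponentially in $l$, so that after $O(L)$ applications of $\log$ the value $\mathcal{P}_L$ falls below $1$. Together with the already established inequality~\eqref{eq:prod_layers} $\mathcal{P}_L \leq \mathcal{S}_{L+1}$, this yields $\log^*(\mathcal{P}_L) \leq O(L)$, which is exactly the claim.

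First I would obtain a clean upper bound on $\mathcal{S}_{l+1}$ directly from the recurrence. Using the trivial estimate $\binom{\mathcal{S}_{i+1}}{\mathcal{S}_i} \leq \mathcal{S}_{i+1}^{\mathcal{S}_i}$, the definition of $\mathcal{S}_{l+1}$ gives
\begin{equation*}
\mathcal{S}_{l+1} \;\leq\; 2\mathcal{S}_l \prod_{i=1}^{l-1}\mathcal{S}_{i+1}^{\mathcal{S}_i} \;\leq\; 2\mathcal{S}_l \cdot \mathcal{S}_l^{\,l\mathcal{S}_{l-1}} \;\leq\; \mathcal{S}_l^{\,C\, l\, \mathcal{S}_l}
\end{equation*}
for some absolute constant $C$ and all sufficiently large $l$, where I used the monotonicity $\mathcal{S}_i \leq \mathcal{S}_l$ for $i \leq l$ and absorbed the leading factor. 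Taking logarithms and using $l \leq \mathcal{S}_l$ for $l$ large, this simplifies to a bound of the form $\log \mathcal{S}_{l+1} \leq \mathcal{S}_l^{2}$, i.e.\ $\mathcal{S}_{l+1} \leq 2^{\mathcal{S}_l^{2}}$.

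Next I would convert this double-exponential step into a statement about $\log^*$. Since $\log^*(2^{x^2}) = 1 + \log^*(x^2)$ and $\log^*(x^2) \leq \log^*(x) + O(1)$ (because $\log(x^2) = 2\log x$ and one application of $\log$ absorbs the constant factor), the inequality $\mathcal{S}_{l+1} \leq 2^{\mathcal{S}_l^2}$ gives $\log^*(\mathcal{S}_{l+1}) \leq \log^*(\mathcal{S}_l) + O(1)$ for all sufficiently large $l$. Iterating from a constant base case yields $\log^*(\mathcal{S}_{L+1}) \leq O(L)$. Combined with $\mathcal{P}_L \leq \mathcal{S}_{L+1}$ from~\eqref{eq:prod_layers}, this rearranges to $L \in \Omega(\log^*(\mathcal{P}_L))$.

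The only slightly delicate step I expect is the transition from the recurrence bound to the clean form $\mathcal{S}_{l+1} \leq 2^{\mathcal{S}_l^2}$: one must carefully absorb the polynomial factors in $l$ and the leading $2\mathcal{S}_l$ into the exponent, which works only once $l$ (and hence $\mathcal{S}_l$) is large enough. For the initial values of $l$ one can simply verify the inequality $\log^*(\mathcal{S}_{l+1}) \leq \log^*(\mathcal{S}_l) + O(1)$ by direct computation, since $\mathcal{S}_1,\mathcal{S}_2,\mathcal{S}_3,\mathcal{S}_4$ are explicit constants. Everything else is a routine $\log^*$ calculation.
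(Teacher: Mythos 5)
Your proposal is correct and takes essentially the same route as the paper: bound $\mathcal{S}_{l+1}$ by a single-exponential-type function of $\mathcal{S}_l$, conclude that $\log^*(\mathcal{S}_{l+1}) \leq \log^*(\mathcal{S}_l) + O(1)$ per layer, and combine with the already established $\mathcal{P}_L \leq \mathcal{S}_{L+1}$. The only difference is cosmetic: the paper uses the telescoping bound $\prod_{i}\binom{\mathcal{S}_{i+1}}{\mathcal{S}_i} \leq \mathcal{S}_L!/\mathcal{S}_1!$ to get the clean estimate $\mathcal{S}_{L+1} \leq \mathcal{S}_L^{\mathcal{S}_L}$, while your cruder $\binom{n}{k}\leq n^k$ bound yields an exponent that is a polynomial in $\mathcal{S}_l$ (degree $3$ rather than your stated $2$ once the $\log\mathcal{S}_l$ factor is accounted for, which is immaterial for the $\log^*$ conclusion).
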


\begin{proof}

    By Equation~\eqref{eq:prod_layers} we know that $\mathcal{P}_L \leq \mathcal{S}_{L+1}$. We show via induction that $\log^*(\mathcal{S}_L) \leq (2L -2)$ for $L \geq 2$. For $L = 2$ we have that $\mathcal{S}_L = 3$ and $\log^*(3) = 2$ thus the statement is fulfilled.

    Let now the statement be true for a fixed $L$. We will first bound the value of $\mathcal{S}_{L+1}$ in terms of the value $\mathcal{S}_L$:
    \begin{align*}
        \mathcal{S}_{L+1} &= (\mathcal{S}_L -1) 2\prod_{i=1}^{L-1} \binom{\mathcal{S}_{i+1}}{\mathcal{S}_i} + 1\\
        &\leq 2 \mathcal{S}_L \prod_{i=1}^{L - 1} \frac{\mathcal{S}_{i+1}!}{\mathcal{S}_{i}!}\\
        &= 2 \mathcal{S}_L \frac{\mathcal{S}_{L}!}{\mathcal{S}_1!}\\
        &= \mathcal{S}_L \mathcal{S}_{L}!\\
        &\leq \left(\mathcal{S}_L\right)^{\mathcal{S}_L}.
    \end{align*}

    Then it holds that:
    \begin{equation*}
        \log(\log(\mathcal{S}_{L+1})) \leq \log(\mathcal{S}_L \log(\mathcal{S}_L)) = \log(\mathcal{S}_L) + \log(\log(\mathcal{S}_L)) \leq \mathcal{S}_L.
    \end{equation*}
    And thus:
    \begin{equation*}
        \log^*(\mathcal{S}_{L+1}) \leq \log^*(\mathcal{S}_L) + 2 \leq 2L.
    \end{equation*}

    Thus the induction holds and we can bound the value of $\log^*(\mathcal{P}_L)$ by $2L$:
    \begin{equation*}
        \log^*(\mathcal{P}_L) \leq \log^*(\mathcal{S}_{L+1}) \leq 2L.
    \end{equation*}
\end{proof}

So far all hardness results presented in this work only apply to the assignment version of connected $k$-center. However, while being able to choose the centers freely could possibly allow for different approximation approaches, one would not expect the problem to get much easier if also the centers have to be determined. Indeed one can show that any approximation algorithm for the non-assignment version can be used to approximate the assignment version while only losing a multiplicative factor of two. Thus the hardness results carry over:

\begin{theorem}\label{thm:reduction_centers_assignment}
    If there exists an $\alpha$-approximation algorithm for connected $k$-center with polynomial running time then there also exists a $2 \alpha$-approximation algorithm for the assignment version.
\end{theorem}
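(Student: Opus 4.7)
The plan is to apply the $\alpha$-approximation $\mathcal{A}$ for non-assignment connected $k$-center directly to the given assignment instance — treating the prescribed centers $c_1, \ldots, c_k$ as ordinary vertices — and then post-process its output to yield an assignment solution. The key observation enabling this is that the optimal assignment solution with radius $T^*$ is itself a feasible non-assignment solution, so the non-assignment optimum $T^{**}$ satisfies $T^{**} \leq T^*$; hence $\mathcal{A}$ returns clusters $P'_1, \ldots, P'_k$ with centers $c'_1, \ldots, c'_k \in V$ and radius $R' \leq \alpha T^{**} \leq \alpha T^*$.

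Each given center $c_j$ lies in a unique non-assignment cluster $P'_{i(j)}$. In the clean bijective case, where $i : [k] \to [k]$ is a bijection, I set $P_j := P'_{i(j)}$ and use $c_j$ as the center. Each $P_j$ is connected (inherited from $P'_{i(j)}$), contains $c_j$, and the triangle inequality via $c'_{i(j)}$ gives $d(v, c_j) \leq d(v, c'_{i(j)}) + d(c'_{i(j)}, c_j) \leq 2R' \leq 2\alpha T^*$ for every $v \in P_j$, since both $v$ and $c_j$ lie within $R'$ of $c'_{i(j)}$ as members of $P'_{i(j)}$.

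The main obstacle is the case where $i$ is not a bijection: some cluster $P'_i$ is overloaded (contains two or more given centers) and, by counting, some other cluster $P'_{i'}$ is an orphan (contains none). Overloaded clusters admit a clean fix: given the $s \geq 2$ given centers $c_{j_1}, \ldots, c_{j_s} \in P'_i$, take any spanning tree of $G[P'_i]$ and iteratively cut $s-1$ edges so as to isolate each $c_{j_\ell}$ in a distinct connected component; the $2R'$ radius bound is preserved since all pieces remain inside $P'_i$. Orphan clusters $P'_{i'}$ must be folded into clusters containing given centers. The lever for the radius bound here is that $c'_{i'}$, being a vertex, is assigned in the optimum to some given center $c_j$ at distance $\leq T^*$, so $d(v, c_j) \leq R' + T^* \leq (\alpha+1)T^* \leq 2\alpha T^*$ for any $v \in P'_{i'}$.

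The hardest technical step, which I expect to be the principal obstacle of the argument, is to orchestrate the orphan merges in a way consistent with $G$-connectivity: the target $c_j$ for an orphan cluster may lie in a non-assignment cluster not adjacent in $G$ to $P'_{i'}$, so the merge may need to be routed through a chain of intermediate orphan clusters. I would address this by forming the quotient graph $H$ on the non-assignment clusters (with edges corresponding to $G$-adjacency), propagating given-center ownership via a BFS-style traversal in $H$ from clusters containing given centers, and combining this with the spanning-tree splits on overloaded clusters, so that the final partition is a valid assignment solution that respects both the connectivity constraint and the $2\alpha T^*$ radius bound.
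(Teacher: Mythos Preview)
Your direct post-processing approach is natural but has a genuine gap in the orphan case, and it is not the route the paper takes.

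The bijective case and the overloaded-cluster split via spanning trees are fine. The problem is the orphan merge. Your radius bound $d(v,c_j)\le R'+T^*\le 2\alpha T^*$ for $v\in P'_{i'}$ holds \emph{only} for the specific given center $c_j$ to which the algorithm-center $c'_{i'}$ is assigned in the optimal solution. Your connectivity patch, however, assigns orphans by BFS in the quotient graph $H$ to whichever non-orphan cluster is reached first. There is no reason this BFS-target coincides with that specific $c_j$, and for any other given center $c_{j'}$ you have no control over $d(c'_{i'},c_{j'})$ at all: adjacency in $H$ only says two clusters share a $G$-edge, which gives no metric bound since $G$ and $d$ are unrelated. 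Chains of orphans make it worse---the bound you would get degrades with the chain length and is not $2\alpha T^*$. Conversely, if you insist on merging each orphan toward its OPT-determined $c_j$, the $G$-path from $c'_{i'}$ to $c_j$ inside the optimal cluster of $c_j$ may pass through non-orphan algorithm clusters containing other given centers, so you cannot simply absorb everything along the way. You have correctly identified this as the hardest step, but the sketch you give does not resolve it.

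The paper sidesteps the whole orphan issue with a different construction: it builds a new instance $G'$ containing $k+1$ disjoint copies of $V\setminus C$, each attached to the shared center set $C$ exactly as in $G$. Running the $\alpha$-approximation on $G'$ yields $k$ centers, so by pigeonhole some copy $V^{j^*}$ contains none of them. Because $C$ separates $V^{j^*}$ from the rest of $G'$, every connected component of $P'_i\cap(V^{j^*}\cup C)$ must contain a vertex of $C$; assigning each such component to its nearest center in $C$ (via shortest paths inside the component) gives connected clusters centered at $C$ with radius at most $2\alpha r^*$ by the triangle inequality through $c'_i$. The $k{+}1$-copies-plus-separator trick is exactly what guarantees that every piece already contains a prescribed center, eliminating orphans altogether.
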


\begin{proof}

\begin{figure}
\begin{subfigure}{0.26\textwidth}
    \includegraphics[trim={3.73cm 0cm 3.73cm 0cm}, clip, width=\textwidth, page =1]{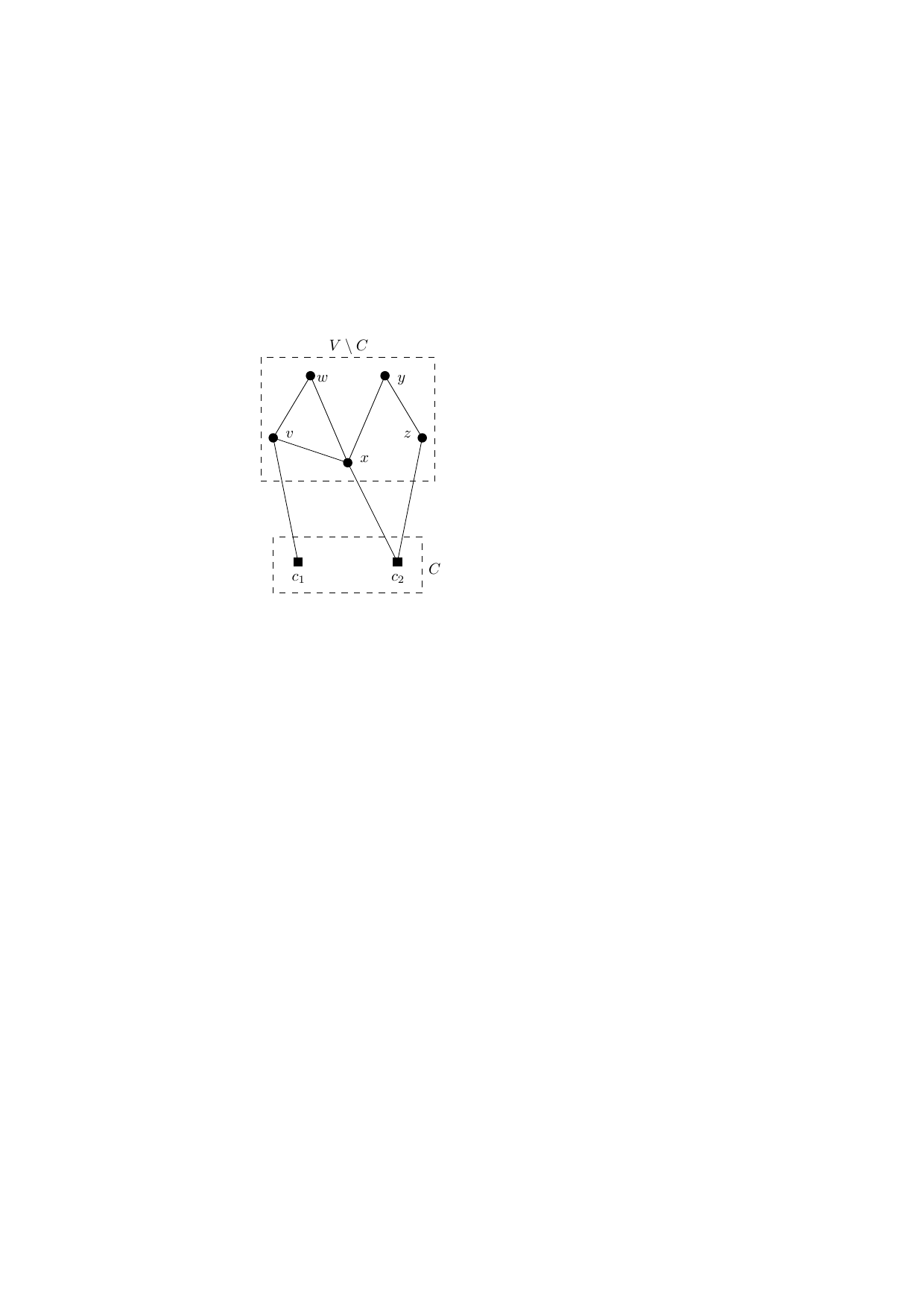}
    \caption{The graph $G$.}
\end{subfigure}
\hfill
\begin{subfigure}{0.65\textwidth}
    \includegraphics[width=\textwidth, page =2]{figures/reduction_remove_centers.pdf}
    \caption{The graph $G'$.}
\end{subfigure}
\caption{An example how the connectivity graph $G$ (left side) of an instance of the assignment version of connected $k$-center with $C = \{c_1,c_2\}$ can be transformed into a connectivity graph $G'$ where the centers are not fixed.}
\label{fig:remove_centers}
\end{figure}

    Let an $\alpha$-approximation for the non-assignment version of connected $k$-center be given. Let $G = (V,E)$ be a graph, $d: V \rightarrow \mathbb{R}_{\geq 0}$ a distance metric and $C = \{c_1,\ldots,c_k\} \subseteq V$ a given center set. Without loss of generality we may assume that there exist no edges between pairs of centers, as the centers need to be assigned to themselves anyway. To calculate a feasible assignment to $C$, we can transform this instance into an instance of the non-assignment version as follows:
    \begin{itemize}
        \item For any $v \in V \setminus C$ we add $k+1$ copies to our vertex set while for every $c \in C$ we only have a single copy. We end up with:
        
        $V' = \{v^j\mid v \in V \setminus C\land j \in [k+1]\} \cup C$.
        \item For all $v,w \in V \setminus C$ and $j \in [k+1]$ we add an edge $\{v^j,w^j\}$ to our new edge set $E'$ if $\{v,w\} \in E$.
        \item For all $c \in C$, $v \in V \setminus C$ and $j \in [k+1]$ we add an edge $\{c, v^j\}$ to our new edge set $E'$ if $\{c,v\} \in E$.
        \item The distances $d': V' \rightarrow \mathbb{R}_{\geq 0}$ of the vertices in $V'$ are the same as for the respective vertices in $V$. To be more precise: $d'(v^i,w^l) = d(v,w)$, for all $v,w \in V$, $i, j \in [k+1]$, where we say that $c^j = c$ for any $c \in C$, $j \in [k+1]$.
    \end{itemize}
    An example of this transformation is depicted in Figure~\ref{fig:remove_centers}.

    It is easy to verify that the size of the graph $G'$ is bounded by $k+1$ times the size of $G$ and can be calculated in polynomial time. Now we use the $\alpha$-approximation for connected $k$-center to cluster the vertices in $V'$ into $k$ connected clusters $P_1',\ldots,P_k'$ with centers $c_1',\ldots,c_k'$ (note that these centers do not need to be contained in $C$). Let $r^*$ be the radius of the optimal assignment of the vertices in the original graph $G$ to $C$. Then the radius of our clustering is at most $\alpha \cdot r^*$ as this assignment can be transformed into a clustering for $G'$ with the same radius by taking $C$ as center set and assign all vertices in $\{v^j\mid v \in V \setminus C\land j \in [k+1]\}$ according to the assignment of the vertices in $V \setminus C$. We will use the clustering produced by the approximation algorithm to create an assignment of $V$ to $C$ while increasing its radius only by a factor of $2$.

    For any $j \in [k+1]$ let $V^j = \{v^j\mid v \in V \setminus C\}$. Then we know that there exists at least one $j^* \in [k+1]$ such that $V^{j^*}$ contains no center $\{c_1',\ldots,c_k'\}$. For every $i \in [k]$ with $P_i' \cap (V^{j^*}\cup C) \neq \emptyset$ it holds that every vertex in $P_i' \cap (V^{j^*}\cup C)$ has distance at most $\alpha \cdot r$ to $c_i'$ which together with the triangle inequality also implies that every vertex has only distance $2 \alpha \cdot r^*$ to any center in $C \cap P_i'$. Let $S_1^{j^*},\ldots,S_l^{j^*}$ be the connected components of $P_i' \cap (V^{j^*} \cup C)$. Then each of these components contains at least one vertex in $C$, as each component is connected to the center $c_i' \not\in V^{j^*}$ and $C$ separates $V^{j^*}$ from the rest of the graph. We can assign for any $i' \in [l]$ the vertices in $S_{i'}^{j^*}$ to the respective center in $C \cap P_i'$ which is closest to it regarding the shortest path distance within the component $S_{i'}^{j^*}$. Then every vertex on the respective shortest path gets also assigned to the same center and we end up with disjoint connected components with the centers $C \cap P_i'$ that contain all vertices in $P_i' \cap V^{j^*}$. 
    
    %As a result we can assign all vertices in $C_i' \cap (V^{j^*} \cup C)$ to the centers in $C \cap C_i'$ by assigning the vertices in $S_{i'}^{j^*} \setminus C$ to one of the centers in $S_{i'}^{j^*} \cap C$ and make the other centers in $S_{i'}^{j^*}$ singletons for any $i' \in [l]$. One may verify that the resulting clusters are connected.

    By doing this for any $P_i'$ with $i \in [k]$ we assign every vertex in $V^{j^*}$ to exactly one center in $C$, since $P_1',\ldots,P_k'$ cover $V^{j^*}$ and all of these clusters are pairwise disjoint. As argued above, the resulting clusters $P_1,\ldots,P_k$ of $c_1,\ldots,c_k$ are connected and the radius is limited by $2 \alpha \cdot r^*$. Given that the subgraph of $G'$ on $(V^{j^*} \cup C)$ is isomorphic to $G$, this corresponds to a $2 \alpha$-approximation of the optimal assignment to $C$.
\end{proof}

In total we obtain that it is NP-hard to approximate both the assignment version and the regular version of connected $k$-center with an approximation ratio in $o(log^*(k))$:

% \begin{theorem}
%     It is NP-hard to approximate connected $k$-center with an approximation ratio in $o(\log^*(k))$. The same is also true for the assignment version.
% \end{theorem}

\ThmHardnessCenter*

\begin{proof}
    By Theorem~\ref{thm:reduction_centers_assignment} it is sufficient to prove the claim for the assignment version as any $o(\log^*(k))$-approximation algorithm for the non-assignment version also corresponds to such an algorithm for the assignment version. Suppose there existed a polynomial time algorithm calculating an $f(k)$-approximation for the assignment version of connected $k$-center with $f(k) \in o(\log^*(k))$. Since Lemma~\ref{lem:reduction_bound_prod} tells us that $L \in \Omega(\log^*(\mathcal{P}_L))$, we may conclude that there exists a constant $L' \in \mathbb{N}$ such that $f(\mathcal{P}_{L'}) < 2L'$. 
    
    For any given 3-SAT formula $\phi$ we know by Lemma~\ref{lem:red_center_polynomial} that we can calculate the instance $I(0,([\mathcal{S}_1],\ldots,[\mathcal{S}_{L'}]))$ and execute the approximation algorithm on it in time polynomially bounded in the size of $\phi$. By Lemma~\ref{lem:red_center_upper} we know that if $\phi$ can be satisfied, there exists an assignment of radius $1$. Thus the algorithm returns a solution with radius at most $f(\mathcal{P}_{L'}) < 2L'$. At the same time if $\phi$ cannot be satisfied, Lemma~\ref{lem:red_center_lower} tells us that any assignment to the centers (including the one calculated by the algorithm) has at least radius $2L'$. Thus one could solve 3-SAT in polynomial time by checking whether the solution of the algorithm has a radius strictly smaller than $2L'$ or a radius of at least $2L'$. This would imply P = NP.
\end{proof}

\section{Connected min-sum-radii} \label{sec:msr}

To approximate the connected min-sum-radii problem, we slightly adapt the primal dual algorithm by Buchem et al.~\cite{buchem20243+} which calculates a $(3 + \epsilon)$-approximation for the regular min-sum-radii problem. For a fixed $\epsilon >0$, it first guesses the $ \frac{1}{\epsilon}$ largest clusters and then approximates the remaining smaller clusters via a primal dual approach. The guessing of the largest clusters can be done by just going over all possible centers and radii and assigning all vertices within the respective radius to the centers. However, if we introduce the connectivity constraint we cannot hope to obtain the exact optimal clusters this way as it gets a lot harder to decide where we need to assign the vertices. Instead we can only obtain enclosing balls around the clusters and later have to deal with overlap between different clusters by merging them. Buchem et al.\ faced similar problems when considering lower bound constraints for which they provided a $(3.5 + \epsilon)$-approximation. However their solution does not work for the connectivity constraint, as it requires that any superset of a feasible cluster is again feasible. Using a slightly different approach that would in turn not work for non-uniform lower bound, we are able to obtain a $(3 + \epsilon)$-approximation for the connected min-sum radii problem.

For a vertex $v$ and a radius $r$ we define $B(v,r)$ as the set of all vertices in $V$ that are reachable from $v$ via a path only using vertices whose distance to $v$ is bounded by $r$. We denote the set of all pairs $(v,r)$ of vertices $v \in V$ and distances $r = d(v,w)$ for another vertex $w \in V$ as $\mathcal{S}$. Obviously $|\mathcal{S}| \leq |V|^2$. For an optimal connected cluster $P_i$ with radius $r^*$ and center $c^*$ it holds that $P_i \subseteq B(c^*,r^*)$. For an fixed $\epsilon > 0$, we can guess the centers $c_1^*,...,c_{1/\epsilon}^*$ and radii $r_1^*,...,r_{1/\epsilon}^*$ of the $1/\epsilon$ largest clusters (in decreasing order) and define $V' = V \setminus \bigcup_{i=1}^{1/\epsilon} B(c_{i}^*,r_{i}^*)$. We define $\B = \{ (v,r) \in \mathcal{S}\mid r \leq r_{1/\epsilon}^*\}$. One may note that for every remaining optimal cluster with center $c^*$ and radius $r^*$ it holds that $r^* \leq r_{1/\epsilon}^*$ and thus $(c^*,r^*) \in \B$. Let $\opt$ denote the cost of the optimal connected min-sum-radii solution and let $\opt' = \opt - \sum_{i=1}^{1/\epsilon} r_i^*$ be the cost produced by the $k' = k -{1/\epsilon}$ smallest clusters. We might observe that it is possible to cover the vertices in $V'$ with cost $\opt'$ by choosing the $k'$ pairs in $\B$ corresponding to the remaining optimal clusters:

\begin{observation}\label{obs:split_opt}
    There exists $k'$ pairs $(c_1,r_1),...,(c_{k'},r_{k'}) \in \B$ such that:
    \begin{itemize}
        \item $V' \subseteq \bigcup_{i=1}^{k'} B(c_i,r_i)$
        \item $\sum_{i=1}^{k'} r_i \leq \opt'$
    \end{itemize}
\end{observation}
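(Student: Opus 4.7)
The natural choice for the $k'$ pairs is to take the centers and radii of the $k'$ smallest clusters of the optimal connected min-sum-radii solution. Denote the optimal clusters in decreasing order of radius by $P_1^*, \ldots, P_k^*$ with centers $c_1^*, \ldots, c_k^*$ and radii $r_1^*, \ldots, r_k^*$, so that the indices $i \leq 1/\epsilon$ correspond to the guessed large clusters and the indices $i > 1/\epsilon$ correspond to the $k'$ small ones. I would propose the pairs $(c_{1/\epsilon + 1}^*, r_{1/\epsilon + 1}^*), \ldots, (c_k^*, r_k^*)$ as the witnesses and verify the three required properties.

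Verifying membership in $\B$ is immediate: each $r_i^*$ is by definition the distance between $c_i^*$ and some vertex of $P_i^*$, so $(c_i^*, r_i^*) \in \mathcal{S}$, and the decreasing order of radii gives $r_i^* \leq r_{1/\epsilon}^*$ for every $i > 1/\epsilon$. The heart of the argument is the containment $P_i^* \subseteq B(c_i^*, r_i^*)$ for every $i \in [k]$, which is precisely where the connectivity of optimal clusters in $G$ comes into play. Picking any $v \in P_i^*$, the subgraph of $G$ induced by $P_i^*$ contains a $c_i^*$-to-$v$ path; every vertex on this path lies in $P_i^*$ and hence is within distance $r_i^*$ of $c_i^*$, so by the definition of $B(c_i^*, r_i^*)$ as all vertices reachable from $c_i^*$ via vertices at distance at most $r_i^*$, this path witnesses $v \in B(c_i^*, r_i^*)$.

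With the containment in hand, the two required claims follow. Any $v \in V'$ is excluded from $B(c_i^*, r_i^*)$ for every $i \leq 1/\epsilon$, and hence from $P_i^*$; since $P_1^*, \ldots, P_k^*$ partition $V$, the vertex $v$ must lie in some $P_j^*$ with $j > 1/\epsilon$, and the containment yields $v \in B(c_j^*, r_j^*)$. The radius bound $\sum_{i > 1/\epsilon} r_i^* = \opt - \sum_{i \leq 1/\epsilon} r_i^* = \opt'$ is just the definition of $\opt'$. The whole argument is short; there is no genuine obstacle beyond noticing that the connectivity constraint lifts every optimal cluster into the surrounding reachability ball $B(\cdot,\cdot)$.
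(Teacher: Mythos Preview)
Your proposal is correct and matches the paper's approach exactly: the paper also takes the $k'$ smallest optimal clusters as the witnessing pairs, having already noted just before the observation that $P_i^* \subseteq B(c_i^*, r_i^*)$ follows from connectivity. Your write-up simply spells out the details the paper leaves implicit.
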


Together with the ${1/\epsilon}$ largest clusters these pairs basically correspond to a connected min-sum-radii solution of $G$ where the clusters could overlap. Unfortunately we cannot calculate these pairs efficiently. However we can relax this problem by reformulating it as an LP and dropping the integrality constraint. We obtain:
\begin{equation}\label{LP_primal}
    \begin{array}{rrll}
        \min& \multicolumn{2}{c}{ \sum\limits_{(v,r) \in \B} r \cdot x_{v,r} } \\
        \textrm{s.t.}&\sum\limits_{(v,r) \in \B: u \in B(v,r)} x_{v,r} & \geq 1, & \forall u \in V',\\ 
        &\sum\limits_{(v,r) \in \B} x_{v,r} &\leq k' \\
        &x_{v,r} &\geq 0, & \forall (v,r) \in \B, \\
    \end{array}
\end{equation}
And the corresponding dual linear program:
\begin{equation}\label{LP_dual}
    \begin{array}{rrll}
        \max& \multicolumn{2}{c}{ \sum\limits_{u \in V'} \alpha_u - k' \cdot \lambda } \\
        \textrm{s.t.}&\sum\limits_{u \in B(v,r) \cap V'} \alpha_u & \leq r + \lambda, & \forall (v,r) \in \B,\\ 
        &\alpha_u &\geq 0, & \forall u \in V', \\
        & \lambda &\geq 0
    \end{array}
\end{equation}
Using our observation it is clear that the optimal value of these LPs is upper bounded by $\opt'$. 

In the following we need to present some definitions from the work of Buchem et al.\ which sometimes are a little bit modified to better work with our analysis.

We say that for a solution $(\lambda,\alpha)$ of LP~\eqref{LP_dual} the dual constraint corresponding to a pair $(v,r) \in \B$ is almost tight if $\sum\limits_{u \in B(v,r) \cap V'} \alpha_u  \geq r + \lambda - \mu$, for $\mu = \frac{r_{1/\epsilon}^*}{|V|^2}$. To simplify notation, we say then that the pair $(v,r)$ itself is \emph{almost tight}. 

Furthermore for a set of pairs we are interested in the connected components of this set where we basically consider two pairs neighbored if their corresponding balls share a common vertex.

\begin{definition}
    Let $\mathcal{S}' \subseteq \mathcal{S}$ denote a set of pairs. We define an overlap graph $G_o = (V_o,E_o)$ with $V_o = \{v_{u,r}\mid (u,r) \in \mathcal{S}'\}$ and $E_o = \{\{v_{u,r},v_{u',r'}\}\mid B(u,r) \cap B(u',r') \neq \emptyset \}$. A set of pairs $(u_1,r_1),\ldots,(u_\ell,r_\ell) \in \mathcal{S}'$ forms a connected component of $\mathcal{S'}$ if the corresponding vertices form a connected component of $G_o$, and $comp(\mathcal{S}')$ denotes the collection of components of $\mathcal{S}'$.
\end{definition}

We are now interested to find an LP solution for the dual LP together with a set of pairs that fulfills some structural properties:

\begin{definition}\label{def:well_structured}
    Let $(\lambda,\alpha)$ be a solution to~\eqref{LP_dual} and let $\B' \subseteq \B$. We say that $\B'$ is a set of \emph{structured pairs} for $(\lambda,\alpha)$ if there is an almost tight pair $(u',r') \in (\B\setminus \B')$ such that,
    \begin{enumerate}
        \item each $(u,r) \in \B'$ is almost tight in $(\lambda,\alpha)$,
        \item $\B' $ covers $V'$,
        \item $|comp(\B' )| \geq k' \geq |comp(\B' + (u',r'))|$
    \end{enumerate}
\end{definition}

One may note that in the work by Buchem et al.\ $(u',r')$ itself was also a part of $\B'$. We changed this to simplify notation.

As Buchem et al.\ showed, one can obtain a solution for the dual LP as well as a set of structured pairs in polynomial time by repeatedly raising some dual variables until an additional constraint becomes almost tight. The procedure still works if one uses connected balls instead of just all points within the radius of the center.

\begin{lemma}[Lemma 2.2 in \cite{buchem20243+}]
    In polynomial time, we can compute a solution $(\lambda,\alpha)$ of~\eqref{LP_dual}, together with a set $\B'$ of structured pairs for $(\lambda,\alpha)$.
\end{lemma}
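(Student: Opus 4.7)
The plan is to adapt the primal-dual algorithm of Buchem et al.\ almost verbatim. The key observation is that LPs~\eqref{LP_primal} and~\eqref{LP_dual} only depend on the family $\{B(v,r) \mid (v,r) \in \B\}$ through which vertices belong to which ball; replacing a metric ball by a connected ball does not alter the combinatorial structure of the LP, so the original primal-dual procedure can be run unchanged. The task thus reduces to verifying that its running time and correctness analysis still go through in our setting.

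Concretely, the algorithm initializes $\alpha_u = 0$ for every $u \in V'$ and $\lambda = 0$, calls a vertex $u \in V'$ \emph{active} as long as no almost tight pair $(v,r) \in \B$ contains $u$ in its ball, and in a first phase raises all active $\alpha_u$ uniformly at unit rate. Whenever a pair $(v,r)$ reaches the almost tight threshold, we append it to a running set $\B'$ and deactivate every vertex in $B(v,r) \cap V'$. Events correspond to pairs in $\B$ becoming almost tight, hence there are at most $|\B| \leq |V|^2$ of them, and the next event can be identified in polynomial time by inspecting each candidate pair in turn. The first phase ends as soon as $\B'$ covers $V'$.

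If at this moment $|comp(\B')| \geq k'$ and some almost tight pair outside $\B'$ would reduce $|comp|$ to at most $k'$, we output $(\B',(\lambda,\alpha))$. Otherwise we enter a second phase that raises $\lambda$; since this additively relaxes every dual constraint, vertices become active again, and further pairs eventually become almost tight and get appended to $\B'$. We halt at the first event at which adding the newly almost tight pair $(u',r')$ would drop $|comp|$ from a value $\geq k'$ to at most $k'$; this pair then serves as the external almost tight pair required in Definition~\ref{def:well_structured}.

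The main obstacle, exactly as in Buchem et al., is to show that such a stopping moment is always reached. Adding a new pair to $\B'$ either increases $|comp|$ by one (if its ball is disjoint from all previous ones) or decreases it (if its ball bridges one or more current components), so $|comp(\B')|$ drifts upward in unit steps with occasional larger downward jumps. Observation~\ref{obs:split_opt} bounds the dual optimum by $\opt' < \infty$, so the growth process cannot continue indefinitely, and a discrete intermediate-value argument on the event sequence identical to that of Buchem et al.\ then guarantees an event at which the component count crosses $k'$ downward via a merger. Feasibility of the returned $(\lambda,\alpha)$ and almost tightness of every pair in $\B'$ follow directly from the construction.
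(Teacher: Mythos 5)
Your central point---that LPs~\eqref{LP_primal} and~\eqref{LP_dual} see a pair $(v,r)$ only through the vertex set $B(v,r)\cap V'$, so the primal-dual procedure of Buchem et al.\ can be run unchanged once metric balls are replaced by connected balls---is exactly how the paper handles this lemma: it does not reprove it, but imports Lemma~2.2 of Buchem et al.\ with the single remark that the dual-ascent still works when the balls are defined via connectivity. So in substance your proposal follows the same route as the paper, and at the level of detail the paper itself provides, it is fine.

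One caveat about your reconstruction of the algorithm, if it is meant to stand on its own rather than as a pointer to Buchem et al.: the second phase as you describe it does not ``directly'' yield condition (1) of Definition~\ref{def:well_structured}. When $\lambda$ is raised, every threshold $r+\lambda$ increases, so pairs that were appended to $\B'$ earlier can cease to be almost tight with respect to the final $(\lambda,\alpha)$; a monotone ``append newly tight pairs'' process therefore does not guarantee that \emph{all} pairs of the returned $\B'$ are almost tight (nor that the covering property survives if stale pairs must be discarded), and your closing sentence that feasibility and almost tightness ``follow directly from the construction'' glosses over precisely this point. The original proof of Buchem et al.\ treats the interaction between raising $\lambda$ and maintaining the tight set more carefully, and this is the part that both you and the paper delegate to their Lemma~2.2; since the connected-ball substitution does not touch that argument, the delegation is legitimate, but a self-contained proof would need to re-certify the pairs kept in $\B'$ as almost tight for the final dual solution.
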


The main idea is now to merge the balls corresponding to the pairs of the same component together. Given that the neighbored pairs share a common vertex we are able to bound the radius of the resulting cluster with the triangle inequality by the radii of the contained pairs. Additionally since the pairs are almost tight, we know that their radius can be lower bounded by the sum of the $\alpha_u$ values of the vertices $u$ contained in the cluster, which are part of the objective function. However, neighbored pairs share common vertices which means that these vertices can contribute in making both of these pairs almost tight. To circumvent this, Buchem et al.\ only considered sets of pairs whose balls are pairwise disjoint. For a component $\Com$ of pairs of a subset $\mathcal{S}' \subseteq \mathcal{S}$ we define $rad(\Com)$ to be the radius of the cluster $\bigcup_{(u,r) \in \Com} B(u,r)$ if we choose the best possible vertex in it as a center. For any possible subset $S \subseteq \mathcal{S}$ (which might correspond to a component) we define $sr(S)$ to be the sum of the radii of the contained pairs, i.e.\ $sr(S) = \sum_{(u,r) \in S} r$, and $dsr(S) = sr(S_d)$ where $S_d$ is a subset of pairs in $S$ such that the respective balls of the pairs are pairwise disjoint maximizing $sr(S_d)$. As Buchem et al.\ showed, we can bound the radius $rad(\Com)$ of a component of pairs by the sum of the radii of these disjoint sets:

\begin{lemma}[First half of Lemma 2.4 in \cite{buchem20243+}]\label{lem:bound_radius}
    For any component $\Com$ of pairs in $\mathcal{S'}$ it holds that $rad(\Com) \leq 3 \cdot dsr(\Com)$.
\end{lemma}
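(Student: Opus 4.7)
The proof plan is to fix a maximum-weight pairwise-disjoint subset $D \subseteq \Com$, so that $sr(D) = dsr(\Com)$, and then to exhibit a vertex $c$ in the union $U := \bigcup_{(u,r) \in \Com} B(u,r)$ whose distance to every vertex in $U$ is at most $3 \cdot dsr(\Com)$.

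The natural candidate is to set $c := u_0$ for a ball $B_0 = B(u_0, r_0) \in \Com$ chosen carefully---say, the ball of maximum radius in $\Com$. For an arbitrary vertex $w \in U$, pick some ball $B_w = B(u_w, r_w) \in \Com$ containing $w$ and consider a shortest overlap path $B_0 = B_{i_0}, B_{i_1}, \ldots, B_{i_t} = B_w$ in the overlap graph of $\Com$, which exists since $\Com$ is a connected component. Chaining the triangle inequality along the path, using that two overlapping balls share a vertex (so the distance from the center of one to a point in the other is at most $r_j + 2r_{j+1}$), and combining with $d(u_w, w) \leq r_w$ and $d(u_0, x) \leq r_0$ for $x \in B_0$, yields
\[
d(c, w) \;\leq\; r_0 + 2 \sum_{j=1}^{t} r_{i_j}.
\]

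The second ingredient is the structural observation that on a shortest overlap path, non-consecutive balls must be disjoint: otherwise the path could be shortened by skipping a bag. Hence both $\{B_{i_0}, B_{i_2}, B_{i_4}, \ldots\}$ and $\{B_{i_1}, B_{i_3}, \ldots\}$ are pairwise disjoint subsets of $\Com$, each of total weight at most $dsr(\Com)$. This gives $\sum_{j=0}^{t} r_{i_j} \leq 2 \cdot dsr(\Com)$. Plugging this into the preceding display yields the naive bound $d(c, w) \leq 4 \cdot dsr(\Com) - r_0$.

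The main obstacle is to sharpen this to $3 \cdot dsr(\Com)$. I expect this to be resolved by a more delicate choice of center---for instance, placing $c$ in the overlap region of two consecutive balls on some central path, so that the two ``halves'' of each path bound can be controlled separately by the two alternating disjoint subsets---together with careful parity-based bookkeeping that exploits the fact that $r_0$ itself contributes to one of the alternating disjoint subsets (so one of the two $dsr(\Com)$ bounds may be replaced by $dsr(\Com) - r_0$). The maximality of $D$ (as opposed to a merely greedy disjoint choice) additionally guarantees that every ball $B \in \Com \setminus D$ satisfies $r \leq \sum_{D_a \in N(B)} r_a$, where $N(B) \subseteq D$ is the collection of balls overlapping $B$; this swap-optimality constraint is the extra structural leverage needed to close the gap between $4 \cdot dsr(\Com)$ and $3 \cdot dsr(\Com)$.
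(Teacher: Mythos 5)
Your proposal does not close the statement: what you actually establish is the weaker bound $rad(\Com) \le 4\cdot dsr(\Com) - r_0$. The chaining along a shortest overlap path and the odd/even split into two pairwise-disjoint families are correct (and are essentially the same easy argument this paper uses for the \emph{diameter} bound $diam(\Com) \le 4\cdot dsr(\Com)$ in Section~\ref{sec:msd}), but the passage from $4$ to $3$ --- which is the entire content of the lemma --- is only conjectured. You defer it to ``a more delicate choice of center,'' ``parity-based bookkeeping,'' and a swap-optimality property of the maximum-weight disjoint family, without developing any of these into an argument. The swap-optimality inequality you state (each $(u,r)\in\Com\setminus D$ has $r$ at most the total radius of its neighbours in $D$) is true, but it is never used, and it is not evident that it suffices. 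A concrete obstruction to your sketch: the center must serve \emph{every} vertex of $\bigcup_{(u,r)\in\Com}B(u,r)$ simultaneously, whereas your alternating disjoint subsets are defined per path; as the target vertex $w$ varies, the path and hence the two families you want to charge against $dsr(\Com)$ change, so the saving of $r_0$ (or of the radii of the balls around a cleverly placed center) has to be extracted uniformly over all these paths, and that is exactly where the difficulty lies.

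For comparison, the paper does not reprove this lemma at all: it imports it as the first half of Lemma~2.4 of Buchem et al., explicitly describing their proof of the factor~$3$ as an ``elaborate argument,'' and only needs that replacing metric balls by connected balls (vertices reachable within distance $r$ of the center) does not affect the statement. So your write-up, as it stands, reproves the simpler $4\cdot dsr(\Com)$-type bound but leaves the actual claim unproven; this is a genuine gap rather than an alternative route.
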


%Furthermore since we are talking about the sum of the radius of pairs whose clusters do not overlap when considering the value $dsr(\Com)$ of a component $\Com$ of pairs in $\B'$ using the LP solution $(\alpha,\lambda)$. 

Furthermore since we are talking about the sum of the radii of pairs whose clusters do not overlap when considering $dsr$, we can also bound the respective values by the LP solution $(\alpha,\lambda)$ (for pairs in $\B'$):

\begin{lemma}[Second half of Lemma 2.4 in \cite{buchem20243+}]\label{lem:bound_dsr}
    Let $\Com$ be a connected component of $\mathcal{B}'$. Then:
    \begin{equation*}
        dsr(\Com) \leq \left(\sum_{v \in V'(\Com)} \alpha_v\right) - \lambda + |V'(\Com)| \mu,
    \end{equation*},
    where $ V'(\Com)$ denotes the set $V' \cap \left(\bigcup_{(c,r) \in \Com} B(c,r)\right)$.
\end{lemma}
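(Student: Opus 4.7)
The plan is to carry out a standard primal-dual charging on a maximum disjoint subfamily. Fix $\Com_d \subseteq \Com$ with $dsr(\Com) = sr(\Com_d)$, so that the balls $\{B(c,r) : (c,r)\in \Com_d\}$ are pairwise disjoint. For each $(c,r) \in \Com_d \subseteq \B'$, the dual constraint of~\eqref{LP_dual} corresponding to $(c,r)$ is almost tight, which rearranges to
\[
r \leq \sum_{u \in B(c,r) \cap V'} \alpha_u \;+\; \mu - \lambda.
\]
Summing this inequality over $(c,r) \in \Com_d$ gives an upper bound on $dsr(\Com)$ consisting of a double sum of $\alpha$-values plus the additive error $|\Com_d|(\mu - \lambda)$.

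The next step is to collapse the double sum using disjointness. Since the balls in $\Com_d$ are pairwise disjoint and each is contained in $\bigcup_{(c,r)\in\Com} B(c,r)$, every vertex $v \in V'(\Com)$ is counted at most once across the inner sums. Hence
\[
\sum_{(c,r)\in\Com_d}\;\sum_{u \in B(c,r)\cap V'}\alpha_u \;\leq\; \sum_{v \in V'(\Com)}\alpha_v,
\]
and we obtain the intermediate bound
\[
dsr(\Com) \;\leq\; \sum_{v\in V'(\Com)}\alpha_v \;+\; |\Com_d|(\mu - \lambda).
\]

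It remains to match the error term $|\Com_d|(\mu-\lambda)$ with the desired $|V'(\Com)|\mu - \lambda$, i.e.\ to establish $(|V'(\Com)|-|\Com_d|)\mu + (|\Com_d|-1)\lambda \geq 0$. For the $\lambda$-piece, dual feasibility gives $\lambda \geq 0$, and since $\Com$ is a non-empty component we have $|\Com_d| \geq 1$, so the entire excess $(|\Com_d|-1)\lambda$ is non-negative and we may safely shed $|\Com_d|-1$ copies of $\lambda$. For the $\mu$-piece, disjointness of the balls in $\Com_d$ shows that each such ball that meets $V'$ contributes a distinct vertex to $V'(\Com)$; the only concern is a ball whose $V'$-intersection is empty, but almost-tightness then forces $r \leq \mu - \lambda$, making its contribution to $dsr(\Com)$ negligible, so after a WLOG pruning of $\Com_d$ to balls that hit $V'$ (absorbing the tiny discarded mass into the $\mu$-slack) we have $|\Com_d| \leq |V'(\Com)|$ and the $\mu$-piece is non-negative as well.

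The main obstacle will be precisely this bookkeeping at the end. Naively one obtains $|\Com_d|\lambda$ rather than the single $\lambda$ on the right and $|\Com_d|\mu$ rather than $|V'(\Com)|\mu$. The two resolutions must be deployed simultaneously: dual feasibility eats the extra $\lambda$-contributions, and the almost-tightness bound $r \leq \mu - \lambda$ for non-$V'$-meeting balls lets us upper bound $|\Com_d|$ by $|V'(\Com)|$ (up to vanishing slack). Once these two observations are combined the lemma follows by adding the intermediate inequality to the non-negative gap $(|V'(\Com)|-|\Com_d|)\mu + (|\Com_d|-1)\lambda$.
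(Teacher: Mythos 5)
Your overall chain is the right one and is essentially the intended argument (the paper itself only cites this statement from Buchem et al.\ rather than proving it): take a disjoint subfamily $\Com_d\subseteq\Com$ with $dsr(\Com)=sr(\Com_d)$, sum the almost-tightness inequalities $r\le\sum_{u\in B(c,r)\cap V'}\alpha_u+\mu-\lambda$, collapse the double sum using disjointness together with $\alpha_u\ge 0$, and discard the surplus $(|\Com_d|-1)\lambda\ge 0$. The gap is in your final bookkeeping, namely the treatment of pairs $(c,r)\in\Com_d$ with $B(c,r)\cap V'=\emptyset$. Almost-tightness indeed forces $r\le\mu-\lambda$ for such a pair, but ``absorbing the discarded mass into the $\mu$-slack'' is not a proof: the quantity you must show nonnegative is exactly $(|V'(\Com)|-|\Com_d|)\mu+(|\Com_d|-1)\lambda$, and if $\lambda=0$ while every vertex of $V'(\Com)$ is already accounted for by a distinct $V'$-meeting ball of $\Com_d$, then each additional $V'$-avoiding ball makes this quantity strictly negative, with nothing left to pay for it. In fact no bookkeeping can rescue this from Definition~\ref{def:well_structured} alone: a component consisting of a single pair with $B(c,r)\cap V'=\emptyset$, $0<r\le\mu$ and $\lambda=0$ is almost tight by the stated definition, yet has $dsr(\Com)=r>0$ while the right-hand side of the lemma equals $-\lambda=0$, so the claimed inequality would be false for it.

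What actually closes the gap is a property of how $(\lambda,\alpha)$ and $\B'$ are constructed (the dual-ascent of Lemma 2.2 in \cite{buchem20243+}), not of the definition of structured pairs: a pair whose ball contains no vertex of $V'$ has a dual constraint whose left-hand side is identically zero, and such pairs are never selected into $\B'$ by the ascent. Hence every pair of $\B'$, and in particular every ball of $\Com_d$, contains at least one vertex of $V'$; disjointness then gives $|\Com_d|\le|V'(\Com)|$ outright, and the rest of your computation finishes the proof. So you need to invoke this property of the construction explicitly (or add it as a standing assumption on $\B'$); as written, the pruning/absorption step does not go through.
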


A very important property of the set $\B'$ of structured pairs is that it has at least $k'$ components. This is relevant as the objective function of LP~\eqref{LP_dual} subtracts $\lambda$ exactly $k'$ times which also needs to be done if we want to bound the cost of our solution by the LP cost. Intuitively, this ensures that the LP solution actually uses $k'$ clusters and does not increase the cost of the vertices by forcing them into bigger clusters.

However, to obtain at most $k'$ clusters we need to merge all components overlapping with the ball corresponding to the pair $(u',r')$ in Definition \ref{def:well_structured}. Additionally there might also be some overlap with the pairs $(c_1^*,r_1^*),\ldots,(c_{1/\epsilon}^*,r_{1/\epsilon}^*)$ that we have to deal with. Let
\begin{equation*}
    \mathcal{S}' = \B' \cup \{(u',r')\} \cup \{(c_i^*,r_i^*)\mid i \in [1/\epsilon]\}.
\end{equation*}
Then our algorithm merges together the balls corresponding to pairs of the same component of $comp(\mathcal{S}')$ and choose the vertices minimizing the respective radius of the clusters as centers. One may note that if two overlapping clusters are connected and share a common vertex that then the merged cluster is also connected. Thus the algorithm actually calculates a connected min-sum-radii solution with at most $k$ clusters.

\begin{observation}
    Let $P_1, P_2 \subseteq V$ be connected clusters of $G$ with $P_1 \cap P_2 \neq \emptyset$. Then $P_1 \cap P_2$ is also connected.
\end{observation}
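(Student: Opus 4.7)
The claim as written asserts that the intersection $P_1 \cap P_2$ of two connected clusters sharing a vertex is connected. The plan I would pursue is first to test the claim against the definitions at hand: take any two vertices $y, z \in P_1 \cap P_2$ and try to build a $y$-$z$ path that stays inside $P_1 \cap P_2$. Connectivity of $P_1$ supplies a $y$-$z$ path inside $P_1$, and connectivity of $P_2$ supplies one inside $P_2$; however, neither path is forced to remain in the intersection, so this natural approach stalls immediately.

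Running this test on a tiny example shows that the claim fails in general. Let $G$ be the $4$-cycle on vertices $a, b, c, d$ with edges $\{a,b\}, \{b,c\}, \{c,d\}, \{d,a\}$, and set $P_1 = \{a, b, c\}$, $P_2 = \{a, d, c\}$. Both clusters induce connected paths of length two in $G$, their intersection $\{a, c\}$ is nonempty, but $a$ and $c$ are not adjacent in $G$, so the induced subgraph on $P_1 \cap P_2$ is edgeless and therefore disconnected. The phenomenon is not an artifact of arbitrary subsets either: even if one restricts $P_1, P_2$ to be connected balls $B(v, r)$ in the sense of Section~\ref{sec:msr}, two balls whose centers lie on opposite arcs of a short cycle can meet in two disjoint regions.

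I therefore believe the displayed statement is a typographical slip and that the intended claim is the \emph{union} version, which is exactly what the sentence immediately preceding the observation asserts: ``if two overlapping clusters are connected and share a common vertex that then the merged cluster is also connected''. The proof of the union version is a one-liner that I would formalize as follows: fix a common vertex $x \in P_1 \cap P_2$; for every $y \in P_1 \cup P_2$ there is a path from $x$ to $y$ lying entirely in whichever $P_i$ contains $y$, so every pair of vertices of $P_1 \cup P_2$ is joined through $x$ inside $G[P_1 \cup P_2]$. This union statement is also precisely what the subsequent MSR analysis invokes when it merges overlapping balls into a single cluster, so replacing $\cap$ with $\cup$ is both sufficient for the algorithm and consistent with the prose around the observation.

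The real obstacle here is editorial rather than mathematical. I see no additional hypothesis in the setting of Section~\ref{sec:msr} that would rescue the intersection version: the clusters are connected balls in an arbitrary graph, and the counterexample above can be embedded into essentially any instance. My recommendation would therefore be to change the observation to read $P_1 \cup P_2$ and use the one-line argument above; if the authors genuinely want to state something about the intersection, one would first need to impose a structural property such as $G$ being a tree (or, more generally, the cluster family forming a Helly family on $G$), and the proof would then reduce to invoking that property rather than to any combinatorial argument about balls.
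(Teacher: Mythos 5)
You are right on all counts: the statement as printed (with $P_1 \cap P_2$) is a typo, your $4$-cycle counterexample refutes it, and the intended claim is the union version that the preceding sentence states in prose (``the merged cluster is also connected''), which is also the only form the MSR analysis uses when merging overlapping balls. Your one-line argument through a common vertex $x \in P_1 \cap P_2$ is exactly the (implicit) justification the paper relies on, so your proposal is correct and matches the paper's intent; the fix is simply to replace $\cap$ by $\cup$ in the displayed observation.
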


A critical property to bound the cost of the resulting clusters is that if some pairs in $\mathcal{S}' \setminus \B'$ connect components of $\B'$ to a single component $\Com$ then $dsr(\Com)$ is bounded by the $dsr$ values of the previous components plus the radii of the additional pairs:

\begin{lemma}\label{lem:combine_components}
Let $\Com^1,...,\Com^\ell$ be connected components of $\mathcal{B}' $. Let $D\subseteq \mathcal{S}$ be a set of pairs connecting them to a single component $\Com'$. Then
\begin{equation*}
    dsr(\Com') \leq \sum_{i = 1}^\ell dsr(\Com^i) + sr(D).
\end{equation*}
\end{lemma}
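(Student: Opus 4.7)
The plan is to unwind the definition of $dsr$ on the combined component and then distribute the optimal disjoint-ball subset across the original pieces. Concretely, I would first pick a subset $S^\star \subseteq \Com'$ witnessing $dsr(\Com')$, i.e.\ a set of pairs whose balls $B(u,r)$ are pairwise disjoint and whose radii sum to $dsr(\Com')$. Since $\Com^1,\ldots,\Com^\ell$ are distinct connected components of $\B'$, they are pairwise disjoint as sets of pairs, and by hypothesis $\Com' = \Com^1 \cup \cdots \cup \Com^\ell \cup D$. Hence $S^\star$ partitions naturally as
\begin{equation*}
S^\star = (S^\star \cap \Com^1) \;\cup\; \cdots \;\cup\; (S^\star \cap \Com^\ell) \;\cup\; (S^\star \cap D).
\end{equation*}

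Next, I would bound each piece separately. For each $i \in [\ell]$, the set $S^\star \cap \Com^i$ still consists of pairs with pairwise disjoint balls (this property is inherited from $S^\star$), and it is a subset of $\Com^i$. By the very definition of $dsr$ as a maximum over such disjoint-ball subsets, we get $sr(S^\star \cap \Com^i) \leq dsr(\Com^i)$. For the remaining piece, the trivial bound $sr(S^\star \cap D) \leq sr(D)$ suffices, since $sr(\cdot)$ is monotone in the set of pairs. Summing these bounds yields
\begin{equation*}
dsr(\Com') \;=\; sr(S^\star) \;=\; \sum_{i=1}^\ell sr(S^\star \cap \Com^i) + sr(S^\star \cap D) \;\leq\; \sum_{i=1}^\ell dsr(\Com^i) + sr(D),
\end{equation*}
which is exactly the claimed inequality.

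There is no real obstacle here: the whole argument rests on the observation that restricting a pairwise-disjoint collection of balls to a subfamily preserves pairwise-disjointness, combined with the disjointness of the components $\Com^i$ as sets of pairs. The only point that needs a brief sentence of justification is that $\Com' \subseteq \Com^1 \cup \cdots \cup \Com^\ell \cup D$ (so that the partition above is exhaustive), which follows from the assumption that $D$ merges the $\Com^i$ into exactly one component $\Com'$ of $comp(\B' \cup D)$; any pair in $\Com'$ must originate either from some $\Com^i$ or from $D$.
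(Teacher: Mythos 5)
Your proposal is correct and is essentially identical to the paper's own proof: both take the maximizing disjoint-ball subset of $\Com'$, split it into its intersections with $D$ and with each $\Com^i$, and bound the pieces by $sr(D)$ and $dsr(\Com^i)$ respectively. Your added remark that every pair of $\Com'$ lies in some $\Com^i$ or in $D$ is a harmless explicit justification of a step the paper leaves implicit.
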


\begin{proof}
    Let $\Com_d'$ be a subset of pairs in $\Com'$ whose corresponding balls are disjoint that maximizes $sr(\Com_d)$. Then:
    \begin{align*}
        dsr(\Com') &= sr(\Com_d)\\
        &= sr(\Com_d' \cap D) + \sum_{i= 1}^{\ell}sr(\Com_d'\cap \Com^i)\\
        &\leq sr(D) + \sum_{i= 1}^{\ell} dsr(\Com^i).
    \end{align*}
\end{proof}

Using this, we can bound the radii of the resulting clusters which can then be used to bound the cost of our algorithm:

\begin{lemma}\label{lem:msr_bound_cluster}
    Let $\Com'$ be a connected component of $\mathcal{S}'$, $D = \Com' \setminus \mathcal{B}' $ and let $\Com^1,\ldots,\Com^\ell$ be the components of $\Com' \cap \mathcal{B}' $. Then
    \begin{equation*}
        dsr(\Com') \leq \left(\sum_{v \in V'(\Com')} \alpha_v\right) - \ell\cdot \lambda + |V'(\Com')| \mu + sr(D).
    \end{equation*}
\end{lemma}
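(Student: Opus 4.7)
The plan is to obtain this bound by chaining together the two previously proved lemmas, Lemma~\ref{lem:combine_components} and Lemma~\ref{lem:bound_dsr}. First I would apply Lemma~\ref{lem:combine_components} to the component $\Com'$, viewing it as the result of gluing together the $\B'$-components $\Com^1,\ldots,\Com^\ell$ using the connecting set $D = \Com'\setminus\B'$. This yields $dsr(\Com') \leq \sum_{i=1}^\ell dsr(\Com^i) + sr(D)$, which already accounts for the $sr(D)$ term that needs to appear in the target inequality.

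Next I would bound each $dsr(\Com^i)$. Since $\Com^i$ is a connected component of $\B'$, Lemma~\ref{lem:bound_dsr} applies and gives
\begin{equation*}
    dsr(\Com^i) \leq \Bigl(\sum_{v \in V'(\Com^i)} \alpha_v\Bigr) - \lambda + |V'(\Com^i)|\,\mu.
\end{equation*}
Summing this over $i \in [\ell]$ and substituting into the previous inequality yields the $-\ell\lambda$ term and reduces the claim to comparing $\sum_i\sum_{v\in V'(\Com^i)}\alpha_v$ with $\sum_{v\in V'(\Com')}\alpha_v$, and likewise $\sum_i|V'(\Com^i)|$ with $|V'(\Com')|$.

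The key observation that closes the argument is that the sets $V'(\Com^1),\ldots,V'(\Com^\ell)$ are pairwise disjoint. This holds because the $\Com^i$ are distinct connected components of the overlap graph restricted to $\B'$: if a vertex of $V'$ lay in a ball from $\Com^i$ and in a ball from $\Com^j$ with $i\neq j$, those two balls would share a common vertex and hence be adjacent in the overlap graph, contradicting the fact that $\Com^i$ and $\Com^j$ are distinct components. Combined with the inclusion $\bigcup_{i=1}^\ell V'(\Com^i) \subseteq V'(\Com')$ and the non-negativity of the dual variables $\alpha_v\geq 0$, this disjointness gives $\sum_{i=1}^\ell\sum_{v\in V'(\Com^i)}\alpha_v \leq \sum_{v\in V'(\Com')}\alpha_v$ and $\sum_{i=1}^\ell|V'(\Com^i)| \leq |V'(\Com')|$.

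Putting the three pieces together yields exactly the stated inequality. The only non-routine step is the disjointness of the $V'(\Com^i)$, which is really a consequence of the definition of components in the overlap graph rather than a genuine obstacle; the rest is bookkeeping.
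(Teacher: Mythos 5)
Your proposal is correct and follows essentially the same route as the paper: first Lemma~\ref{lem:combine_components} to get $dsr(\Com') \leq \sum_{i=1}^\ell dsr(\Com^i) + sr(D)$, then Lemma~\ref{lem:bound_dsr} on each $\Com^i$, and finally the pairwise disjointness of the sets $V'(\Com^1),\ldots,V'(\Com^\ell)$ (together with $V'(\Com^i) \subseteq V'(\Com')$ and nonnegativity) to collect the sums. Your explicit justification of the disjointness via the overlap-graph components is a detail the paper leaves implicit, but the argument is the same.
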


\begin{proof}
    By combining Lemma~\ref{lem:combine_components} and~\ref{lem:bound_dsr}, we obtain
    \begin{align*}
        dsr(\Com') & \stackrel{\text{Lem. }\ref{lem:combine_components}}{\leq} \sum_{i = 1}^\ell dsr(\Com^i) + sr(D)\\
        & \stackrel{\text{Lem. }\ref{lem:bound_dsr}}{\leq} \sum_{i = 1}^\ell \left(\left(\sum_{v \in V'(\Com^i)} \alpha_v\right) - \lambda + |V'(\Com^i)| \mu \right) + sr(D)\\
        & \stackrel{\phantom{\text{Lem. }\ref{lem:bound_dsr}}}{\leq} \left(\sum_{v \in V'(\Com')} \alpha_v\right) - \ell\cdot \lambda + |V'(\Com')| \mu + sr(D),
    \end{align*}
    where in the last inequality we used that the vertex sets corresponding to $\Com^1,\ldots,\Com^\ell$ are pairwise disjoint. 
\end{proof}

% \begin{theorem}\label{thm:msr_connected}
%     The algorithm calculates a $3 + O(\epsilon)$ approximation for connected min-sum-radii.
% \end{theorem}

\ThmMSR*

\begin{proof}
    Let $\A$ be the connected components of $\mathcal{S}'$ and for any $A \in \A$ let $\ell_A$ be the number of connected components of $\B'$ contained in $A$. Then we can bound the cost of the solution of our algorithm as follows:
    \begin{align}
        \sum_{A \in \A} rad(A) &\stackrel{\text{Lem. }\ref{lem:bound_radius}}{\leq} 3 \sum_{A \in \A} dsr(A)\\
        &\stackrel{\text{Lem. }\ref{lem:msr_bound_cluster}}{\leq} 3 \sum_{A \in \A} \left( \left(\sum_{v \in V'(A)} \alpha_v\right) - \ell_A \cdot \lambda + |V'(A)| \mu + sr(A \setminus \B')\right). \label{eq:bound_cost_msr}
        %&\stackrel{\phantom{\text{Lem. }\ref{lem:msr_bound_cluster}}}{=} 3\left( \sum_{v \in V'} \alpha_v - |comp(\B')|\lambda + |V'| \mu + sr(\mathcal{S}' \setminus \B')\right)\\
        %&\stackrel{\phantom{\text{Lem. }\ref{lem:msr_bound_cluster}}}{\leq} 3\left( \sum_{v \in V'} \alpha_v - k'\lambda + 2 \epsilon \opt + \sum_{i= 1}^{1/\epsilon}r_i^*\right)\\
        %&\stackrel{\phantom{\text{Lem. }\ref{lem:msr_bound_cluster}}}{\leq} 3\left( \opt' +  \sum_{i= 1}^{1/\epsilon}r_i^*\right) + 6 \epsilon \opt\\
        %&\stackrel{\text{Obs. }\ref{obs:split_opt}}{\leq} (3 + 6 \epsilon) \opt
    \end{align}

    We would now like to get rid of the sum over all connected components $A \in \A$. To do this, we will look at the different terms in the large brackets of Inequality \eqref{eq:bound_cost_msr} and bound their total value in this sum.

    For every $v \in V'$ we know that it is contained in exactly one component $A \in \A$. As a result $\sum_{A \in \A}  \sum_{v \in V'(A)} \alpha_v = \sum_{v \in V'} \alpha_v$  and $\sum_{A \in \A} |V'(A)|\mu = |V'| \mu $.
    
    Since $\B' \subseteq \mathcal{S}'$ we have that every connected component of $\B'$ is contained in exactly one of the components $A \in \A$ and thus 
    \begin{equation*}
        \sum_{A \in \A} \ell_A \lambda= comp(\B')\lambda \geq k' \lambda,
    \end{equation*}
    where we used in the last inequality that $\B'$ is a set of structed pairs.
    
    At the same time $\bigcup_{A \in \A} A = \mathcal{S}'$ and every pair in $\mathcal{S}' \setminus \B'$ is contained in exactly one component $A \in \A$. Thus 
    \begin{equation*}
    \sum_{A \in \A} sr(A \setminus \B') = sr(\mathcal{S}' \setminus \B') = r' + \sum_{i= 1}^{1/\epsilon}r_i^*
    \end{equation*}
    
    By inserting these bounds into Inequality~\eqref{eq:bound_cost_msr} we obtain:
    
    \begin{equation*}
        \sum_{A \in \A} rad(A) \leq 3\left( \left(\sum_{v \in V'} \alpha_v\right) - k'\lambda + |V'| \mu + r'+ \sum_{i= 1}^{1/\epsilon}r_i^*\right).
    \end{equation*}

    One might note that $\left(\sum_{v \in V'} \alpha_v\right) - k'\lambda$ is exactly the objective value of the dual LP and by Observation~\ref{obs:split_opt} thus bounded by $\opt'$. Additionally by the choice of $\mu$ we know that $|V'|\mu \leq \epsilon \cdot \opt$. Lastly we know that $r' \leq r_{1/\epsilon^*}$ or in other words $r'$ is bounded by the radius of the $1/\epsilon$ largest cluster which by Markov's inequality is upper bounded by $\epsilon \cdot \opt$. As a result:
    \begin{equation*}
        \sum_{A \in \A} rad(A) \leq 3\left( \opt' +  \sum_{i= 1}^{1/\epsilon}r_i^*\right) + 6 \epsilon \cdot \opt \leq (3 + 6 \epsilon) \opt,
    \end{equation*}
    which proves the theorem
    \end{proof}

\subsection{Connected min-sum-diameter} \label{sec:msd}

In the following, we consider the problem of finding a connected min-sum-diameter solution. Buchem et al.\ pointed out that their algorithm trivially also calculates a $(6 + \epsilon)$-solution for the unconnected min-sum-diameter variant, as there is at most a gap of two between the min-sum-radii and min-sum-diameter objective~\cite{buchem20243+}. While they did not put much focus on the min-sum-diameter objective, other authors also provided results for this objective that were not direct consequences of the respective mins-sum-radii algorithms. Among others  Behsaz and Salavatipour provided a PTAS for Eucledian MSD as well as an exact polynomial time algorithm for constant $k$ \cite{behsaz2015minimum} and Friggstad and Jamshidian proved a $6.546$-approximation for MSD in the same work in which they presented a $3.389$-approximation for MSR \cite{friggstad2022improved}. Interestingly enough the algorithm by Buchem et al.\ actually works quite well with the MSD objective as it can easily be shown that the resulting clusters directly form a $(4 + \epsilon)$-approximation in this setting, which is a significant improvement compared to the trivial $(6 + \epsilon)$-guarantee. To the author's knowledge, this has not been observed in the literature yet, which is why we will provide the respective proof here. It works both in the connected as well as the regular setting.

A very important observation is that for any point set with diameter $\delta$ we could choose an arbitrary vertex in it as a center to obtain a radius of at most $\delta$. As a result for any graph $G$ and metric $d$ the cost of the optimal connected MSD solution is upper bounded by the cost of the optimal MSR solution.

\begin{observation}
    For any graph $G$ and any metric $d$ and natural number $k \in \mathbb{N}$ the cost of the optimal connected min-sum-diameter solution is lower bounded by the cost of the optimal connected min-sum-radii solution.
\end{observation}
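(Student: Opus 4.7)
The plan is a direct comparison argument: from any feasible connected min-sum-diameter solution I would exhibit a feasible connected min-sum-radii solution of no larger cost, which immediately yields the claimed lower bound on the optimum.

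Concretely, I would start with an arbitrary connected MSD clustering, i.e., a partition $P_1,\ldots,P_k$ of $V$ into connected subgraphs with objective value $\sum_{i=1}^k \max_{u,v \in P_i} d(u,v)$. For each $i \in [k]$, I would pick an arbitrary vertex $c_i \in P_i$ and declare it the center of $P_i$. The partition and the connectivity of every cluster are inherited unchanged from the MSD solution, and $c_i \in P_i$ by construction, so by Definition~\ref{def:conclustering} the tuple $(P_1,\ldots,P_k,c_1,\ldots,c_k)$ is a feasible connected MSR solution.

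The cost bound is then immediate from the observation stated in the preceding paragraph of the paper: since $c_i$ itself lies in $P_i$, we have $\max_{v \in P_i} d(v,c_i) \le \max_{u,v \in P_i} d(u,v)$, because the left-hand maximum ranges over a subset of the pairs considered on the right-hand side. Summing this inequality over $i \in [k]$ shows that the MSR cost of the solution I constructed is at most the MSD cost of the clustering I started from. Applying this to an optimal connected MSD solution gives $\opt_{\mathrm{MSR}} \le \opt_{\mathrm{MSD}}$, which is the claim.

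There is no substantive obstacle here: the statement is essentially a consequence of the definition of the diameter together with the freedom, in the MSR objective, to place the center at any vertex of the cluster. The only minor subtlety to spell out is the feasibility check against Definition~\ref{def:conclustering}, which reduces to observing that connectivity and the membership $c_i \in P_i$ transfer from the MSD solution for free.
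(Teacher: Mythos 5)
Your proposal is correct and matches the paper's argument: the paper also observes that in any connected MSD clustering one can declare an arbitrary vertex of each cluster to be its center, so the radius of each cluster is at most its diameter and hence $\opt_{\mathrm{MSR}} \le \opt_{\mathrm{MSD}}$. Your write-up merely makes the feasibility check against Definition~\ref{def:conclustering} explicit, which the paper leaves implicit.
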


We will now show that the diameter of the clusters produced by the min-sum-radii algorithm is upper bounded by $4 + \epsilon$ times the cost of the optimal MSR solution. The main property that we need for this is that for any component $\Com$ of pairs the diameter $diam(\Com)$ of the cluster that gets created by merging the balls corresponding to the pairs in the component is upper bounded by $4 dsr(\Com)$ (which was also shown implicitly in \cite{buchem20243+}):

\begin{lemma}
    For any component $\Com$ of pairs in $\mathcal{S}'$ it holds that:
    \begin{equation*}
        diam(\Com) \leq 4 \cdot dsr(\Com)
    \end{equation*}
\end{lemma}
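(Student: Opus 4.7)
The plan is to bound the distance between any two points in the merged cluster $\bigcup_{(u,r) \in \Com} B(u,r)$ directly, rather than going through the radius bound of Lemma~\ref{lem:bound_radius} (which would only yield a factor of $6$). Fix arbitrary $p, q$ in the merged cluster, so $p \in B(u^p, r^p)$ and $q \in B(u^q, r^q)$ for some pairs in $\Com$. Since $\Com$ is a single component of the overlap graph, I would pick a \emph{shortest} path of pairs $(u_1,r_1), (u_2,r_2), \ldots, (u_k,r_k)$ in $\Com$ from $(u^p,r^p)$ to $(u^q,r^q)$, where consecutive pairs correspond to intersecting balls.

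The first step will be to bound $d(p,q)$ by $2 \cdot sr(\text{chain})$. For each $i \in \{1,\ldots,k-1\}$ pick $x_i \in B(u_i,r_i) \cap B(u_{i+1},r_{i+1})$, and chain triangle inequalities:
\[
d(p,q) \leq d(p,u_1) + \sum_{i=1}^{k-1}\bigl(d(u_i,x_i) + d(x_i,u_{i+1})\bigr) + d(u_k,q) \leq 2\sum_{i=1}^{k} r_i,
\]
where each term on the right is bounded by the appropriate $r_j$ using the definition of $B(u_j,r_j)$.

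The second step, which is the heart of the improvement over a factor $6$, is to extract a pairwise disjoint subchain whose total radius is at least half of $\sum_{i=1}^{k} r_i$. Here I would exploit that the chosen path is \emph{shortest} in the overlap graph: if $B(u_i,r_i)$ and $B(u_j,r_j)$ intersected for some $|i-j| \geq 2$, one could shortcut the chain, contradicting the minimality of $k$. Consequently, the odd-indexed pairs $\{(u_1,r_1),(u_3,r_3),\ldots\}$ have pairwise disjoint balls, and likewise for the even-indexed pairs. By pigeonhole, one of these two subsets has sum of radii at least $\tfrac{1}{2}\sum_{i=1}^{k} r_i$, and since both subsets lie in $\Com$, this gives $dsr(\Com) \geq \tfrac{1}{2}\sum_{i=1}^{k} r_i$. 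Combining with the first step yields $d(p,q) \leq 4 \cdot dsr(\Com)$, and taking $p,q$ to be a diameter-realizing pair finishes the proof. The only subtle point is the shortcut argument in the second step; it is brief but essential, since an arbitrary chain could contain non-consecutive balls that overlap and spoil the disjointness of the odd/even subchains.
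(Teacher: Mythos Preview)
Your proposal is correct and essentially identical to the paper's proof: both take a chain of overlapping pairs between the two endpoints, prune it so that non-consecutive balls are disjoint (you via a shortest path in the overlap graph, the paper via iterative shortcutting to a ``nice path''), split into odd and even indices, and apply pigeonhole together with the same triangle-inequality bound $d(p,q)\le 2\sum_i r_i$.
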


\begin{proof}
    Let $a$ and $b$ be the vertices in $V'(\Com)$ maximizing $d(a,b)$, i.e., $d(a,b) = diam(\Com)$. Let $(v_1,r_1),\ldots,(v_\ell,r_\ell)$ be a sequence of pairs such that $a \in B(v_1,r_1)$, $b \in B(v_\ell,r_\ell)$ and for any $i \in [\ell-1]$ the pairs $(v_1,r_1)$ and $(v_\ell,r_\ell)$ are neighbored, which means that the respective balls share a common vertex. We can ensure that if for two indices $1 \leq i< j \leq \ell$ the intersection $B(v_i,r_i) \cap B(v_j,r_j)$ is nonempty that $i = j - 1$ by removing the pairs $(v_{i+1},r_{i+1}),\ldots,(v_{j-1},r_{j-1})$ otherwise. In the paper by Buchem et al.\ such a sequence is called a nice path. We split this sequence up into two sets of pairs $H_1 = \{(v_i,r_i)\mid i \leq \ell \land $i$ \text{ is odd}\}$ and $H_2 = \{(v_i,r_i)\mid i \leq \ell \land $i$ \text{ is even}\}$. Then for each of these sets the balls corresponding to the contained pairs are pairwise disjoint and $sr(H_1) + sr(H_2) = \sum_{i= 1}^\ell r_i$. Thus:
    \begin{equation*}
        dsr(\Com) \geq \frac{1}{2} \sum_{i= 1}^\ell r_i.
    \end{equation*}
    At the same time we can use the triangle inequality to bound the distance from $a$ to $b$ as follows:
    \begin{align*}
        d(a,b) &\leq d(a,v_1) + \left(\sum_{i= 1}^{\ell - 1} d(v_i,v_{i+1})\right) + d(v_\ell,b)\\
        &\leq r_1 + \left(\sum_{i= 1}^{\ell - 1} r_i + r_{i+1}\right) + r_\ell\\
        &= 2 \sum_{i=1}^{\ell} r_i\\
        &\leq 4 dsr(\Com).
    \end{align*}
\end{proof}

Let now $\Com_1,\ldots,\Com_\ell$ be the connected components of the set $\mathcal{S}'$ of pairs produced by the MSR algorithm. As in the proof of Theorem~\ref{thm:msr_connected} we can upper bound the sum $\sum_{i=1}^\ell dsr(\Com_i)$ by $(1 + 2\epsilon)$ times the cost of the optimal MSR solution which is in turn upper bounded by the cost of the optimal MSD solution. Since for every component $\Com_i$ the diameter is then bounded by $4 \cdot dsr(\Com)$ we know that the sum of the diameters of all components is upper bounded by $(4 + 8 \epsilon)$ times the cost of the optimal MSD solution.

% \begin{theorem}
%     For any $\epsilon > 0$ one can calculate a $(4 + O(\epsilon))$-approximation for both the connected as well as the regular min-sum-diameter problem in polynomial time.
% \end{theorem}

\ThmMSD*

\bibliography{literature}

\end{document}